\definecolor{etonblue}{rgb}{0.59, 0.78, 0.64}
\newtheorem{theorem}{Theorem}[section]
\newtheorem{proposition}[theorem]{Proposition}
\newtheorem{lemma}[theorem]{Lemma}
\newtheorem{claim}[theorem]{Claim}
\newtheorem{corollary}[theorem]{Corollary}
\newtheorem{definition}[theorem]{Definition}
\newcommand{\R}{\ensuremath{\mathbb{R}}}
\newcommand{\Z}{\ensuremath{\mathbb{Z}}}
\newcommand{\Q}{\mathbb{Q}}
\newcommand{\lat}{\mathcal{L}}
\newcommand{\M}{\mathcal{M}}
\newcommand{\eps}{\varepsilon} 
\renewcommand{\epsilon}{\varepsilon}
\newcommand{\poly}{\mathrm{poly}}
\newcommand{\rank}{\mathrm{rank}}
\DeclareMathOperator{\spn}{span}
\renewcommand{\vec}[1]{\ensuremath{\boldsymbol{#1}}}
\newcommand{\basis}{\ensuremath{\mathbf{B}}}
\newcommand{\gs}[1]{\ensuremath{\widetilde{\vec{#1}}}}
\DeclarePairedDelimiter\inner{\langle}{\rangle}
\DeclarePairedDelimiter\floor{\lfloor}{\rfloor}
\DeclarePairedDelimiter\ceil{\lceil}{\rceil}
\newcommand{\A}{\mathcal{A}}
\renewcommand{\v}{\vec{v}}
\newcommand{\w}{\vec{w}}
\newcommand{\y}{\vec{y}}
\renewcommand{\b}{\vec{b}}
\renewcommand{\basis}{\mathbf{B}}
\renewcommand{\inner}[1]{\langle #1 \rangle}
\DeclareMathOperator{\Span}{span}
\renewcommand{\hat}{\widehat}
\title{\vspace{-4em}Recursive lattice reduction---\\ A framework for finding short lattice vectors}
\author{
Divesh Aggarwal\\National University of Singapore\\ \texttt{divesh@comp.nus.edu.sg} \and 
Thomas Espitau\\PQShield\\\texttt{t.espitau@gmail.com}\and
Spencer Peters\\Cornell University\\ \texttt{sp2473@cornell.edu}  \and 
Noah Stephens-Davidowitz\\Cornell University\\\texttt{noahsd@gmail.com}
}
\date{\today}
\begin{document}

\pagenumbering{roman}

\maketitle

\begin{abstract}
    We propose a new framework called recursive lattice reduction for finding short non-zero vectors in a lattice or for finding dense sublattices of a lattice. At a high level, the framework works by recursively searching for dense sublattices of dense sublattices (or their duals) with progressively lower rank. Eventually, the procedure encounters a recursive call on a lattice $\lat$ with relatively low rank $k$, at which point we can simply use a known algorithm to find a shortest non-zero vector in $\lat$.

    We view this new framework as complementary to basis reduction algorithms, which similarly work to reduce an $n$-dimensional lattice problem with some approximation factor $\gamma$ to a lower-dimensional exact lattice problem in some lower dimension $k$, with a tradeoff between $\gamma$, $n$, and $k$. Our framework provides an alternative and arguably simpler perspective. For example, 
    our algorithms can be described at a high level without explicitly referencing any specific basis of the lattice, the Gram-Schmidt orthogonalization, or even projection (though, of course, concrete implementations of algorithms in this framework will likely make use of such things, and we discuss two such implementations as proofs of concept). 

    We present a number of specific instantiations of our framework to illustrate its usefulness. Our main concrete result is an efficient reduction that matches the tradeoff between $\gamma$, $n$, and $k$ achieved by the best-known basis reduction algorithms (in terms of the Hermite factor, up to low-order terms) across all parameter regimes. In fact, this reduction also can be used to find dense \emph{sublattices} with any rank $\ell$ satisfying $\min\{\ell,n-\ell\} \leq n-k+1$, using only an oracle for SVP (or even just Hermite SVP) in $k$ dimensions, even with slightly better parameters than what was known using basis reduction.
    
    We also show an extremely simple reduction that achieves the same tradeoff for finding short vectors (up to low order terms) in quasipolynomial time, and a reduction from the problem of finding dense sublattices of a high-dimensional lattice to  the problem of finding dense sublattices of lower-dimensional lattices. Finally, we present a simple efficient automated search procedure for finding algorithms in this framework that (provably) achieve better approximations with fewer oracle calls.
\end{abstract}
\thispagestyle{empty}
\newpage
\setcounter{tocdepth}{2}
\tableofcontents
\newpage
\pagenumbering{arabic}

\section{Introduction}

A lattice $\lat \subset \R^d$ is the set of all integer linear combinations of $n$ linearly independent basis vectors $\basis = (\vec{b}_1,\ldots, \vec{b}_n) \in \R^{d \times n}$, i.e.,
\[
    \lat := \{ z_1 \vec{b}_1 + z_2 \vec{b}_2 + \cdots + z_n \vec{b}_n \ : \ z_i \in \Z\}
    \; .
\]
We call $n$ the \emph{rank} of the lattice and $d$ the \emph{dimension} (or sometimes the \emph{ambient dimension}). We often implicitly assume that $d =n$ (which is essentially without loss of generality, by identifying $\spn(\lat)$ with $\R^n$).

The most important geometric quantities associated with a lattice $\lat \subset \R^d$ are the length of a shortest non-zero lattice vector, $\lambda_1(\lat) := \min_{\vec{y} \in \lat_{ \neq \vec0}} \|\vec{y}\|$, and the determinant, $\det(\lat) := \det(\basis^T \basis)^{1/2}$. (Here and throughout this paper, $\|\vec{x}\| := (x_1^2 + \cdots + x_d^2)^{1/2}$ means the Euclidean norm.) The determinant is best viewed as a measure of the \emph{global density} of a lattice, with lattices with smaller determinant being more dense, while $\lambda_1(\lat)$ is similarly a measure of the \emph{local density}. These two quantities are related by \emph{Hermite's constant}, which is defined as
\[
    \delta_n := \sup_{\lat \subset \R^d: \; \rank(\lat) = n} \big(\lambda_1(\lat)/\det(\lat)^{\frac 1n}\big)^2
    \; .
\]
This is a very well studied quantity.
Indeed, Minkowski's celebrated first theorem tells us that $\delta_n \leq O(n)$, and the Minkowski-Hlawka theorem tells us that $\delta_n \geq \Omega(n)$. So, we know that $\delta_n = \Theta(n)$.

In this work, we are interested in the $\gamma$-Hermite Shortest Vector Problem ($\gamma$-HSVP) for $\gamma = \gamma(n) \geq \sqrt{\delta_n}$. The input in $\gamma$-HSVP is a basis for a lattice $\lat \subset \R^n$, and the goal is to find a non-zero lattice vector $\vec{y} \in \lat_{\neq\vec0}$ such that $\|\vec{y}\| \leq \gamma \cdot\det(\lat)^{1/n}$.\footnote{In \cref{sec:HSVP}, we discuss the relationship between HSVP and the Shortest Vector Problem, in which the goal is to find a non-zero lattice vector whose length is within a certain factor of the shortest such vector in the lattice.} The parameter $\gamma$ is called the \emph{approximation factor} or sometimes the Hermite factor. Notice that a solution is guaranteed to exist if and only if $\gamma \geq \sqrt{\delta_n}$.

This problem is central to lattice-based cryptography, which is in the process of widespread deployment~\cite{NIST2022}. In particular, the best known attacks on lattice-based cryptography essentially work via reduction to $\gamma$-HSVP for $\gamma = \poly(n)$.  Thus,  \emph{precise} estimates of the time complexity of $\gamma$-HSVP are necessary for assessing the security of these schemes. (See, e.g.,~\cite{albrechtEstimateAllLWE2018}.)

Algorithms for $\gamma$-HSVP have a \emph{very} rich history.\footnote{Most of the algorithms that we list here were originally presented as algorithms for \emph{the Shortest Vector Problem} (SVP). We are describing them as algorithms for HSVP, since this is what interests us (and, indeed, what is typically relevant to cryptographers). See \cref{sec:HSVP} for more discussion about the relationship between SVP and HSVP.} In the hardest possible case when $\gamma = \sqrt{\delta_n}$, the fastest known algorithm for HSVP runs in $2^{\Theta(n)}$-time~\cite{ajtaiSieveAlgorithmShortest2001}, with a long line of work improving on the constant in the exponent~\cite{pujolSolvingShortestLattice2009,micciancioFasterExponentialTime2010,micciancioDeterministicSingleExponential2010,aggarwalSolvingShortestVector2015} (and another long line of work on heuristic algorithms~\cite{nguyenSieveAlgorithmsShortest2008,laarhovenSievingShortestVectors2015,beckerNewDirectionsNearest2016}, and yet another long line of work on algorithms that do not quite achieve $\gamma = \sqrt{\delta_n}$ but do come very close, e.g., achieving $\gamma \leq \sqrt{\delta_n} \cdot \poly(\log n)$~\cite{liuShortestLatticeVectors2011,weiFindingShortestLattice2015,aggarwalFasterSievingAlgorithm2019,ALSTimeAlgorithmSqrtn2021}). The case of larger $\gamma$ (say $\gamma > n^{1/2+\eps}$) is more relevant to the current work, and we discuss it in depth below. (See also \cite{BenComplexityShortestVector2023} for a recent survey on the complexity of SVP.)

\subsection{Basis reduction algorithms}

The first non-trivial\footnote{By specifying ``non-trivial'' algorithms for $\gamma$-HSVP for $\gamma \gg \sqrt{\delta_n}$ here and elsewhere, we mean to exclude algorithms that work by simply solving $\gamma'$-HSVP with $\gamma' \approx \sqrt{\delta_n}$ on the input lattice. E.g., it is ``trivial'' to solve, say, $2^n$-HSVP by simply running, say, the $\sqrt{\delta_n}$-HSVP algorithm from~\cite{ajtaiSieveAlgorithmShortest2001} on the input lattice. Of course, we do not mean to suggest that the \cite{ajtaiSieveAlgorithmShortest2001} algorithm is trivial or that it is trivial to solve $\sqrt{\delta_n}$-HSVP!} algorithm for $\gamma$-HSVP for larger $\gamma \gg \sqrt{\delta_n}$ was the celebrated LLL algorithm by Lenstra, Lenstra, and Lov{\'a}sz~\cite{LLL82}. Their algorithm runs in polynomial time and achieves an approximation factor of $\gamma = 2^{O(n)}$. Indeed, forty years later, essentially all known non-trivial algorithms for $\gamma$-HSVP for large $\gamma \gg \sqrt{\delta_n}$ still use the technique introduced by LLL: basis reduction. So, in some (imprecise) sense, basis reduction is the only known framework for solving approximate lattice problems, and the community has spent the past forty years generating a beautiful body of literature devoted to perfecting this technique~\cite{schnorrHierarchyPolynomialTime1987,schnorrLatticeBasisReduction1994,gamaPredictingLatticeReduction2008,gamaFindingShortLattice2008,hanrotAnalyzingBlockwiseLattice2011,micciancioPracticalPredictableLattice2016}. (See \cite{walterLatticeBlogReduction} for a recent popular survey.)

At a high level, basis reduction algorithms work to progressively find a shorter basis for the input lattice $\lat \subset \R^n$ by solving exact SVP instances on carefully chosen lattices with rank $k \geq 2$, where $k < n$ is known as the block size. Since the fastest known algorithms for $\sqrt{\delta_k}$-HSVP on lattices with rank $k$ run in time $2^{\Theta(k)}$, the specific relationship between the block size $k$ and approximation factor $\gamma$ is quite important. Basis reduction achieves a smooth tradeoff, yielding an efficient reduction from $\gamma_{\mathsf{BR}}$-HSVP on lattices with rank $n$ to $\sqrt{\delta_k}$-HSVP on lattices with rank $k$ that achieves 
\[\gamma_{\mathsf{BR}} \approx \delta_k^{\frac{n-1}{2(k-1)}} \approx k^{\frac n{2k}}
\; .
\] 
See, e.g.,~\cite{gamaFindingShortLattice2008,micciancioPracticalPredictableLattice2016,aggarwalSlideReductionRevisited2020,walterLatticeBlogReduction,ALSTimeAlgorithmSqrtn2021}. (For cryptography, we are mostly concerned with the case when $k$ is rather large, e.g., $k = Cn$ for some not-too-small constant $C > 0$.)

In more detail (which is not actually necessary for understanding the rest of this paper), basis reduction algorithms work by attempting to find a ``good'' basis of the given lattice (and in particular, a basis whose first vector $\vec{b}_1$ is quite short) by manipulating the Gram-Schmidt orthogonalization of the basis, $\gs{\vec{b}}_1 := \vec{b}_1, \gs{\vec{b}}_2 := \Pi_{\{\vec{b}_1\}^\perp}(\vec{b}_2), \ldots, \gs{\vec{b}}_n := \Pi_{\{\vec{b}_1,\ldots, \vec{b}_{n-1}\}^\perp}(\vec{b}_n)$, where $\Pi_{\{\vec{b}_1,\ldots, \vec{b}_i\}^\perp}$ represents projection onto the subspace orthogonal to $\vec{b}_1,\ldots, \vec{b}_i$. The idea behind all basis reduction algorithms is to make earlier Gram-Schmidt vectors $\gs{\vec{b}}_i$ shorter at the expense of making later Gram-Schmidt vectors $\gs{\vec{b}}_j$ for $j > i$ longer.\footnote{The product of the lengths of the Gram-Schmidt vectors is the determinant of the lattice, and therefore cannot be changed without changing the lattice. So, if one wishes to make earlier Gram-Schmidt vectors shorter, one \emph{must} lengthen later Gram-Schmidt vectors.} In particular, one hopes to make the first Gram-Schmidt vector $\gs{\vec{b}}_1 = \vec{b}_1$ short, since it is actually a lattice vector (while all other Gram-Schmidt vectors are \emph{projections} of lattice vectors, which are typically not themselves lattice vectors). To make $\gs{\vec{b}}_i$ shorter relative to later Gram-Schmidt vectors, basis reduction algorithms rely on the fact that $\gs{\vec{b}}_i$ is contained in a lattice with rank $k$ whose determinant is $\|\gs{\vec{b}}_i\| \cdots \|\gs{\vec{b}}_{i+k-1}\|$. Using an oracle for $\sqrt{\delta_k}$-HSVP on rank-$k$ lattices allows us to find a non-zero vector in this lattice that is short relative to the geometric mean of $\|\gs{\vec{b}}_{i+1}\|, \ldots, \|\gs{\vec{b}}_{i+k-1}\|$. We can then substitute that vector for $\gs{\vec{b}}_i$. By doing this many different times with $i=1,\ldots, n-k+1$, we can hope to eventually guarantee that $\|\gs{\vec{b}}_1\|, \|\gs{\vec{b}}_2\|,\ldots, \|\gs{\vec{b}}_k\|$ are not that large relative to the determinant of the lattice. And, with one more call to an HSVP oracle on the lattice generated by $\vec{b}_1,\ldots, \vec{b}_k$ (whose determinant is $\|\gs{\vec{b}}_1\|\|\gs{\vec{b}}_2\| \cdots \|\gs{\vec{b}}_k\|$), we can find a relatively short lattice vector.

So, the high-level goal of basis reduction algorithms is to find a rank-$k$ sublattice $\lat' \subset \lat$ with low determinant (the lattice generated by $\vec{b}_1,\ldots, \vec{b}_k$ above) and then to use a HSVP oracle on rank-$k$ lattices to find a short non-zero vector in $\lat'$ (though basis reduction algorithms are not typically described in this way). However, the details matter quite a bit in basis reduction, and they get complicated quickly. As a result, analyzing these algorithms can be rather difficult, because the effect of each oracle call on the lengths of the different Gram-Schmidt vectors is difficult to control. 

In fact, the best basis reduction algorithms are \emph{heuristic}, meaning that we do not know how to prove their correctness~\cite{CNBKZBetterLattice2011}. \emph{And}, our best way of analyzing the behavior of these algorithms is via sophisticated computer simulations~\cite{gamaPredictingLatticeReduction2008,CNBKZBetterLattice2011,bai2018measuring}.\footnote{Gama and Nguyen's elegant slide reduction algorithm~\cite{gamaFindingShortLattice2008} is a bit of an exception---i.e., it is a basis reduction algorithm that can be analyzed quite simply. However, its performance in practice still lags behind heuristic basis reduction algorithms whose behavior is not nearly so well understood. One can modify  slide reduction to improve its performance~\cite{WalConvergenceSlidetypeReductions2021}, but at the expense of losing the simple proof of correctness.}

\section{An overview of recursive lattice reduction}

In this work, we present a new framework for reductions from $\gamma$-HSVP on rank-$n$ lattices to $\sqrt{\delta_k}$-HSVP on rank-$k$ lattices, which we call \emph{recursive lattice reduction}.\footnote{The word ``reduction'' has two meanings here, as it often does in the literature on basis reduction. First, it simply means a ``reduction'' in the traditional computer science sense: an algorithm for one problem that requires access to an oracle that solves another problem---in this case, HSVP with lower rank. Second, it means ``reduction'' in a sense that is specific to the context of lattices; ``reduction'' in this sense describes a procedure that slowly \emph{reduces} the lengths of lattice vectors (or the density of sublattices). This double meaning can be rather confusing at times---e.g., ``basis reduction algorithms'' should perhaps be more accurately be called ``basis reduction reductions,'' but this terminology is of course quite cumbersome. Here, we are using the term ``reduction'' rather ambiguously, in analogy with basis reduction. It is an open question whether the authors will regret our use of this ambiguous terminology.} 

This new framework maintains the high-level idea that, in order to find a short vector in a high-rank lattice, one should first find a dense lower-rank sublattice $\lat' \subset \lat$ and then find a short vector in $\lat'$.  And, our framework therefore is rather similar to basis reduction.
However, in our framework this high-level idea is the explicit guiding principle behind both the reduction and its analysis (as we will see shortly), while in basis reduction, this high-level idea is typically confined to the analysis.

And, our framework is in some sense more flexible than basis reduction. E.g., we do not require that $\lat'$ has rank precisely $k$. And, we do not even require that the reduction explicitly works with a basis (or an essentially equivalent representation, such as a Gram matrix or a flag)---let alone the Gram-Schmidt orthogonalization. (This is why we are careful to call this recursive \emph{lattice} reduction, and \emph{not} to call it \emph{basis} reduction.) This independence from the specific representation of the lattice makes our reductions very easy to describe at a high level. (Of course, any actual concrete implementation of our reductions will have to represent the lattice in some way---probably with a basis, and perhaps even using the Gram-Schmidt orthogonalization to perform basic operations on the lattice. But, our framework is agnostic to such choices, and we instead view the specific choice of representation as an implementation detail.)

In more detail, the basic idea behind our framework is simply the following. Our reductions $\A(\lat, \mathsf{aux})$ take as input a lattice $\lat \subset \R^n$ and some simple auxiliary information $\mathsf{aux}$. If the lattice $\lat$ has sufficiently small rank $n \leq k$, then we of course use an oracle call to find a short vector. Otherwise, the reduction makes (at least) two recursive calls. The first recursive call is used to find a sublattice $\lat' \subset \lat$ with rank less than $n$ and relatively small determinant. The next recursive call is simply $\A(\lat', \mathsf{aux}')$.

In our concrete instantiations of the framework, we use the auxiliary input $\mathsf{aux}$ to do two things: (1) to keep track of the rank of the sublattice that we are currently looking for; and (2) to control the running time of the reduction (and its subsequent recursive calls---specifically).

As we will see, this is a sufficiently modular and simple paradigm to allow for computer-aided search of the space of reductions in this framework. For example, one can write a simple computer program that takes as input a rank $n$, block size $k$, and a bound $T$ on the number of allowed oracle calls. The program will output a description of a reduction from $\gamma$-HSVP on rank-$n$ lattices to $\sqrt{\delta_k}$-HSVP on rank-$k$ lattices, together with an approximation factor $\gamma$ that the reduction is \emph{guaranteed} to achieve. Furthermore, the value of $\gamma$ will be optimal, in some (admittedly rather weak) sense. (Specifically, it will be optimal among all reductions of a certain form.)

Below, we build up the details of our framework slowly by introducing progressively more sophisticated examples. We first develop a simple and natural (but still interesting and novel) reduction in the framework, in \cref{sec:SVPtoSVP}. Then in \cref{sec:intro-DSP-to-DSP,sec:intro-to-hsvp}, we provide a high-level overview of the more sophisticated reductions whose details we leave for the sequel. \cref{sec:intro-to-hsvp} (and the corresponding formal details in \cref{sec:DSP-to-SVP}) in particular shows a reduction from the problem of finding dense sublattices of rank-$n$ lattices to the problem of finding a short vector in rank-$k$ lattices. The specific time-approximation tradeoff that we achieve at least matches the state-of-the-art in basis reduction for all parameter regimes (up to low-order terms), and outperforms basis reduction in some regimes, though this is not our main focus.

We note that throughout this paper, our purpose is to illustrate what is possible to do with this framework, rather than to optimize parameters. And we are particularly uninterested in optimizing the precise running time of our reductions. We have therefore made little attempt to, e.g., choose optimal constants. And, in order to make our presentation as simple as possible, we have resisted the urge to include optimizations that one absolutely would wish to include if one were to implement these algorithms. (As one rather extreme example, the astute reader might notice that all of our reductions as described run the LLL algorithm many more times than is truly necessary---sometimes running the LLL algorithm up to $n$ times on the same lattice! Again, we have resisted the urge to optimize away such inefficiencies.)

\subsection{A first example}
\label{sec:SVPtoSVP}

To illustrate the new framework, consider the following simple (and novel) idea for a reduction from $\gamma$-HSVP on a rank-$n$ lattice to $\sqrt{\delta_k}$-HSVP on rank-$k$ lattices. Indeed, the reduction is sufficiently simple that we will provide a complete description and analysis here in this overview.

The idea is for the reduction to take as input a lattice with rank $n$ and to find a sublattice $\lat' \subset \lat$ with rank $n-1$ and relatively small determinant. The reduction then calls itself recursively on $\lat'$, therefore finding a short non-zero vector in $\lat'$, which is of course also a short non-zero vector in $\lat$.

To make the above precise, we first explain how to find a dense sublattice $\lat' \subset \lat$ with rank $n-1$ of $\lat$. We do this using duality and recursion. In particular, it is a basic fact that the \emph{dual} $\lat^*$ of a lattice $\lat$ is itself a lattice with $\det(\lat^*) = 1/\det(\lat)$ with the property that every (primitive) sublattice $\lat'$ of $\lat$ with rank $n-1$ is simply the intersection $\lat' = \lat \cap \vec{w}^\perp$ of $\lat$ with the subspace $\vec{w}^\perp$ orthogonal to some non-zero dual vector $\vec{w} \in \lat^*$. (See \cref{sec:prelims-dual} for a formal definition of the dual lattice $\lat^*$ and of primitivity, and for a discussion of many related properties.) Furthermore, $\det(\lat') = \|\vec{w}\| \det(\lat)$ (assuming that $\vec{w}$ is primitive). Therefore, finding a dense sublattice $\lat' \subset \lat$ with rank $n-1$ is \emph{equivalent} to finding a short non-zero dual vector $\vec{w}$.  (This basic fact is used quite a bit already in basis reduction algorithms, e.g., in~\cite{gamaFindingShortLattice2008,micciancioPracticalPredictableLattice2016}.)

So, to find such a dense sublattice $\lat'$, it suffices to find a short vector in the dual. We would like to do this by simply calling our reduction recursively on $\lat^*$. However, this would obviously lead to an infinite loop! Indeed, to find a short vector in $\lat$, a naive implementation of this procedure would attempt to find a short vector in $\lat^*$ by attempting to find a short vector in $(\lat^*)^* = \lat$, etc! The solution is to add a \emph{depth parameter} $\tau$ as auxiliary input to the reduction. Intuitively, when the depth parameter is larger, the reduction takes more time but achieves a better approximation factor $\gamma$ (i.e., finds shorter vectors). The hope is that the recursive call on $\lat^*$ can afford a worse approximation factor, i.e., the short vector in $\lat^*$ does not need to be quite as short as our final output vector. We can therefore make the recursive call on the dual lattice $\lat^*$ with a lower depth parameter, and avoid the infinite loop.

To finish specifying the reduction, we need to define two base cases. In one base case, the input lattice has rank $n = k$, in which case the reduction simply uses its $\sqrt{\delta_k}$-HSVP oracle on rank-$k$ lattices to output a short non-zero lattice vector in $\lat$. In the other base case, the depth parameter $\tau$ is zero, in which case the algorithm uses an efficient procedure such as the LLL algorithm to output a not-too-long vector in the lattice.

Here is pseudocode for the reduction $\A(\lat, \tau)$ described above.

\begin{enumerate}
    \item {\bf Base cases:}
    \begin{enumerate}
        \item If $\tau = 0$, run LLL on $\lat$ and output the resulting vector.
    \item If $\rank(\lat) = k$, use an oracle for $\sqrt{\delta_k}$-HSVP to output a short non-zero lattice vector in $\lat$.
    \end{enumerate}
    \item Compute $\vec{w} \leftarrow \A(\lat^*, \tau-1)$ and output $\vec{y} \leftarrow \A(\lat \cap \vec{w}^\perp,\tau)$.
\end{enumerate}

Notice how simple this reduction is! It can be described in three short lines of pseudocode, and it makes no mention of a basis or the Gram-Schmidt orthogonalization or any other details about how one should represent the lattice $\lat$. Of course, an actual implementation of this algorithm would need some way to represent the lattice $\lat$ and some way to compute a representation of $\lat' = \lat \cap \vec{w}^\perp$,  which might be best done with bases and the Gram-Schmidt orthogonalization. But, while bases and Gram-Schmidt vectors are fundamental to understanding any basis-reduction algorithm, in this new framework, we view such things as low-level implementation details. This level of abstraction allows us to specify such reductions remarkably succinctly, as above.

Analyzing this reduction is also relatively simple, and we therefore give a complete analysis in-line in this overview section. Specifically, let $\gamma(n,\tau)$ be an upper bound on the approximation factor achieved by the above reduction when the input lattice has rank $n$ and the depth parameter is $\tau$. By definition, we have 
\begin{align*}
        \|\vec{y}\|/\det(\lat)^{\frac 1n} 
        &\leq \gamma(n-1,\tau) \cdot \det(\lat')^{\frac 1{n-1}} /\det(\lat)^{\frac 1n} 
        \\
        &=  \gamma(n - 1, \tau) \cdot  \left(\| \w \| \cdot \det(\lat)\right)^{\frac 1{n-1}}/\det(\lat)^{\frac 1 n} \\
        &= \gamma(n - 1, \tau) \left(\| \w \| / \det(\lat^*)^{\frac 1 n}\right)^{\frac 1{n-1}} \\
        &\leq \gamma(n-1,\tau) \gamma(n,\tau-1)^{\frac 1{n-1}}
    \; .
\end{align*}
So, we can take $\gamma(n,\tau)$ to satisfy the recurrence
\[
    \gamma(n,\tau) \leq \gamma(n-1,\tau) \gamma(n,\tau-1)^{\frac1{n-1}}
    \; ,
\]
with base cases $\gamma(k,\tau) \leq \sqrt{\delta_k}$ and, say, $\gamma(n,0) \leq 2^n$ (by LLL).
(The fact that the $\gamma(n,\tau-1)$ term comes with such a small exponent in the recurrence relation intuitively explains why we can afford to use a lower depth parameter in the recursive call on $\lat^*$.) A simple induction argument then shows that, e.g., $\gamma(n,\tau) \leq \delta_k^{(n-1)/(2(k-1))} \cdot 2^{n^3/2^\tau}$.

This analysis is \emph{much} simpler than the similar analysis for basis reduction algorithms, and for $\tau \gtrsim 3 \log n$, it matches the approximation factor $\gamma_{\mathsf{BR}}$ achieved by basis reduction up to the (insignificant) low-order term $2^{n^3/2^\tau}$ (which is similar to the low-order terms encountered in the basis reduction algorithms, and which of course decays rapidly as we increase $\tau$). 

One can similarly analyze the number of oracle calls $T(n,\tau)$ made by the reduction via the recurrence $T(n,\tau) = T(n,\tau-1) + T(n-1,\tau)$ with base cases $T(n,0) = 0$ and $T(k,\tau) = 1$.  It is easy to check that this recurrence is solved by the binomial coefficients $T(n,\tau) = \binom{n-k+\tau-1}{\tau-1} \approx n^\tau$. In particular, for $\tau \approx 3 \log n$, we get a number of oracle calls bounded by $n^{\Theta(\log n)}$. (To bound the final running time, we must also worry about the time needed to compute things such as $\lat^*$ and $\lat \cap \vec{w}^\perp$, which depends on the representation of $\lat$ and which we will discuss in \cref{sec:bitlengths_intro}. We ignore this distinction for now and largely conflate the number of oracle calls with the running time.)

This is already quite interesting, since in the context of cryptography (in which the oracle is often instantiated with a $2^{\Theta(n)}$-time algorithm) a factor of $n^{\Theta(\log n)}$ in the running time is not particularly significant. But, it's not ideal, and we would certainly prefer to find a variant of this reduction that achieves essentially the same approximation factor with only polynomial running time.

Fortunately, this is possible, as we describe below. 

\subsection{An efficient reduction to a harder problem (and from a harder problem)}
\label{sec:intro-DSP-to-DSP}

Intuitively, the reason that the above reduction requires superpolynomial time is because it produces a rather unbalanced binary tree of recursive calls with height $n-k \approx n$ when the input lattice $\lat$ has rank $n$. We would naturally prefer a more balanced tree with height $O(\log n)$.  Of course, the reason that the depth of the tree is so large is because the sublattice $\lat \cap \vec{w}^\perp$ has rank just one less than $\lat$. I.e., we reduce the rank of the lattice by one in each step, and therefore it takes $n-k$ steps to get down to rank $k$. So, it is natural to try to reduce the rank more quickly, perhaps reducing the rank by a constant factor at every step, or more accurately, reducing by a constant factor the difference $n-k$ between the current rank $n$ and the rank $k$ in which our oracle works.

So, suppose instead of finding a single short non-zero vector $\vec{w} \in \lat^*$, we found a whole \emph{dense sublattice} $\lat' \subset \lat^*$ with rank $\ell^* \geq 1$, i.e., a sublattice $\lat'$ of the dual with relatively small determinant. Just like before, the intersection $\lat \cap (\lat')^\perp$ %
of $\lat$ with the subspace orthogonal to $\lat'$ is a sublattice of $\lat$, now with rank $n-\ell^*$. Furthermore, we have $\det(\lat \cap (\lat')^\perp) = \det(\lat') \det(\lat)$ (provided, again, that $\lat'$ is primitive, which we may assume without loss of generality; see \cref{sec:prelims-dual}). So, finding a dense sublattice with rank $n-\ell^*$ in $\lat$ is exactly equivalent to finding a dense sublattice with rank $\ell^*$ in $\lat^*$.

We therefore generalize the reduction from the previous section by switching the oracle for $\gamma$-HSVP with an oracle for the $\gamma$-Densest Sublattice Problem ($\gamma$-DSP). In this problem, the input is a basis for a lattice $\lat \subset \R^n$ and \emph{also} a rank $\ell$ with $1 \leq \ell \leq n-1$. The goal is to find a sublattice $\lat'$ of $\lat$ with rank $\ell$ with $\det(\lat') \leq \gamma \det(\lat)^{\ell /n}$. Notice that when $\ell = 1$, this is exactly $\gamma$-HSVP. However, the problem is still interesting for larger $\ell$. Indeed, $\gamma$-DSP has been studied quite a bit, e.g., in ~\cite{gamaRankinConstantBlockwise2006, DMAlgorithmsDensestSublattice2013,liApproximatingDensestSublattice2014,dadushApproximatingCoveringRadius2019,LNCompleteAnalysisBKZ2020,WalConvergenceSlidetypeReductions2021,LWImprovingConvergencePracticality2023}. 
For example, it is known that the LLL algorithm solves $\gamma$-DSP with an approximation factor of $\gamma \leq 2^{\ell (n-\ell)/4}$~\cite{PTSublatticeDeterminantsReduced2008}. (See \cref{sec:context,sec:prelims-DSP} for more discussion of DSP.)

This leads naturally to the following idea. To solve a DSP instance $(\lat, \ell)$ where $\lat$ has rank $n$, we first recursively solve the DSP instance $(\lat^*, \ell^*)$ for some $\ell^*$ (of course with a lower depth parameter), receiving as output a dense dual sublattice $\lat' \subset \lat$ with rank $\ell^*$. We then recursively solve the DSP instance $(\lat \cap (\lat')^\perp,\ell)$, where the lattice $\lat \cap (\lat')^\perp$ has rank $n-\ell^*$. In particular, if we choose $\ell^*$ to be, say, $\ceil{(n-k)/20}$, then the difference $n-k$ will shrink by a factor of at least $19/20$ at every step.

To make sense of this, we will need base cases like before for when the depth parameter $\tau$ drops to zero and when the rank $n$ drops to $k$. In particular, when the depth parameter $\tau$ drops to zero, we simply run the LLL algorithm like we did before (though we now use it to find a whole dense sublattice, rather than a single short vector). 

However, we also need to handle a new corner case: if $\ell > n-\ell^*$, then the above does not make sense, since in our recursive call we will be asking for a rank-$\ell$ sublattice of a lattice with rank $n-\ell^*$, which is less than $\ell$! (Notice that, even if we start running this reduction with $\ell = 1$, some of the calls in our recursive tree will have much larger $\ell$.) To fix this, we need some way to convert DSP instances with $\ell$ nearly as large as $n$ to DSP instances with much smaller $\ell$. 

Of course, duality allows us to do this! That is, if $\ell$ is getting too large relative to $n$, then instead of directly finding a dense rank-$\ell$ sublattice of the primal lattice $\lat$, we (recursively) find a dense sublattice $\lat'$ of the dual $\lat^*$ with rank $n-\ell$ and output $\lat \cap (\lat')^\perp$. We call this a \emph{duality step}. In this way, we can always keep $\ell$ small enough. Specifically, in the below we simply apply this duality step whenever $\ell > n/2$, which is a natural choice and sufficient in this context. (In the more sophisticated reduction described in \cref{sec:DSP-to-SVP}, we are more careful about when we apply this duality step.)

When $n$ drops to $k$, we will call an oracle that solves $\gamma$-DSP on rank-$k$ lattices with $\gamma$ as small as possible. This use of a DSP oracle is a major drawback of this reduction (which we will fix below) for two reasons. First, DSP is a seemingly harder problem than HSVP, and the known algorithms that solve DSP are notably slower than the known algorithms for HSVP. Indeed, the fastest known algorithm for finding an exact densest sublattice with rank $\ell$ in $n$ dimensions runs in time $\ell^{\Theta(n \ell)}$~\cite{DMAlgorithmsDensestSublattice2013}, which in the worst case gives $n^{\Theta(n^2)}$. So, our DSP oracle calls might need to run in time $k^{\Theta(k^2)}$, compared to $2^{\Theta(k)}$ for HSVP oracle calls.  (See \cref{sec:context} for more discussion.) Second, it is not as clear how small we can take $\gamma$ to be for our $\gamma$-DSP oracle. The analogue of Hermite's constant $\delta_n$ in this setting is \emph{Rankin's constant},
\[
    \delta_{n,\ell} := \sup_{\lat \in \mathscr{L}(n)} \min_{\substack{\lat' \subseteq \lat \\ \rank(\lat') = \ell}} \frac {\det(\lat')^2}{\det(\lat)^{\frac {2\ell}{n}}}
    \; .
\]
Rankin's constant is not nearly as well understood as Hermite's constant (except for $\delta_{n,n} = 1$ and $\delta_{n,1} = \delta_{n,n-1} = \delta_n$). Instead, $\delta_{n,\ell}$ is only known up to a constant factor in the exponent, $\delta_{n,\ell} = n^{\Theta(\ell(n-\ell)/n)}$ (though the best known upper and lower bounds on this constant factor in the exponent are not so far from each other~\cite{hanrotImprovedAnalysisKannan2007}). However, a reasonable guess is that $\delta_{n,\ell} \approx n^{\ell(n-\ell)/(2n)}$. (See, e.g.,~\cite{shapiraVolumeEstimateSet2014} and the references therein.)
For now, let us therefore simply suppose that we have access to an oracle that solves $k^{\ell(k-\ell)/(2k)}$-DSP on rank-$k$ lattices.\footnote{Because of the uncertainty in Rankin's constant, we do not know whether this is even possible---i.e., there might be rank-$k$ lattices that do not even have a dense enough sublattice for this to work. So, in \cref{sec:DSP-to-DSP} we are more careful about this, but in this less formal overview we proudly forge ahead under this simplifying assumption. We will anyway later show how to replace the DSP oracle with an HSVP oracle by carefully modifying the reduction. Once we have removed DSP oracle calls, this issue with Rankin's constant will go away.} 

With all of this out of the way, we can finally present our reduction $\A(\lat, \ell, \tau)$, as below. Here, we leave $\ell^*$ unspecified, so that we actually give a family of algorithms depending on how $\ell^*$ is chosen (possibly depending on $\ell$, $n$, $k$, and $\tau$).

\begin{enumerate}
    \item {\bf Base cases:}
    \begin{enumerate}
    \item If $\ell > n/2$, output $\lat \cap \A(\lat^*, n-\ell,\tau)^\perp$, where $n := \rank(\lat)$.
    \item If $n = k$, use an oracle for $k^{\ell(k-\ell)/(2k)}$-DSP to output a dense sublattice of $\lat$ with rank $\ell$.
    \item If $\tau = 0$, run LLL on $\lat$ and output the lattice generated by the first $\ell$ vectors of the resulting basis.
    \end{enumerate}
    \item Compute $\lat' \leftarrow \A(\lat^*, \ell^*, \tau-1)$ and output $\lat'' \leftarrow \A(\lat \cap (\lat')^\perp,\ell,\tau)$.
\end{enumerate}

Notice that the reduction is still quite simple.
And, the analysis of the approximation factor $\gamma(n,\ell,\tau)$ achieved by this algorithm is quite similar to the above. In particular, a similar argument to the above shows that $\gamma(n,\ell,\tau)$ satisfies the recurrence
\begin{equation}
    \label{eq:intro_DSP_recurrence}
    \gamma(n,\ell,\tau) \leq \gamma(n-\ell^*,\ell,\tau) \gamma(n,\ell^*,\tau-1)^{\frac{\ell}{(n-\ell^*)}}
\; ,
\end{equation}
and after plugging in the base cases a simple argument shows that, e.g.,
\[
\gamma(n,\ell,\tau) \leq k^{\frac{\ell(n-\ell)}{2k}} \cdot 2^{\frac{n^2 \ell (n-\ell) }{C^\tau}}
\]
provided that, after applying duality, we maintain the invariant that $\ell + \ell^* \leq (2-C) n$ for some constant $1 < C < 2$. In particular, if we take $\tau \geq \Omega(\log n)$ and $\ell = 1$, we get essentially the same approximation factor as before, and therefore essentially the same approximation factor achieved by basis reduction. (See \cref{sec:DSP-to-DSP} for a full analysis of the algorithm with a specific choice of $\ell^*$.)

The benefit of this approach is, of course, that the running time is much better. In particular, notice that if we take $\ell^* \geq \beta \cdot (n-k)$ for some constant $0 < \beta < 1/2$, then the recursive calls made by the reduction form a binary tree with height bounded by 
\[
    h := \ceil{\log_{1/(1-\beta)}(n)} + \tau = O(\log n) + \tau
    \; ,
\]
so that the total size of the tree is $2^h \cdot \poly(n) \leq 2^\tau \cdot \poly(n)$. Taking $\tau = \Theta(\log n)$ gives a reduction that achieves essentially the same approximation factor as before but now with polynomially many oracle calls.

\subsection{Getting an efficient reduction (from either HSVP or DSP) to HSVP!}
\label{sec:intro-to-hsvp}

The reduction described in the previous section does not achieve what we are truly after, since it requires an oracle for DSP, rather than HSVP (and since, in order to get the claimed approximation factor, it relies on an unproven conjecture about Rankin's constant). However, our framework offers \emph{a lot} of flexibility that we have not yet exploited. In particular, we can be more careful about our choice of $\ell^*$, \emph{and} more importantly, we can try to be more clever about when we apply the duality step. Perhaps a particularly clever set of choices here will allow us to achieve a good approximation factor with a good running time while simultaneously managing to always have $\ell = 1$ whenever we make a recursive call with rank $k$.

To see why this might be possible, suppose that at some step in the reduction there is a recursive call with parameters $(n,\ell) = (2k-1,k-1)$. (Here, we are ignoring the depth parameter $\tau$ for simplicity.) If we choose $\ell^* = k-1$ here, then our reduction will make two recursive calls, with parameters respectively equal to $(n-\ell^*,\ell) = (k, k-1)$ and $(n,\ell^*) = (2k-1,k-1)$. Since the latter recursive call is the same as where we started (though, of course, the depth parameter will be lower), we can safely ignore it---if we can make the reduction work from where we started, then we should be able to make it work here too. For the call with parameters $(k,k-1)$, notice that after a duality step this becomes the base case with parameters $(k,1)$, which is what we want! Once we see that we can do this with $(n,\ell) = (2k+1,k)$, we can think about clever choices of $\ell^*$ that allow us to eventually get to this point from other values of $(n,\ell)$, etc. As it happens, this is possible if $\min\{\ell, n-\ell\} \leq n-k+1$, as we show in \cref{sec:DSP-to-SVP}.

Unfortunately, things are a bit more subtle than we are suggesting here because one must still worry about keeping the approximation factor down. In particular, the recurrence satisfied by our approximation factor given in \cref{eq:intro_DSP_recurrence} is not favorable when we take $\ell \approx \ell^* \approx n/2$. In general, we need to keep the exponent $\ell / (n - \ell^*)$ from \cref{eq:intro_DSP_recurrence} low. We do this via careful choices of parameters. In particular, in \cref{sec:DSP-to-SVP} we show a concrete polynomial-time reduction from $\gamma$-HSVP on rank-$n$ lattices to $\sqrt{\delta_k}$-HSVP on rank-$k$ lattices that achieves an approximation factor of $\gamma = (1+o(1)) \cdot \delta_k^{(n-1)/(2(k-1))}$. 

In fact, we achieve an approximation factor of $\delta_{\mathsf{DSP}} := \delta_k^{\ell ( n-\ell)/(2(k-1))} \cdot 2^{1/\poly(n)}$ for DSP whenever $1 \leq \min \{\ell,n-\ell\} \leq n -k +1$ (and not just the HSVP case, which corresponds to $\ell = 1$). (In particular, when $n \geq 2k$, this works for all $\ell$.) And, notice that we do this with only an oracle for \emph{HSVP} on rank-$k$ lattices (and \emph{not} a DSP oracle)! This actually beats the current state of the art in basis reduction in some regimes. Specifically, the best known approximation factor for DSP using basis reduction with the same parameters is essentially $\delta_{\mathsf{DSP}}^{1+k(k-2)/(n(n-\ell))}$~\cite{LNCompleteAnalysisBKZ2020}, which matches our result asymptotically for $k \ll n$ but is looser when $k = \Omega(n)$ (which is the regime most relevant for cryptography).

\subsection{On using automated search to find such reductions} 

In fact, for any fixed rank $n$, block size $k$, and number of oracle calls $C$, we can simply compute the ``best possible'' reduction of this form in the following sense. We can find a reduction that in $C$ oracle calls provably achieves an approximation factor $\gamma(n,\ell,C)$ satisfying
\begin{equation}
    \label{eq:crazy_recurrence_intro}
    \gamma(n,\ell,C) \leq \min\Big\{ \gamma(n,n-\ell,C),\ \min_{1 \leq \ell^* \leq n-k}\ \min_{0 \leq C^* < C} \gamma(n-\ell^*,\ell,C-C^*) \gamma(n,\ell^*,C^*)^{\frac{\ell}{n-\ell^*}} \Big\}
    \; ,
\end{equation}
with base cases given by $\gamma(k,1,C) = \sqrt{\delta_k}$ for $C \geq 1$ and $\gamma(n,\ell,C) \leq 2^{\ell (n-\ell)}$ if \emph{either} $C = 0$ or $n= k $ and $\ell \notin  \{1,n-1\}$ (i.e., the algorithm applies LLL if it runs out of oracle calls \emph{or} it reaches rank $k$ but requires a dense sublattice with rank $\ell > 1$ rather than simply a short vector). \cref{eq:crazy_recurrence_intro} formally captures the idea that the reduction may either apply a duality step (which explains the $\gamma(n,n-\ell,C)$ term) \emph{or} choose the best possible rank $\ell^*$ of a dense dual sublattice to find and the best possible number of oracle calls $C^*$ to allocate to finding such a dense dual sublattice.

Indeed, using a simple dynamic programming algorithm, one can compute $\gamma(n,\ell,C)$ in time  $\widetilde{O}(n^3 C^2)$. For reasonable parameters, this running time is far far less than the time required to instantiate the $\sqrt{\delta_k}$-HSVP oracle, so that precomputing these numbers $\gamma(n,\ell,C)$ can be done essentially for free. By restricting our algorithm to a sparse subset of possible choices for $C^*$, we can alternatively compute very good upper bounds on $\gamma(n,\ell,C)$ (together with a reduction witnessing this upper bound) in time $n^3 \cdot \poly(\log n, \log C)$, which is quite practical even for large numbers.
See \cref{sec:computers}.

From this perspective, the specific reduction that we present and analyze in \cref{sec:DSP-to-SVP} is perhaps best viewed as a proof of a specific upper bound on the true optimal value of $\gamma(n,\ell,C)$. Indeed, we discovered the specific approach used in \cref{sec:DSP-to-SVP} by first studying the reductions returned by an automated procedure!

See \cref{sec:computers} for discussions of how to further generalize this. E.g., there we study reductions that work with ``variable block size $k$'' by explicitly modeling the running time of a rank-$k$ oracle.

\subsection{Representing the lattices in our reductions}
\label{sec:bitlengths_intro}

We have been describing our reductions as operating on abstract lattices, 
independent of the choice of representation. 
This perspective is clean and allows remarkably succinct descriptions,
but it does not allow bounding the concrete running time of our reductions.
Indeed, choosing a representation that makes our DSP to DSP and DSP to HSVP
reductions polynomial time is not trivial. Straightforwardly
representing the lattices in our reductions with bases indeed allows
the necessary duality and intersection operations to be performed 
in time polynomial in the representation size. However, for a polynomial-time
reduction, the representation size of all lattices encountered in each recursive call of the algorithm must also remain polynomially bounded.

In \cref{section:representation}, we present two different ways to represent lattices in our framework.
First, we consider directly using (LLL-reduced) bases to represent lattices,
and show a quasi-polynomial $n^{O(\log n)}$ bound on the representation size. This yields reductions that run in time that is quasipolynomial in the input length.
To achieve polynomial representation size and thus polynomial running time, we use an approximate representation.
That is, we compute duality and intersection operations only approximately, 
in a way that keeps the representation size polynomially bounded.
The approximation is good enough for our purposes in a strong sense---namely, any solution to $\gamma$-DSP for the approximate lattice
can be lifted to a $\gamma'$-DSP solution for the original lattice,
where $\gamma' = \gamma \cdot (1 + 1/2^{\poly(n)})$ can be made arbitrarily close to $\gamma$.

	(Throughout this work, we assume that we only work with lattices $\lat \subset \Q^d$ that consist of rational vectors, to avoid thorny issues about representing real numbers. But, our reductions do make sense for lattices $\lat \subset \R^d$---provided that one represents the real numbers themselves in a sufficiently nice way. In fact, our rounded representation in \cref{section:bitlength} can be used to approximate real-valued lattices by integer lattices in a way that plays nicely with our reductions.)

\subsection{Some more context, and the relationship with prior work}
\label{sec:context}

Of course, our recursive lattice reduction framework shares much in common with basis reduction. At a very high level, both paradigms work to find a short non-zero vector in a lattice $\lat$ by first finding a dense lower-rank sublattice $\lat' \subset \lat$ and then finding a short non-zero vector in $\lat'$. So, one could argue that recursive lattice reduction is essentially a repackaging of basis reduction (though we do not know of a direct way to view our framework in terms of basis reduction or a direct way to view basis reduction in terms of our framework---see \cref{sec:THEFUTURE}). We instead think of it as a closely related but novel framework, which we hope will lead to a better understanding of lattice problems (and therefore to a better understanding of the security of lattice-based cryptography).

In this section, we simply describe some of the basis reduction algorithms that use particularly similar ideas to those in this work. (Of course, a full survey of the extensive literature on basis reduction is far beyond the scope of this paper.) As we discuss in \cref{sec:THEFUTURE}, we expect that additional ideas used in basis reduction should be useful in our new framework as well.

The technique of moving between the primal and the dual in order to find short vectors is well known in the basis reduction literature. Indeed, the basis reduction literature now contains many examples of algorithms that, like us, move between the primal and the dual in order to convert between low-rank dense sublattices of the dual and high-rank dense sublattices of the primal via the identity 
\[\det(\lat \cap (\lat')^\perp) = \det(\lat) \det(\lat')
\] for a (primitive) dual sublattice $\lat' \subset \lat^*$. In particular, the special case when $\lat'$ is generated by a single vector plays a prominent role in both Gama and Nguyen’s celebrated slide reduction algorithm~\cite{gamaFindingShortLattice2008} and Micciancio and Walter’s beautiful self-dual BKZ algorithm~\cite{micciancioPracticalPredictableLattice2016}. 
Indeed, both of these algorithms repeatedly use this identity in order to use an oracle for HSVP on rank-$k$ lattices to find a dense rank-$(k-1)$ sublattice in some rank-$k$ lattice (where the rank-$k$ lattice is itself some projection of a sublattice of the input lattice $\lat$). (Of course, in both cases, the algorithms are described in terms of the behavior of the Gram-Schmidt vectors, and not directly in terms of (projections of) sublattices. One can also argue that essentially \emph{all} basis reduction algorithms implicitly use duality whenever they use projection, via the identity $(\lat \cap (\lat')^\perp)^* = \Pi_{(\lat')^\perp}(\lat^*)$.) 
The more general technique in which $\lat'$ can have higher rank has also been used in basis reduction algorithms, including algorithms that explicitly work to find dense sublattices of a rank-$n$ lattice using an oracle to find dense sublattices of lower-rank lattices~\cite{liApproximatingDensestSublattice2014,LWImprovingConvergencePracticality2023}, as well as algorithms that use an oracle for finding dense sublattices of low-rank lattices in order to eventually find a short lattice vector in a rank-$n$ lattice~\cite{WalConvergenceSlidetypeReductions2021,LWImprovingConvergencePracticality2023}.

DSP, the computational problem of finding dense sublattices of a given rank $\ell$, has been the subject of a number of works in the basis reduction literature. E.g., it was first studied in \cite{gamaRankinConstantBlockwise2006}. And, Li and Nguyen showed how to generalize Gama and Nguyen’s slide reduction algorithm~\cite{liApproximatingDensestSublattice2014} to reduce $\gamma$-DSP with parameters $n$ and $\ell \leq k$ to $\sqrt{\delta_{k,\ell}}$-DSP with parameters $k$ and $\ell$, where $\gamma \approx \delta_{k,\ell}^{(n-\ell)/(2(k-\ell))}$. This is essentially the same high-level result that we achieve in our DSP-to-DSP reduction (described at a high level in \cref{sec:intro-DSP-to-DSP} and in detail in \cref{sec:DSP-to-DSP}), except that (1) Li and Nguyen require that $\ell \leq k$ and that $n \bmod k$ is divisible by $\ell$; (2) they only require a DSP oracle that finds rank-$\ell$ sublattices of rank-$k$ lattices, while we require one that finds rank-$\ell'$ sublattices for all $1 \leq \ell' < k$; and (3) their approximation factor also only depends on $\delta_{k,\ell}$, while ours depends on $\delta_{k,\ell'}$ for all $\ell'$ (unsurprisingly, given (2)). Walter~\cite{WalConvergenceSlidetypeReductions2021} and Li and Walter~\cite{LWImprovingConvergencePracticality2023} later showed variants of the algorithm with better performance (at least heuristically). 
	In a separate work, Li and Nguyen showed how to use basis reduction to solve DSP using an (H)SVP algorithm on rank-$k$ lattices~\cite{LNCompleteAnalysisBKZ2020}. This matches the results that we obtain with our framework in some parameter regimes, though it performs slightly worse in the important regime when $k = \Omega(n)$, as we noted above.

There are also algorithms for DSP that do not really use basis reduction. Dadush and Micciancio gave an $\ell^{O(n \ell)}$-time algorithm for finding the \emph{exact} densest sublattice with rank $\ell$~\cite{DMAlgorithmsDensestSublattice2013}. And, Dadush gave an elegant $2^{O(n)}$-time algorithm that finds a sublattice $\lat' \subseteq \lat$ with the property that $\det(\lat')^{1/\rank(\lat')}$ is within a polylogarithmic factor of the minimum possible~\cite{dadushApproximatingCoveringRadius2019}. (Notice that this algorithm does not allow us to choose the rank $\ell$ of the resulting sublattice. E.g, for some input lattices $\lat$, the algorithm might simply return $\lat$ itself, or a rank-one sublattice, or anything in between. In particular, it is not clear how to make use of this algorithm as an oracle in either our framework or in basis reduction.)

We also note that recursive approaches for basis reduction algorithms have certainly been considered in the literature~\cite{NS16,KEF21,RHFastPracticalLattice2023}. One might reasonably guess that our ``recursive lattice reduction'' framework is quite similar to these ``lattice basis reduction algorithms that use recursion,'' but in fact the use of recursion in these algorithms is fundamentally different from that in our framework. The algorithms from prior work are still largely iterative, and in particular still aim to iteratively improve the quality of a basis at each step. Recursion in these works is used primarily as a tool for minimizing the precision necessary in the Gram-Schmidt computations. Indeed, the resulting pattern of oracle calls is quite similar to the pattern in purely iterative basis reduction algorithms. Furthermore, all of these works focus on block size $k = 2$, while we are interested in larger block size (and we currently do not know of any particularly interesting instantiation of our framework with block size $k = 2$; see \cref{sec:THEFUTURE}). (Hermite also described a beautiful pair of algorithms for $2^{O(n)}$-HSVP, which are recursive and can be viewed as very early precursors of the LLL algorithm, though they are not known to terminate in polynomial time~\cite{HerExtraitsLettresCh1850}. See~\cite{NguHermiteConstantLattice2010}.)

Finally, we note that~\cite{PSRecursiveLatticeReduction2010} developed an algorithmic technique that they also call ``recursive lattice reduction.'' However, their techniques are very different from our own. Specifically, they first choose a nested sequence of sublattices $\lat_1 \subset \lat_2 \subset \cdots \subset \lat_n  = \lat$ with increasing rank and then iteratively compute a reduced basis for $\lat_i$ by taking a reduced basis for $\lat_{i-1}$, appending an appropriate vector to it, and then reducing the result. They then use this idea to break certain lattice challenges.

\subsection{On SVP vs.\ HSVP}
\label{sec:HSVP}

The reductions that we describe solve HSVP, i.e., the problem of finding a short non-zero vector in a rank-$n$ lattice $\lat$ \emph{relative to the (normalized) determinant} $\det(\lat)^{1/n}$. One can also consider the \emph{$\gamma$-approximate Shortest Vector Problem} ($\gamma$-SVP), in which the goal is to find a non-zero lattice vector $\vec{y} \in \lat_{\neq \vec0}$ such that $\|\vec{y}\| \leq \gamma \lambda_1(\lat)$,
where $\lambda_1(\lat) := \min_{\vec{x} \in \lat_{\neq \vec0}} \|\vec{x}\|$. I.e., $\gamma$-SVP asks for a non-zero lattice vector whose length is within a factor $\gamma$ of the shortest in the lattice. 

SVP is arguably the more natural problem, and is more discussed in the literature---at least in literature that is less concerned with cryptography. However, in cryptography, HSVP is typically the more important problem, and one typically judges the performance of an algorithm based on its Hermite factor, i.e., based on how well it solves HSVP. This is because in cryptography one typically encounters random lattices, which have $\lambda_1(\lat) \approx \sqrt{n/(2\pi e)} \cdot \det(\lat)^{1/n} \approx \sqrt{\delta_n} \det(\lat)^{1/n}$. (However, this is not always the case. E.g., some cryptography uses lattices with planted short vectors, or planted dense sublattices.) Indeed, in heuristic analysis of lattice algorithms, one often assumes not only that the input lattice $\lat$ satisfies $\lambda_1(\lat) \approx \sqrt{n/(2\pi e)} \cdot \det(\lat)^{1/n}$, but \emph{also} that this holds for all lattices $\lat'$ encountered by the algorithm's subroutines.

There seems to be some consensus among experts that $\gamma$-HSVP for nearly minimal $\gamma \approx \sqrt{\delta_n}$ is likely to be more-or-less as hard to solve as $\gamma'$-SVP for $\gamma'$ not much larger than one, in the sense that the fastest algorithms for $\gamma'$-SVP should be essentially as fast as the fastest algorithms for $\gamma$-HSVP for such small values of $\gamma,\gamma'$.\footnote{However, currently the fastest known algorithm with proven correctness for $\sqrt{\delta_n} \cdot \poly( \log n)$-HSVP runs in time $2^{n/2 + o(n)}$~\cite{ALSTimeAlgorithmSqrtn2021}, while the fastest known algorithm with proven correctness for $\poly(\log n)$-SVP (or even $O(1)$-SVP) runs in time $2^{0.802n}$~\cite{liuShortestLatticeVectors2011,weiFindingShortestLattice2015,aggarwalFasterSievingAlgorithm2019}. This seems to be more about our proof techniques than the inherent difficulty of the two problems, however.} But, the situation for larger approximation factors is less clear. Notice that there is a trivial reduction from $\gamma \sqrt{\delta_n}$-HSVP to $\gamma$-SVP. Furthermore, there is a (non-trivial) reduction due to Lov{\' a}sz from $\gamma^2$-SVP to $\gamma$-HSVP. So, our reductions can be composed with Lov{\' a}sz's to achieve reductions that solve $\gamma^2$-SVP (granted, at the expense of a factor of $O(n)$ in the running time and number of oracle calls, since Lov{\' a}sz's reduction requires $O(n)$ calls to a $\gamma$-HSVP oracle). However, many basis reduction algorithms that solve $\gamma_{\mathsf{BR}}$-HSVP can also be used to solve $(\gamma_{\mathsf{BR}}^2/\delta_k)$-SVP in essentially the same running time, provided that the rank-$k$ $\sqrt{\delta_k}$-HSVP oracle is replaced by a rank-$k$ exact SVP oracle (which is anyway what is used in practice).

So, while we essentially match the Hermite factor $\gamma_{\mathsf{BR}}$ of basis reduction, we do not know how to use our framework to match the approximation factor $\gamma_{\mathsf{BR}}^2/\delta_k$ that can be achieved for SVP. A similar situation holds for Micciancio and Walter's self-dual BKZ algorithm~\cite{micciancioPracticalPredictableLattice2016}, i.e., self-dual BKZ solves $\gamma_{\mathsf{BR}}$-HSVP but is not known to solve $(\gamma_{\mathsf{BR}}^2/\delta_k)$-SVP).

The same story more-or-less applies for DSP as well. In particular, we describe DSP as the problem of finding a dense sublattice that is dense \emph{relative to the determinant of the input lattice}. But, one can also consider the version of DSP in which the goal is to find a dense sublattice of rank $\ell$ whose determinant is small \emph{relative to the minimum possible}. If one wishes to distinguish between the two versions, one might use ``RDSP'' to refer to the version in which the density is relative to the determinant (where the R is in honor of Rankin).

\section{Preliminaries}

In this section, we list some preliminary definitions and results that will be necessary going forward. We note that we do not view the results here as particularly novel, though in some cases we do not know of a specific reference to prior work.

Given a lattice $\lat \subseteq \R^d$, its \emph{dual} is $\lat^* := \{\w \in \Span(\lat) : \forall \vec{y} \in \lat, \inner{\vec{w}, \vec{y}} \in \Z\} \;$.
Although this is somewhat nonstandard, it will be useful to generalize this definition to arbitrary sets $S \subseteq \R^d$ in the following natural way:
\[
    S^* := \{\w \in \Span(S) : \forall \vec{y} \in S, \inner{\vec{w}, \vec{y}} \in \Z\} \; .
\]

We say that a sublattice $\lat'$ of a lattice $\lat$ is \emph{primitive} if $\lat' = \lat \cap \Span(\lat')$. And, we call a vector $\vec{y} \in \lat$  primitive if the sublattice generated by the vector is primitive.

\subsection{Duality and sublattices}
\label{sec:prelims-dual}

Central to our results is a notion of duality for sublattices. Specifically, if $\lat'$ is a sublattice of $\lat$, then $\lat^* \cap (\lat')^\perp$ is a sublattice of the dual lattice $\lat^*$. Moreover, the \emph{duality map} $(\lat, \lat') \mapsto (\lat^*, \lat^* \cap (\lat')^\perp)$ is an involution on the set of (lattice, primitive sublattice) pairs, that preserves a natural notion of the relative density of $\lat'$ in $\lat$. In this section, we state some well known properties of the duality map. None of these results are novel, but we include proofs for completeness. (See also \cite[Section 1.3]{MarPerfectLatticesEuclidean2003} for many related properties of duality.)

\begin{lemma}[{\cite[Proposition 1.3.4]{MarPerfectLatticesEuclidean2003}}]
\label{lemma:noah}
    Suppose that $\lat'$ of rank $\ell$ is a sublattice of $\lat$ of rank $n$. Then $\Pi_{(\lat')^\perp}(\lat)$ is a lattice with rank $n - \ell$, and
    \[
        \Pi_{(\lat')^\perp}(\lat)^* = \lat^* \cap (\lat')^\perp \; . 
    \]
\end{lemma}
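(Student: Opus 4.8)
The plan is to prove the two claims separately: first that $\Pi_{(\lat')^\perp}(\lat)$ is a lattice of rank $n - \ell$, and then that its dual is $\lat^* \cap (\lat')^\perp$.

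\medskip

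\textbf{Step 1: $\Pi_{(\lat')^\perp}(\lat)$ is a lattice of rank $n-\ell$.} The rank statement is immediate from linear algebra: $\Span(\lat)$ has dimension $n$, $\Span(\lat')$ has dimension $\ell$, and projecting onto $(\lat')^\perp$ kills exactly the $\ell$-dimensional subspace $\Span(\lat')$, so the image spans a space of dimension $n - \ell$. The content is that the image is \emph{discrete}. Here I would use primitivity of $\lat'$ (which we may assume, since $\Span(\lat')$ only depends on $\lat'$ up to finite index, and the statement is about $(\lat')^\perp = \Span(\lat')^\perp$). Pick a basis $\vec{b}_1, \ldots, \vec{b}_\ell$ of $\lat'$ and extend it to a basis $\vec{b}_1, \ldots, \vec{b}_n$ of $\lat$ (possible precisely because $\lat'$ is primitive). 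Then $\Pi_{(\lat')^\perp}(\vec{b}_1) = \cdots = \Pi_{(\lat')^\perp}(\vec{b}_\ell) = \vec0$, while $\Pi_{(\lat')^\perp}(\vec{b}_{\ell+1}), \ldots, \Pi_{(\lat')^\perp}(\vec{b}_n)$ are linearly independent (their preimages are independent modulo $\Span(\lat')$), and every element of $\Pi_{(\lat')^\perp}(\lat)$ is an integer combination of them. Hence $\Pi_{(\lat')^\perp}(\lat)$ is exactly the lattice generated by $\Pi_{(\lat')^\perp}(\vec{b}_{\ell+1}), \ldots, \Pi_{(\lat')^\perp}(\vec{b}_n)$, which is a lattice of rank $n - \ell$.

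\medskip

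\textbf{Step 2: the dual identity.} I would prove $\Pi_{(\lat')^\perp}(\lat)^* = \lat^* \cap (\lat')^\perp$ by double inclusion, noting that both sides live in the space $(\lat')^\perp \cap \Span(\lat)$, which is the span of $\Pi_{(\lat')^\perp}(\lat)$. For $(\supseteq)$: take $\vec{w} \in \lat^* \cap (\lat')^\perp$. For any $\vec{y} \in \lat$, write $\vec{y} = \Pi_{(\lat')^\perp}(\vec{y}) + \vec{u}$ with $\vec{u} \in \Span(\lat')$; then $\inner{\vec{w}, \Pi_{(\lat')^\perp}(\vec{y})} = \inner{\vec{w}, \vec{y}} - \inner{\vec{w},\vec{u}} = \inner{\vec{w},\vec{y}} \in \Z$ since $\vec{w} \perp \Span(\lat')$ and $\vec{w} \in \lat^*$. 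So $\vec{w}$ pairs integrally with all of $\Pi_{(\lat')^\perp}(\lat)$, i.e. $\vec{w} \in \Pi_{(\lat')^\perp}(\lat)^*$. For $(\subseteq)$: take $\vec{w} \in \Pi_{(\lat')^\perp}(\lat)^*$, so in particular $\vec{w} \in (\lat')^\perp$. For any $\vec{y} \in \lat$, $\inner{\vec{w}, \vec{y}} = \inner{\vec{w}, \Pi_{(\lat')^\perp}(\vec{y})} \in \Z$ (again using $\vec{w} \perp \Span(\lat')$), so $\vec{w} \in \lat^*$; combined with $\vec{w} \in (\lat')^\perp$ this gives $\vec{w} \in \lat^* \cap (\lat')^\perp$.

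\medskip

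\textbf{Main obstacle.} The genuinely delicate point is the discreteness argument in Step 1, which is exactly where primitivity of $\lat'$ is needed — without it, $\Pi_{(\lat')^\perp}(\lat)$ can fail to be discrete (e.g. $\lat = \Z^2$, $\lat'$ generated by $(1,1)$ scaled badly, or more precisely a non-primitive rank-$1$ sublattice of a lattice whose primitive hull has an irrational projection). I would handle this by the clean basis-extension argument above, remarking up front that replacing $\lat'$ by its primitive hull $\lat \cap \Span(\lat')$ changes neither $\Pi_{(\lat')^\perp}$ nor $(\lat')^\perp$, so the lemma as stated reduces to the primitive case. Everything in Step 2 is then routine bilinear-form bookkeeping, with the only subtlety being to keep track that all vectors involved lie in $\Span(\lat)$ so that the generalized duality $(\cdot)^*$ from the preliminaries behaves as expected.
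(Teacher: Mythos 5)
Your Step 2 (the dual identity) is correct and essentially mirrors the paper's argument: both hinge on the observation that for $\vec{w} \in (\lat')^\perp$, $\inner{\vec{w}, \Pi_{(\lat')^\perp}(\vec{y})} = \inner{\vec{w},\vec{y}}$, so membership in $\Pi_{(\lat')^\perp}(\lat)^*$ is equivalent to membership in $\lat^*$ once you restrict to $(\lat')^\perp$. Your reduction to the primitive case at the outset is also legitimate, since replacing $\lat'$ by $\lat \cap \Span(\lat')$ changes neither $(\lat')^\perp$ nor the projection.

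Where you diverge is Step 1, and this is worth noting carefully. You establish discreteness directly by extending a basis of $\lat'$ to a basis of $\lat$ ("possible precisely because $\lat'$ is primitive"). That basis-extension fact is exactly \cref{lemma:sub-basis} in the paper — which is stated \emph{after} \cref{lemma:noah} and whose proof \emph{begins} by invoking \cref{lemma:noah} to get that $\Pi_{(\lat')^\perp}(\lat)$ is a lattice. So in this paper's logical order your argument is circular unless you supply an independent proof of basis extension (which does exist — Smith/Hermite normal form does it — but you'd owe that argument). The paper sidesteps this entirely by reversing the order: it proves the dual identity first, observes that $\Pi_{(\lat')^\perp}(\lat)^* = \lat^* \cap (\lat')^\perp$ is automatically a lattice of rank $n-\ell$ because it is a subgroup of the lattice $\lat^*$ sitting in an $(n-\ell)$-dimensional subspace (rank $\geq n-\ell$ follows by exhibiting $n-\ell$ independent dual vectors), and then concludes that $\Pi_{(\lat')^\perp}(\lat)$, being a subgroup of the lattice $(\Pi_{(\lat')^\perp}(\lat)^*)^*$, is itself a lattice. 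That route never needs primitivity or basis extension at all, which is what lets \cref{lemma:sub-basis} be derived as a corollary rather than used as an ingredient. Your proof is mathematically sound as a free-standing argument, but to fit it into this paper you would either need to prove basis extension from scratch or adopt the paper's order of deductions.
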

\begin{proof}
    We may assume without loss of generality that $\lat$ is full-rank ($d = n$). If $\lat$ is not full-rank, simply rotate $\lat$ so that it is orthogonal to the last $d - n$ coordinates and drop these coordinates; the statement to prove is invariant under rotation. 
    
    To simplify the notation, we will write $\Pi = \Pi_{(\lat')^\perp}$.
    First we will prove that $\Pi(\lat)^* = \lat^* \cap (\lat')^\perp$. Indeed, both sets are subsets of $(\lat')^\perp$. But for any $\w \in (\lat')^\perp$ and $\y \in \R^d$, $\inner{\w, \Pi(\y)} = \inner{\Pi(\w), \y} = \inner{\w, \y}$. 
    It follows that for all $\w \in (\lat')^\perp$, $\inner{\w, \Pi(\y)}$ is an integer simultaneously for all $\y \in \lat$ if and only if $\w \in \lat^*$.
    Thus 
    \[
        \Pi(\lat)^* = \Span(\Pi(\lat)) \cap \big(\lat^* \cap (\lat')^\perp\big) \;.
    \]
    But $\Span(\Pi(\lat)) = (\lat')^\perp.$ 
    Hence $\Pi(\lat)^* = \lat^* \cap (\lat')^\perp$, as claimed.

    It remains to show that $\Pi(\lat)$ is a lattice with rank $n - \ell$.  We will first show that $\Pi(\lat)^*$ has these properties! Indeed, $\Pi(\lat)^* = \lat^* \cap (\lat')^\perp$ is a lattice, since it is a subgroup of a lattice. And $\lat^* \cap (\lat')^\perp$ has rank at most $n - \ell$,  because it is contained in a $(n - \ell)$-dimensional subspace; namely, the intersection of the $n$-dimensional subspace $\Span(\lat^*)$ with the complement of its $\ell$-dimensional subset subspace $\Span(\lat')$. To see that it has rank at least $n - \ell$, notice that for every linearly independent set of lattice vectors $\vec{y}_1, \dots, \vec{y}_n \in \lat$ there exists a dual vector that has non-zero inner product with $\vec{y}_1$ and inner product zero with $\vec{y}_2, \dots, \vec{y}_n$. Thus, fixing any basis for $(\lat')$ and extending it to a set of $n$ linearly independent vectors in $\lat$, we can find $n - \ell$ dual vectors in $\lat^* \cap (\lat')^\perp$ that are all linearly independent. 

    It is clear that $\Span(\Pi(\lat))$ has dimension exactly $n - \ell$. It follows that $\Span(\Pi(\lat^*)) = \Span(\Pi(\lat))$. It is clear from this combined with the definition of dual set that $\Pi(\lat) \subseteq (\Pi(\lat)^*)^*$. But $(\Pi(\lat)^*)^*$ is a lattice, because $\Pi(\lat)^*$ is.  Thus $\Pi(\lat)$ is a subgroup of a lattice, and therefore is also a lattice.
\end{proof}

\begin{lemma}
\label{lemma:sub-basis}
    Suppose that $\lat'$ of rank $\ell$ is a primitive sublattice of $\lat$ of rank $n$. Then, for any basis $(\b_1, \dots, \b_\ell)$ of $\lat'$, there exists $\b_{\ell + 1}, \dots, \b_n$ such that $(\b_1, \dots, \b_n)$ is a basis of $\lat$.
\end{lemma}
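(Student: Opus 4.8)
The plan is to prove this standard fact about primitive sublattices by reducing to the quotient lattice $\lat / \lat'$, or equivalently, by projecting onto $(\lat')^\perp$ and lifting back. First I would invoke \cref{lemma:noah}: since $\lat'$ is a sublattice of $\lat$ of rank $\ell$, the projection $\Pi := \Pi_{(\lat')^\perp}$ maps $\lat$ onto a lattice $\Pi(\lat)$ of rank $n - \ell$. Pick any basis $(\vec{c}_1, \dots, \vec{c}_{n-\ell})$ of $\Pi(\lat)$, and for each $i$ choose a preimage $\vec{b}_{\ell + i} \in \lat$ with $\Pi(\vec{b}_{\ell+i}) = \vec{c}_i$. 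The claim will be that $(\vec{b}_1, \dots, \vec{b}_\ell, \vec{b}_{\ell+1}, \dots, \vec{b}_n)$ is a basis of $\lat$.

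The verification splits into two parts. For linear independence, note that any nontrivial rational combination of the $\vec{b}_i$ that vanishes would, after applying $\Pi$ (which kills exactly $\spn(\lat')$), give a vanishing combination of the $\vec{c}_i$, forcing the coefficients of $\vec{b}_{\ell+1}, \dots, \vec{b}_n$ to be zero; then what remains is a vanishing combination of the basis $(\vec{b}_1, \dots, \vec{b}_\ell)$ of $\lat'$, so all coefficients vanish. For the spanning property --- that every $\vec{y} \in \lat$ is an integer combination of the $\vec{b}_i$ --- take $\vec{y} \in \lat$, write $\Pi(\vec{y}) = \sum_{i=1}^{n-\ell} z_i \vec{c}_i$ with $z_i \in \Z$ (using that $(\vec{c}_i)$ is a \emph{basis} of $\Pi(\lat)$), and set $\vec{y}' := \vec{y} - \sum_{i=1}^{n-\ell} z_i \vec{b}_{\ell+i} \in \lat$. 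Then $\Pi(\vec{y}') = \vec0$, so $\vec{y}' \in \lat \cap \spn(\lat') = \lat'$, the last equality being precisely the \emph{primitivity} hypothesis. Hence $\vec{y}'$ is an integer combination of $\vec{b}_1, \dots, \vec{b}_\ell$, and adding back $\sum z_i \vec{b}_{\ell+i}$ expresses $\vec{y}$ as an integer combination of all $n$ vectors.

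The one point requiring care --- and the place where primitivity is essential --- is the step $\lat \cap \spn(\lat') = \lat'$; without primitivity one could only conclude $\vec{y}' \in \lat \cap \spn(\lat')$, which may strictly contain $\lat'$, and the argument would fail. Everything else is routine linear algebra, so I do not expect a serious obstacle. (One could alternatively phrase the whole thing via the exact sequence $0 \to \lat' \to \lat \to \Pi(\lat) \to 0$ of free abelian groups, which splits because $\Pi(\lat)$ is free, but the explicit lifting argument above is self-contained and fits the elementary style of the surrounding lemmas.)
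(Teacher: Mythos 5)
Your proof is correct and follows essentially the same route as the paper's: invoke \cref{lemma:noah} to see that $\Pi_{(\lat')^\perp}(\lat)$ is a lattice of rank $n-\ell$, lift a basis of it back to $\lat$, and use primitivity to show the lifted vectors together with $(\b_1,\dots,\b_\ell)$ generate all of $\lat$. The only cosmetic differences are that you argue directly while the paper argues by contradiction, and you explicitly verify linear independence (which the paper leaves implicit); both are fine.
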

\begin{proof}
    By \cref{lemma:noah}, $\Pi_{(\lat')^\perp}(\lat)$ is a lattice with rank $n - \ell$. Therefore, it has a basis $(\b'_1, \dots, \b'_{n - \ell})$. For $i \in [n - \ell]$, let $\b_{\ell + i} \in \lat$ be such that $\Pi_{(\lat')^\perp}(\b_{\ell + i}) = \b'_i$. 
    Suppose for contradiction that $(\b_1, \dots, \b_n)$ is not a basis for $\lat$, that is, that there is some vector $\y \in \lat$ such that $\y = a_1 \b_1 + \dots + a_n \b_n$ and not all $a_i$ are integers. Since $\Pi_{(\lat')^\perp}(\y) = a_{\ell + 1} \b'_{1} + \dots + a_n \b'_{n-\ell} \in \Pi_{(\lat')^\perp}(\lat)$, it follows that $a_{\ell + 1}, \dots, a_n$ are all integers. Thus, one of $a_1, \dots, a_\ell$ is not an integer. But then $\v := a_1 \b_1 + \dots + a_\ell \b_\ell$ is a point in $\lat \cap \Span(\lat')$ which is not an integer linear combination of $\b_1, \dots, \b_\ell$, contradicting that $\lat'$ is a primitive sublattice. 
\end{proof}
\begin{lemma}[{\cite[Corollary 1.3.5]{MarPerfectLatticesEuclidean2003}}]
\label{lemma:dsp-duality}
    Suppose that 
    $\lat'$ of rank $\ell$ is a primitive sublattice of $\lat$ of rank $n$.
    Then $\lat'' := \lat^* \cap (\lat')^\perp$ is a rank $(n - \ell)$ primitive sublattice of $\lat^*$ satisfying
    $\det(\lat'') = \det(\lat^*) \cdot \det(\lat')$.
\end{lemma}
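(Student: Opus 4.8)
The plan is to derive everything from \cref{lemma:noah}, which already identifies $\Pi_{(\lat')^\perp}(\lat)^* = \lat^* \cap (\lat')^\perp$ and tells us that the projection $\Pi_{(\lat')^\perp}(\lat)$ is a lattice of rank $n-\ell$. First I would set $\lat'' := \lat^* \cap (\lat')^\perp$ and observe that, since $\lat''$ is a subgroup of the lattice $\lat^*$, it is itself a lattice; its rank is $n-\ell$ because it equals $\Pi_{(\lat')^\perp}(\lat)^*$ and the dual of a rank-$(n-\ell)$ lattice has rank $n-\ell$. For primitivity, I need $\lat'' = \lat^* \cap \Span(\lat'')$. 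Since $\Span(\lat'') = (\lat')^\perp$ (this follows from \cref{lemma:noah}, as $\Span(\Pi_{(\lat')^\perp}(\lat)^*) = \Span(\Pi_{(\lat')^\perp}(\lat)) = (\lat')^\perp$), we have $\lat^* \cap \Span(\lat'') = \lat^* \cap (\lat')^\perp = \lat''$, so primitivity is immediate.

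The remaining and main content is the determinant identity $\det(\lat'') = \det(\lat^*)\det(\lat')$. Here the clean route is to build a compatible basis. By \cref{lemma:sub-basis} applied to the primitive sublattice $\lat' \subseteq \lat$, fix a basis $(\b_1,\dots,\b_\ell)$ of $\lat'$ and extend it to a basis $\basis = (\b_1,\dots,\b_n)$ of $\lat$. I would then compare two determinant computations: on one hand $\det(\lat') = \det(G_{1:\ell})^{1/2}$ where $G_{1:\ell}$ is the Gram matrix of $\b_1,\dots,\b_\ell$; on the other hand $\det(\Pi_{(\lat')^\perp}(\lat)) = \det(\lat)/\det(\lat')$, since the Gram-Schmidt lengths of $\basis$ factor as $\prod_{i\le \ell}\|\gs{\b}_i\| \cdot \prod_{i>\ell}\|\gs{\b}_i\| = \det(\lat') \cdot \det(\Pi_{(\lat')^\perp}(\lat))$ (the projections of $\b_{\ell+1},\dots,\b_n$ onto $(\lat')^\perp$ have the same Gram-Schmidt vectors as $\b_{\ell+1},\dots,\b_n$ themselves). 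Combining this with $\lat'' = \Pi_{(\lat')^\perp}(\lat)^*$ and the basic identity $\det(\Lambda^*) = 1/\det(\Lambda)$ for a full-rank-in-its-span lattice $\Lambda$, I get
\[
\det(\lat'') = \frac{1}{\det(\Pi_{(\lat')^\perp}(\lat))} = \frac{\det(\lat')}{\det(\lat)} = \det(\lat^*)\det(\lat'),
\]
using $\det(\lat^*) = 1/\det(\lat)$ in the last step.

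I expect the one spot needing a little care is justifying $\det(\Pi_{(\lat')^\perp}(\lat)) = \det(\lat)/\det(\lat')$, i.e.\ that projecting the extension vectors $\b_{\ell+1},\dots,\b_n$ onto $(\lat')^\perp$ does not disturb the tail of the Gram-Schmidt product; this is the standard ``orthogonal projection splits the determinant'' fact and can be seen by noting $\Pi_{(\lat')^\perp}(\b_{\ell+i})$ and $\b_{\ell+i}$ differ by a vector in $\Span(\lat') = \Span(\b_1,\dots,\b_\ell)$, so the full Gram determinant of $(\b_1,\dots,\b_\ell,\Pi(\b_{\ell+1}),\dots,\Pi(\b_n))$ block-triangularizes into $\det(G_{1:\ell})$ times the Gram determinant of the projected tail. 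Everything else is bookkeeping, so the lemma follows.
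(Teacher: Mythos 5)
Your proof is correct and follows essentially the same route as the paper's: you invoke \cref{lemma:noah} to identify $\lat'' = \Pi_{(\lat')^\perp}(\lat)^*$, extend a basis of $\lat'$ to a basis of $\lat$ via \cref{lemma:sub-basis}, factor $\det(\lat) = \det(\lat')\det(\Pi_{(\lat')^\perp}(\lat))$ through Gram--Schmidt, and finish with the duality identity $\det(\Lambda^*) = 1/\det(\Lambda)$. The only cosmetic difference is that you spell out the primitivity check ($\Span(\lat'') = (\lat')^\perp$, hence $\lat^* \cap \Span(\lat'') = \lat''$) where the paper simply asserts it, and you give a slightly more explicit justification of the determinant split.
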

\begin{proof}
    By \cref{lemma:noah}, $\lat'' = \Pi_{(\lat')^\perp}(\lat)^*$, where $\Pi_{(\lat')^\perp}(\lat)$ is a lattice of rank $n - \ell$. Hence $\lat''$ is also a lattice of rank $n - \ell$.  And it is a primitive sublattice of $\lat^*$ by definition, since it is the intersection of $\lat^*$ with a subspace. It remains to verify the determinant identity. Let $\b_1, \dots, \b_n$ be a basis for $\lat$ such that $\b_1, \dots, \b_\ell$ is a basis for $\lat'$, as guaranteed by \cref{lemma:sub-basis}. Then $\det(\lat) = \det((\b_1, \dots, \b_\ell)) \cdot \det((\Pi_{(\lat')^\perp}(\b_{\ell + 1}), \dots, \Pi_{(\lat')^\perp}(\b_{n}))$. The latter projected basis is evidently a basis for $\Pi_{(\lat')^\perp}(\lat)$, so this equation can be written as $\det(\lat) = \det(\lat') \cdot \det(\Pi_{(\lat')^\perp}(\lat))$. 
    Substituting in $\det(\lat) = 1 / \det(\lat^*)$ and $\det(\lat''^*) = 1 / \det(\lat'')$, we get $1/\det(\lat^*) = \det(\lat') / \det(\lat'')$, which upon rearranging yields the claimed identity. 
\end{proof}

We remark that a basis for $\lat''$ can be efficiently computed given bases for $\lat$ and $\lat'$. 

Given a rank-$\ell$ sublattice $\lat'$ of a lattice $\lat$ of rank $n$, define $\gamma(\lat, \lat') = \det(\lat') / \det(\lat)^{\ell/n}$.

\begin{lemma}
\label{lemma:duality-map}
For all $n \geq \ell \geq 1$, the duality map $(\lat, \lat') \mapsto (\lat^*, \lat^* \cap (\lat')^\perp)$ is an involution on the set of (lattice, primitive-sublattice) pairs, that preserves $\gamma$. 
\end{lemma}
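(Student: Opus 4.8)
The plan is to verify the two assertions separately: that the map is an involution, and that it preserves $\gamma$. Both follow fairly directly from the lemmas already established, so the main work is bookkeeping rather than any genuine obstacle.

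For the involution claim, start with a (lattice, primitive-sublattice) pair $(\lat, \lat')$ where $\rank(\lat) = n$ and $\rank(\lat') = \ell$, and let $(\lat^*, \lat'') := (\lat^*, \lat^* \cap (\lat')^\perp)$ be its image. By \cref{lemma:dsp-duality}, $\lat''$ is a primitive sublattice of $\lat^*$ of rank $n - \ell$, so the map does indeed send (lattice, primitive-sublattice) pairs to pairs of the same type, and applying the map again makes sense. Applying it to $(\lat^*, \lat'')$ yields $((\lat^*)^*, (\lat^*)^* \cap (\lat'')^\perp) = (\lat, \lat \cap (\lat'')^\perp)$, using $(\lat^*)^* = \lat$. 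So it remains to show $\lat \cap (\lat'')^\perp = \lat'$. One inclusion is easy: every $\vec{y} \in \lat'$ is orthogonal to $\lat'' \subseteq (\lat')^\perp$, and $\vec{y} \in \lat$, so $\lat' \subseteq \lat \cap (\lat'')^\perp$. For the reverse inclusion, I would compare ranks and use primitivity. By \cref{lemma:dsp-duality} applied to the pair $(\lat^*, \lat'')$, the lattice $\lat \cap (\lat'')^\perp$ has rank $n - (n - \ell) = \ell$, the same rank as $\lat'$. Since $\lat'$ is a primitive sublattice of $\lat$ (i.e. $\lat' = \lat \cap \Span(\lat')$) and $\Span(\lat') \subseteq (\lat'')^\perp$ with both of dimension $\ell$, we get $\Span(\lat') = (\lat'')^\perp \cap \Span(\lat)$, hence $\lat \cap (\lat'')^\perp = \lat \cap \Span(\lat') = \lat'$. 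This establishes that the map is an involution.

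For the claim that $\gamma$ is preserved, I want to show $\gamma(\lat, \lat') = \gamma(\lat^*, \lat'')$, i.e.
\[
\frac{\det(\lat')}{\det(\lat)^{\ell/n}} = \frac{\det(\lat'')}{\det(\lat^*)^{(n-\ell)/n}}
\; .
\]
Substitute $\det(\lat'') = \det(\lat^*)\det(\lat')$ from \cref{lemma:dsp-duality} and $\det(\lat^*) = 1/\det(\lat)$ into the right-hand side: the numerator becomes $\det(\lat')/\det(\lat)$ and the denominator becomes $\det(\lat)^{-(n-\ell)/n}$, so the right-hand side equals $\det(\lat') \cdot \det(\lat)^{-1 + (n-\ell)/n} = \det(\lat') \cdot \det(\lat)^{-\ell/n}$, which is exactly the left-hand side. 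This is a one-line algebraic check once the determinant identity is in hand.

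The only point requiring a little care — and the closest thing to an obstacle — is the reverse inclusion $\lat \cap (\lat'')^\perp \subseteq \lat'$ in the involution argument, where one must invoke primitivity of $\lat'$ together with the rank count from \cref{lemma:dsp-duality} to conclude $\Span(\lat') = (\lat'')^\perp \cap \Span(\lat)$ rather than merely $\Span(\lat') \subseteq (\lat'')^\perp$. Everything else is a direct substitution into previously proved identities. I would also note in passing that the hypothesis $n \geq \ell \geq 1$ ensures $\lat''$ is nontrivial (or at worst the zero lattice when $\ell = n$, which is handled consistently since $\det$ of the trivial lattice is $1$), so no degenerate case breaks the argument.
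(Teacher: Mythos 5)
Your proof is correct and follows essentially the same approach as the paper's: both rely on the determinant identity from \cref{lemma:dsp-duality} for the $\gamma$-preservation, and both use a rank/dimension count together with primitivity ($\lat' = \lat \cap \Span(\lat')$) for the involution. The only cosmetic differences are the order in which you treat the two assertions and the extra (correct but unnecessary) forward inclusion $\lat' \subseteq \lat \cap (\lat'')^\perp$ before the rank argument, which the paper obtains directly from $(\lat'')^\perp = \Span(\lat')$.
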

\begin{proof}
Fix a lattice $\lat$, which we may again assume is full-rank without loss of generality. Let $\lat'' := \lat^* \cap (\lat')^\perp$. \cref{lemma:dsp-duality} shows that $\lat''$ is primitive. It also implies that the approximation factor is preserved:
    \begin{align*}
        \gamma = \gamma(\lat'', \lat^*) &:= \det(\lat'') / \det(\lat^*)^{\frac{n-\ell}{n}} \\
        &=\det(\lat^*) \cdot \det(\lat') / \det(\lat^*)^{\frac{n-\ell}{n}} \\
        &=\det(\lat)^{\frac{n - \ell}{n}} \cdot \det(\lat')  / \det(\lat) \\
        &=\det(\lat') / \det(\lat)^{\frac{\ell}{n}} \\
        &= \gamma(\lat', \lat).
    \end{align*}
It remains to show that duality is an involution, that is, $\lat' = \lat \cap (\lat^* \cap (\lat')^\perp)^\perp$. But since $\lat^* \cap (\lat')^\perp$ is rank $n - \ell$, $\Span(\lat^* \cap (\lat')^\perp) = \Span((\lat')^\perp)$. Thus $(\lat^* \cap (\lat')^\perp)^\perp = \Span(\lat')$, so the desired claim is simply $\lat' = \lat \cap \Span(\lat')$, which holds by the definition of primitive sublattice.
\end{proof}

\subsection{The Densest Sublattice Problem (DSP)}
\label{sec:prelims-DSP}

\begin{definition}
For $n \geq \ell \geq 1$ and $\gamma \geq 1$, the $(n, \ell, \gamma)$-approximate densest sublattice problem ($(n, \ell, \gamma)$-DSP) is defined as follows. The input is (a basis for) a rank-$n$ lattice $\lat \subset \Q^d$, for some $d \geq n$. The output is (a basis for) a sublattice $\lat' \subseteq \lat$ of rank $\ell$ satisfying $\gamma(\lat', \lat)\leq \gamma$.
\end{definition}
We refer to $\gamma(\lat', \lat)$ as the \emph{approximation factor} achieved by $\lat'$.  It is convenient to write $(n, \ell, \gamma)$-DSP($\lat$) for the set of solutions of $(n, \ell, \gamma)$-DSP on input $\lat$. 
For our results, we need two important properties of $(n, \ell, \gamma)$-DSP. First, from  \cref{lemma:duality-map}, it is immediate that $(n, \ell, \gamma)$-DSP enjoys the following self-duality property. 

\begin{corollary}
\label{corollary:duality-step}
    Let $\lat'$ be a primitive sublattice of $\lat$. Then
    $\lat' \in (n, \ell, \gamma)$-DSP($\lat)$ if and only if $\lat^* \cap \lat'^\perp \in (n, n - \ell, \gamma)$-DSP($\lat^*)$.
\end{corollary}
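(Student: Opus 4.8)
The plan is to derive \cref{corollary:duality-step} as a direct consequence of \cref{lemma:duality-map} together with the approximation-factor identity already established. First I would unpack the definitions: by definition, $\lat' \in (n,\ell,\gamma)$-DSP$(\lat)$ means precisely that $\lat'$ is a rank-$\ell$ sublattice of the rank-$n$ lattice $\lat$ with $\gamma(\lat',\lat) \leq \gamma$. Since we are additionally assuming $\lat'$ is primitive, the pair $(\lat,\lat')$ lies in the domain of the duality map, and \cref{lemma:duality-map} tells us that $(\lat,\lat') \mapsto (\lat^*, \lat^* \cap \lat'^\perp)$ is an involution on (lattice, primitive-sublattice) pairs that preserves $\gamma$, i.e.\ $\gamma(\lat^* \cap \lat'^\perp, \lat^*) = \gamma(\lat',\lat)$.

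Next I would record the two facts needed about the image pair $(\lat^*, \lat'')$ where $\lat'' := \lat^* \cap \lat'^\perp$. By \cref{lemma:dsp-duality}, $\lat''$ is a primitive sublattice of $\lat^*$ of rank $n - \ell$; and since $\lat^*$ has rank $n$, this is a valid instance of $(n, n-\ell, \gamma)$-DSP. Combining this with the $\gamma$-preservation from \cref{lemma:duality-map}: if $\gamma(\lat',\lat) \leq \gamma$ then also $\gamma(\lat'', \lat^*) \leq \gamma$, so $\lat' \in (n,\ell,\gamma)$-DSP$(\lat)$ implies $\lat'' \in (n,n-\ell,\gamma)$-DSP$(\lat^*)$. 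This gives the forward direction.

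For the converse, I would invoke the involution property: applying the duality map to the pair $(\lat^*, \lat'')$ returns $(\lat, \lat')$ exactly (this is the content of the ``it remains to show that duality is an involution'' part of the proof of \cref{lemma:duality-map}, which gives $\lat = (\lat^*)^*$ and $\lat' = \lat \cap (\lat'')^\perp$). So the forward direction applied to $(\lat^*, \lat'')$ — noting $n - (n-\ell) = \ell$ and that $\lat''$ is primitive — yields that $\lat'' \in (n, n-\ell, \gamma)$-DSP$(\lat^*)$ implies $\lat' \in (n,\ell,\gamma)$-DSP$(\lat)$. Hence the equivalence.

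I do not anticipate a serious obstacle here, since all the substantive work (the rank computation, the determinant identity, the primitivity of $\lat''$, and the involution property) is already done in \cref{lemma:noah,lemma:dsp-duality,lemma:duality-map}. The only thing to be slightly careful about is bookkeeping: making sure the rank parameter in the DSP instance on the dual side is $n - \ell$ (not $\ell$), that the ambient lattice $\lat^*$ still has rank $n$, and that one is allowed to apply the involution in both directions — but the "preserves $\gamma$" and "involution" clauses of \cref{lemma:duality-map} handle exactly this. So the proof is essentially a two-line deduction: $\lat' \in (n,\ell,\gamma)$-DSP$(\lat) \iff \gamma(\lat',\lat)\le\gamma \iff \gamma(\lat^*\cap\lat'^\perp,\lat^*)\le\gamma \iff \lat^*\cap\lat'^\perp \in (n,n-\ell,\gamma)$-DSP$(\lat^*)$, with the middle equivalence being \cref{lemma:duality-map} and the outer two being unwinding definitions (using \cref{lemma:dsp-duality} for the rank and primitivity on the right).
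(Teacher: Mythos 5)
Your proof is correct and takes the same route the paper does: the paper states that \cref{corollary:duality-step} "is immediate" from \cref{lemma:duality-map}, and you are simply unpacking that immediacy, using the $\gamma$-preservation of \cref{lemma:duality-map} for the inequality, \cref{lemma:dsp-duality} for the rank and primitivity of $\lat^* \cap (\lat')^\perp$, and the involution property for the converse. The only stylistic remark is that the converse could be obtained even more cheaply by noting that $\gamma(\lat',\lat) = \gamma(\lat^*\cap(\lat')^\perp, \lat^*)$ is an \emph{equality}, so both directions of the implication on the approximation factor are free without explicitly re-applying the map; but your involution argument is also valid and arguably makes the symmetry clearer.
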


Second, $(n, \ell, \gamma)$-DSP behaves nicely under composition, in the following sense.
\begin{lemma}
\label{lemma:dsp-composition}
Suppose that $\lat' \in (n, m, \gamma_1)$-DSP($\lat$), and $\lat'' \in (m, \ell, \gamma_2)$-DSP($\lat'$). Then 
\[
    \lat'' \in \left(n, \ell, \gamma_2 \cdot \gamma_1^{\frac\ell m}\right)\text{-DSP}(\lat) \; .
\]
\end{lemma}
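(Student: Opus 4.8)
The statement is a composition law for the density measure $\gamma(\cdot,\cdot)$, so the natural approach is a direct computation starting from the definition $\gamma(\lat'',\lat') = \det(\lat'')/\det(\lat')^{\ell/m}$ and $\gamma(\lat',\lat) = \det(\lat')/\det(\lat)^{m/n}$. First I would rearrange the hypotheses to express the two determinants we care about: from $\lat'' \in (m,\ell,\gamma_2)$-DSP($\lat'$) we have $\det(\lat'') \leq \gamma_2 \det(\lat')^{\ell/m}$, and from $\lat' \in (n,m,\gamma_1)$-DSP($\lat$) we have $\det(\lat') \leq \gamma_1 \det(\lat)^{m/n}$. The plan is to substitute the bound on $\det(\lat')$ into the bound on $\det(\lat'')$, raising it to the power $\ell/m$, which is legitimate since all quantities are positive.

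Carrying this out: $\det(\lat'') \leq \gamma_2 \det(\lat')^{\ell/m} \leq \gamma_2 \big(\gamma_1 \det(\lat)^{m/n}\big)^{\ell/m} = \gamma_2 \gamma_1^{\ell/m} \det(\lat)^{\ell/n}$. Dividing through by $\det(\lat)^{\ell/n}$ gives exactly $\gamma(\lat'',\lat) \leq \gamma_2 \cdot \gamma_1^{\ell/m}$, which is the claimed approximation factor. The only non-arithmetic point to address is that $\lat''$ is genuinely a valid DSP solution for $\lat$, i.e.\ that $\lat''$ is a rank-$\ell$ sublattice of $\lat$: this follows because $\lat''$ is a rank-$\ell$ sublattice of $\lat'$ (by the second hypothesis) and $\lat'$ is a sublattice of $\lat$ (by the first), and being a sublattice is transitive, while the rank is $\ell$ by hypothesis.

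I do not anticipate a genuine obstacle here — the lemma is essentially bookkeeping and the exponent $\ell/m$ arises simply from the chain rule for the normalization exponents ($\ell/n = (\ell/m)\cdot(m/n)$). The one thing to be mildly careful about is that no primitivity hypothesis is needed for this composition statement (unlike \cref{corollary:duality-step}): the determinant identities used are just multiplicativity of determinants along the containment $\lat'' \subseteq \lat' \subseteq \lat$, which hold for arbitrary finite-index-free sublattices of the appropriate ranks. So the write-up is just the displayed chain of inequalities above plus the one-line remark on transitivity of the sublattice relation.
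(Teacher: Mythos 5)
Your proof is correct and follows essentially the same route as the paper: the paper establishes the exact identity $\gamma(\lat,\lat'') = \gamma(\lat',\lat'')\cdot\gamma(\lat,\lat')^{\ell/m}$ by the same determinant manipulation, from which the inequality is immediate, whereas you chain the two inequalities directly — these are the same computation. Your added remark that $\lat''$ is a rank-$\ell$ sublattice of $\lat$ by transitivity is a reasonable point to make explicit, and you are right that no primitivity is needed.
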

\begin{proof} We may directly calculate
    \begin{align*}
        \gamma(\lat', \lat'') \cdot \gamma(\lat, \lat')^{\frac \ell  m} &=(\det(\lat'') / \det(\lat')^{\frac \ell  m}) \cdot (\det(\lat') / \det(\lat)^{m/n})^{\frac \ell m} \\
        &=\det(\lat'')/\det(\lat)^{\frac \ell n} =: \gamma(\lat, \lat'') 
		\;.  \qedhere 
    \end{align*}
\end{proof}

\section{Reducing approximate DSP to DSP}
\label{sec:DSP-to-DSP}

We show a reduction from approximate DSP on rank-$n$ lattices to DSP on rank-$k$ lattices for $k < n$. This reduction is interesting in its own right, but we also note that the reduction and analysis serve as a useful warmup for \cref{sec:DSP-to-SVP}. On input a (basis for a) lattice $\lat$ with rank $n$, an integer $\ell$ with $n > \ell \geq 1$, and an integer $\tau \geq 0$, the reduction
$\A(\lat, \ell, \tau)$ behaves as follows.

\begin{enumerate}
\item \textbf{Duality step:} 
If $\ell > n / 2$, output $\lat \cap \A(\lat^*, n - \ell, \tau)^\perp$.

\item \textbf{Base cases: }
\begin{enumerate}
    \item \textbf{DSP:}  
    If $n = k$, use an oracle for $\gamma = \gamma(k, \ell)$-DSP to output a dense sublattice of $\lat$ with rank $\ell$.
    \item \textbf{LLL:} If $\tau = 0$, run LLL on $\lat$ and output the lattice generated by the first $\ell$ vectors of the resulting basis. 
\end{enumerate}
    \item \textbf{Recursive step:}  
    Otherwise, output $\A(\lat \cap \A(\lat, \ell^*, \tau - 1)^\perp, \ell, \tau)$, where $\ell^* = \lceil (n - k) / 20 \rceil$.
\end{enumerate}

\begin{theorem}
    \label{thm:DSP_to_DSP}
       Let $\lat$ be a lattice with rank $n \geq k \geq 10$, $\tau \geq 0$, and $1 \leq \ell < n$.
    Then on input $(\lat, \ell, \tau)$, the above reduction $\A$ runs in time $\poly(n) \cdot 2^\tau$, and returns $\lat' \subset \lat$ with rank $\ell$ with 
    \[
        \gamma' := \det{\lat'} / (\det{\lat})^{\ell / n} \leq \alpha_k^{\ell(n-\ell)} \cdot 2^{\frac{\ell(n-\ell) n^2}{2^{\tau/2}}} \; ,
    \]
    where 
    \[
        \alpha_k := \max_{1 \leq \ell' < k} \gamma(k, \ell)^{\frac 1{\ell'(k-\ell')}}
        \; .
    \]
\end{theorem}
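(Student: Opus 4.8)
The plan is to establish the running-time bound and the approximation bound separately, in each case by induction following the recursive structure of $\A$ (duality step / the two base cases / recursive step).

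\textbf{Well-definedness and running time.} First I would observe that all lattice operations make sense: \cref{lemma:noah,lemma:sub-basis,lemma:dsp-duality} guarantee that the duality-and-intersection operations in steps~1 and~3 produce primitive sublattices of the stated rank, and (as remarked after \cref{lemma:dsp-duality}) bases for them are computable from bases of the inputs. Termination holds because after a duality step one has $\ell\le n/2$, so two duality steps never occur consecutively, while the recursive step strictly decreases $n$ on one branch and strictly decreases $\tau$ on the other. For the running time, model the recursive calls as a binary tree: each recursive-step node has two children (the ``$\ell^*$-branch'', which lowers $\tau$ but keeps the rank equal to $n$, and the ``intersection branch'' $\A(\lat\cap(\cdot)^\perp,\ell,\tau)$, which keeps $\tau$ but lowers the rank to $n-\ell^*$); each duality-step node has one child; and the base cases are leaves. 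Along any root-to-leaf path the rank is non-increasing, so there are at most $O(\log n)$ intersection-branch steps (each multiplies $n-k$ by a factor $\le 19/20$, or lowers it by $1$ once $n-k<20$) and at most $\tau$ $\ell^*$-branch steps; duality steps never repeat, so there are at most $O(\log n)+\tau$ \emph{branching} nodes on the path. Since a binary tree in which every root-to-leaf path has at most $h$ branching nodes has at most $2^h$ leaves, the tree has $2^{O(\log n)+\tau}=\poly(n)\cdot 2^\tau$ nodes, each doing $\poly(n)$ work or one oracle call; this gives the claimed bound (up to the bit-length bookkeeping deferred to \cref{section:representation}).

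\textbf{Approximation factor.} Let $\Gamma(n,\ell,\tau)$ be the worst-case value of $\gamma(\lat',\lat)$ produced by $\A(\lat,\ell,\tau)$. I would prove $\Gamma(n,\ell,\tau)\le \alpha_k^{\ell(n-\ell)}\cdot 2^{\ell(n-\ell)n^2/2^{\tau/2}}$ by induction on $(\tau,n)$ ordered lexicographically. The target is symmetric under $\ell\mapsto n-\ell$, and \cref{corollary:duality-step} gives $\Gamma(n,\ell,\tau)=\Gamma(n,n-\ell,\tau)$, so it suffices to treat $\ell\le n/2$. When $\tau=0$ the LLL bound $2^{\ell(n-\ell)/4}$ of~\cite{PTSublatticeDeterminantsReduced2008} suffices (it is at most the target since $\alpha_k\ge 1$); when $n=k$ we have $\Gamma(k,\ell,\tau)\le \gamma(k,\ell)\le\alpha_k^{\ell(k-\ell)}$ by the definition of $\alpha_k$. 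In the recursive step, the $\ell^*$-branch returns a dense rank-$\ell^*$ sublattice of $\lat^*$, which by \cref{corollary:duality-step} is equivalent to a rank-$(n-\ell^*)$ sublattice $\lat_1\subseteq\lat$ with $\gamma(\lat_1,\lat)\le\Gamma(n,\ell^*,\tau-1)$; running $\A$ on $\lat_1$ and composing via \cref{lemma:dsp-composition} yields
\[
    \Gamma(n,\ell,\tau)\ \le\ \Gamma(n-\ell^*,\ell,\tau)\cdot\Gamma(n,\ell^*,\tau-1)^{\frac{\ell}{n-\ell^*}}\ .
\]
Substituting the inductive hypothesis, the exponent of $\alpha_k$ collapses to exactly $\ell(n-\ell^*-\ell)+\ell\ell^*=\ell(n-\ell)$, so the main factor is reproduced on the nose, while the low-order factor reduces to the numerical inequality $(n-\ell^*-\ell)(n-\ell^*)^2+\sqrt2\,\ell^* n^2\le(n-\ell)n^2$. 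Writing $x=\ell^*/n$ and $y=\ell/n$, this is $\sqrt2-3+3x-x^2+2y-xy\le 0$, and since $y\le 1/2$ and $x=\lceil(n-k)/20\rceil/n\le 1/20+1/n\le 0.15$ for $n\ge k\ge 10$, the left side is increasing in both variables on the relevant range and hence maximized at $(x,y)=(0.15,0.5)$, where it is negative.

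\textbf{The main obstacle.} The base cases and the running-time bound are routine; the substance is entirely in the recursive step of the approximation analysis, where one must verify that the ``gain'' from recursing on a rank-$(n-\ell^*)$ sublattice (the factor $(n-\ell^*)^2<n^2$) beats the ``loss'' from spending depth $\tau-1$ on the $\ell^*$-branch (the factor $\sqrt2$ relating $2^{-(\tau-1)/2}$ to $2^{-\tau/2}$). This balance is exactly what dictates the choices $\ell^*=\lceil(n-k)/20\rceil$ (to keep $\ell^*/n$ small) and the duality-step invariant $\ell\le n/2$ (to keep the exponent $\ell/(n-\ell^*)$ near $1/2$); getting the two sets of constants to cooperate is the one delicate computation, and is where the hypothesis $k\ge 10$ is used.
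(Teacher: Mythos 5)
Your proposal is correct and follows essentially the same route as the paper: you set up the same binary-tree height argument for the number of oracle calls, the same induction on $(\tau,n)$ with the same target function $f(n,\ell,\tau) = \alpha_k^{\ell(n-\ell)} \cdot 2^{\ell(n-\ell)n^2/2^{\tau/2}}$, invoke \cref{lemma:dsp-composition} and \cref{corollary:duality-step} to get the recurrence $\Gamma(n,\ell,\tau)\le\Gamma(n-\ell^*,\ell,\tau)\cdot\Gamma(n,\ell^*,\tau-1)^{\ell/(n-\ell^*)}$, and reduce to the same numerical inequality $\sqrt2\,\ell^*n^2 + (n-\ell^*-\ell)(n-\ell^*)^2 \le (n-\ell)n^2$ (your normalization $g(x,y)=\sqrt2-3+3x-x^2+2y-xy\le 0$ is an equivalent rewriting of the paper's factored form $\beta=\ell\ell^*(\ell^*(3n-\ell^*)+\ell(2n-\ell^*)-(3-\sqrt2)n^2)\le 0$). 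The only cosmetic differences are that the paper bounds $\ell^*\le n/10$ while you use $\ell^*/n\le 0.15$, and that your two-variable monotonicity check replaces the paper's one-line plug-in; both close the same gap in the same way.
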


We remark that if the DSP oracle is exact, that is, $\gamma(k, \ell) = \sqrt{\delta_{k, \ell}}$ for all $1 \leq \ell < k$, we have $\alpha_k = k^{1/k}$ and therefore $\gamma' = k^{\Theta(\ell (n-\ell)/k)}$ (where the hidden constant in the exponent is unknown but likely about $1/2$). 

\subsection{\texorpdfstring{Proof of \cref{thm:DSP_to_DSP}}{Proof of the theorem}}

In the recursive step, we call the inner recursive call a ``right child'' and the outer recursive call a ``left child,'' and we write the parameters associated with the right child as $(n_R, \ell_R, \tau_R)$ and similarly for the left child $(n_L, \ell_L, \tau_L)$.

We first observe that the reduction maintains the invariants that $n \geq k$ and and that $1 \leq \ell < n$. It is trivial to see that the reduction maintains the invariant that $1 \leq \ell < n$ by simply noting that in the recursive step the right child has $\ell_R := \ell^* =  \ceil{(n-k)/20}$  and $n_R = n \geq k$, which certainly satisfies this. The left child has $\ell_L = \ell \leq n/2$ (since if $\ell > n/2$, we would have applied a duality step and not a recursive step) and $n_L = n-\ell^* = n- \ceil{(n-k)/20} > n/2 \geq \ell_L$ and also $n_L = n-\ceil{(n-k)/20} \geq k$. Duality does not affect these invariants, so all steps maintain the invariants.

We next prove that the reduction runs in the claimed time.
Notice that the right child has $\tau_R = \tau - 1$, and the left child has $n_L - k \leq 19 (n - k)/20$. The duality step does not affect $n - k$ or $\tau$. It immediately follows that the tree of recursive steps has height $O(\log n) + \tau$. The total number of nodes in this tree is therefore bounded by $\poly(n) \cdot 2^\tau$. Furthermore, the number of duality nodes is at most the number of recursive steps, so the total number of calls to $\A$ is bounded by $\poly(n) \cdot 2^\tau$ and the running time of the reduction is similarly bounded by $\poly(n) \cdot 2^\tau$.

Finally, we bound the approximation factor. 
To that end, let $\gamma'(n,\ell,\tau)$ be the approximation factor achieved by the above reduction when called on a lattice with rank $n$ with parameters $(\ell, \tau)$. We prove by induction on $n$ and $\tau$ that 
\[
    \gamma'(n,\ell,\tau) \leq f(n,\ell,\tau) := \alpha_k^{\ell(n-\ell)} \cdot 2^{\frac{\ell(n-\ell) n^2}{2^{\tau/2}}}
    \; ,
\]
as needed. Indeed, it is trivial to check that the base cases satisfy this, since the LLL algorithm satisfies, e.g., $\gamma'(n,\ell,0) \leq 2^{\ell(n-\ell)} \leq f(n,\ell,0)$~\cite{PTUnifyingLLLInequalities2009}, and by definition the output of the DSP oracle satisfies 
$\gamma'(k,\ell,\tau) \leq \gamma(k, \ell) \leq \alpha_k^{\ell(k-\ell)} \leq f(k,\ell,\tau)$. Furthermore, the duality step does not change either $\gamma'$ or $f$, so we may ignore duality steps.

It remains to handle recursive steps. By \cref{lemma:dsp-composition}, in a recursive step, $\gamma$ satisfies the recurrence
\[
    \gamma'(n, \ell, \tau) \leq \gamma'(n - \ell^*, \ell, \tau) \cdot \gamma'(n, \ell^*, \tau-1)^{\frac{\ell }{n - \ell^*}} \;.
\] 
By induction, we have
\begin{align*}
    \gamma'(n,\ell,\tau) 
        &\leq f(n-\ell^*,\ell,\tau) f(n,\ell^*,\tau-1)^{\frac{\ell}{n-\ell^*}} \\
        &= \alpha_k^{\ell (n-\ell - \ell^*)+ \ell^* \ell} \cdot 2^{\ell (n-\ell^*-\ell)(n-\ell^*)^2/2^{\tau/2} + \ell^* \ell n^2/2^{\tau/2-1/2}}\\
        &= \alpha_k^{\ell(n-\ell)} \cdot 2^{\frac{\ell}{2^{\tau/2}}(\sqrt{2}  \ell^* n^2 + (n-\ell^*-\ell)(n-\ell^*)^2  )}
        \; .
\end{align*}
Therefore,
\begin{align*}
    \beta &:= 2^{\tau/2} \cdot \log(\gamma'(n,\ell,\tau)/f(n,\ell,\tau))\\
    &\leq \sqrt{2} \ell \ell^* n^2 + \ell (n-\ell^*-\ell)(n-\ell^*)^2 - \ell (n-\ell)n^2\\
    &= \ell \ell^* ( \ell^* (3n-\ell^*) + \ell(2n-\ell^*)- (3-\sqrt{2})n^2)
\end{align*}
Finally, recalling that $\ell \leq n/2$ and $\ell^* = \ceil{(n-k)/20} \leq n/10 $, we see that
\[
    \beta \leq \ell \ell^* \cdot (3n^2/10 + n^2 - (3-\sqrt{2})n^2) \leq 0
    \; ,
\]
i.e., $\gamma'(n,\ell,\tau) \leq f(n,\ell,\tau)$, as needed.
\qed

\section{Reducing HSVP (and even DSP!) to HSVP}
\label{sec:DSP-to-SVP}

 We now show how to modify the reduction from the previous section into a reduction from approximate DSP on rank-$n$ lattices to HSVP on rank-$k$ lattices for $k < n$. (In other words, we show how to replace the DSP oracle from the previous section with an HSVP oracle.) The high-level idea is simply to modify the above reduction in such a way that, whenever we make a recursive call with $n = k$, we will always have $\ell = 1$ or $\ell = n-1$. (Notice that $\ell = n-1$ is just as good as $\ell = 1$, since after applying duality, we have $\ell' = n-\ell = 1$. Of course, we must do this while maintaining the invariant that $\ell^*$ is typically a constant fraction of $n-k$, so that the depth of the tree will still be logarithmic in $n-k$.)
To do so, it suffices to design our recursive calls to maintain the invariant that \begin{equation}
    \label{eq:hat_ell_explanation}
\Delta :=  n-k + 1 - \min\{\ell, n-\ell\} \geq 0
\; .
\end{equation}
Notice that this is quite a natural condition to impose, given that we wish to have $\min\{\ell,n-\ell\} = 1$ when $n = k$. 

Let's refer to the inner call made by a recursive call as its \emph{left child}, with parameters $n_L, \ell_L$, $\tau_L$, and $\Delta_L := n_L - k - \min \{\ell_L, n_L-\ell_L\} + 1$; similarly, we'll refer to the outer call as the \emph{right child} and denote its parameters by $n_R, \ell_R$, and so on. (This terminology of ``right and left children'' of course comes from thinking of the recursion tree that our reduction follows.) Since right children have rank $n_R = n$ and $\ell_R = \ell^*$, maintaining \cref{eq:hat_ell_explanation} for right children simply amounts to choosing $\ell^* \leq (n-k)+1$. We actually take $\ell^* \approx (n-k)/20$ to be much smaller than this, so this is fine. 

On the other hand, left children have $n_L := n- \ell^*$ and $\ell_L = \ell$. In particular, notice that if $\ell \geq n/2$, then $\Delta_L := n_L-k+1 - (n_L - \ell)  = \Delta$, so that $\Delta_L \geq \Delta$, and we clearly maintain \eqref{eq:hat_ell_explanation} for left children. On the other hand, if $\ell < n/2$, then (ignoring the corner case when $\ell < n/2$ but $\ell_L \geq n_L/2$, which never actually occurs in our reduction), we have
\[
    \Delta_L = n_L - k + 1 - \ell = \Delta - \ell^*
    \; ,
\]
so that $\Delta_L$ is decreasing in this case, which could be bad.

To deal with this issue, we simply ``make sure that whenever we make a recursive call with $\ell \leq n/2$, $\Delta$ must be larger than $\ell^*$.'' To do that, we simply apply duality whenever $\ell \leq n/2$ and $\Delta$ is small. This amounts to applying duality whenever $\ell$ is nearly as large as $n-k$. (Notice that this is actually quite a natural way to keep $\min \{\ell, n-\ell\}$ small.)

In this way, we are able to guarantee that whenever we have $n = k$, we must have either $\ell = 1$ or $n-\ell = 1$.  (The above description ignores the fact that we must of course maintain the invariant that $\ell < n$. In order to account for that, we also apply duality if, say, $\ell \gtrsim n-(n-k)/10$.)

But, we must still of course worry about the approximation factor achieved by this approach. In particular, recall that the recurrence relation for our approximation factor in the previous section was
\begin{equation}
\label{eq:generic-gamma}
    \gamma(n,\ell,\tau) \leq \gamma(n-\ell^*,\ell,\tau) \gamma(n,\ell^*,\tau-1)^{\ell/(n-\ell^*)}
    \; .
\end{equation}
Unsurprisingly, we obtain essentially the same recurrence here. But, in the previous section, we kept the exponent $\ell/(n-\ell^*) < C$ for some constant $C < 1$, and intuitively, this meant that the contribution of the right child to the approximation factor was somehow significantly smaller than the contribution of the left child. (And, of course, this is made formal in our analysis of the recurrence.) This is what justified taking depth parameter $\tau_R = \tau-1$ for the right child.

However, with our new approach, we can no longer guarantee that this exponent ${\ell/(n-\ell^*)}$ is always small. In particular, though we would like to apply duality whenever, say, $\ell > n/2$ in order to keep $\ell$ small and therefore keep this exponent bounded, we cannot apply duality when $\ell \lesssim k$, since this could bring us back to the case $\min\{\ell,n-\ell\} \gtrsim n-k$ that we were trying to avoid above. When $n < 2k$, we could have both $\ell > n/2$ and $\ell \lesssim k$, in which case it is not clear what to do.

In this rather frustrating corner case, we simply do not lower $\tau$ for the right child (or the left child, for that matter), i.e., we take $\tau_R = \tau$ instead of $\tau_R = \tau-1$. This might seem like it would substantially increase the running time of our algorithm, or even lead to an infinite loop! Intuitively, this is not a problem because the case when $\ell > n/2$ but $\ell \lesssim k$ is relatively rare, and  indeed, we formally show that our running time is essentially unaffected by this (by showing that the grandchildren of any such node must either have lower $\tau$ or substantially smaller $n-k$).

\subsection{The algorithm}

On input a (basis for a) lattice $\lat$ of rank $n$, an integer $\ell$ with $n > \ell \geq 1$, and an integer $\tau \geq 0$, the reduction $\A(\lat, \ell, \tau)$ behaves as follows. 

\begin{enumerate}
    \item \textbf{Duality step:} If 
    $\max\{1, (n - k) / 5\} < \ell < n/2$ or $\ell \geq n - \max\{1, (n - k) / 10\}$,
    \newline output $\lat \cap \A(\lat^*, n - \ell, \tau)^\perp$. 
    \item \textbf{Base cases: }
    \begin{enumerate} 
        \item \textbf{SVP:} 
        If $n = k$ and $\ell = 1$, use an oracle for $\gamma = \gamma(k)$-HSVP to output (the lattice generated by) a short vector in $\lat$. 
        \item \textbf{LLL:} If $\tau = 0$, run LLL on $\lat$ and output the lattice generated by the first $\ell$ vectors of the resulting basis.
    \end{enumerate}
    \item \textbf{Recursive step:}
 Otherwise, set $\ell^* = \lceil (n - k) / 20 \rceil$ and return $\A(\lat \cap \A(\lat^*, \ell^*, \tau-b)^\perp, \ell, \tau)$, where $b = 1$ if $\ell < n/2$ and $b = 0$ otherwise.
\end{enumerate}

\begin{theorem}
\label{theorem:dsp-to-svp}
    Let $\lat$ be a lattice with rank $n \geq k \geq 10$, and $\tau \geq 0$, and suppose that $1 \leq \ell \leq n - k+1$. 
    Then on input $(\lat, \ell, \tau)$, $\A$ runs in time 
    $\poly(n) \cdot 4^\tau$, and returns $\lat' \subset \lat$ with rank $\ell$ satisfying 
    \[
        \gamma' := \det{\lat'} / (\det{\lat})^{\frac \ell  n} \leq \gamma(k)^{\frac{\ell(n-\ell)}{k - 1}} \cdot 2^{\frac{\ell(n-\ell) n^2}{2^\tau}} \; .
    \]
\end{theorem}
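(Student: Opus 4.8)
The plan is to mirror the structure of the proof of \cref{thm:DSP_to_DSP}, handling the recursion tree in three stages: (1) verify the invariant $1 \le \ell \le n-k+1$ (equivalently $\Delta \ge 0$) is maintained, along with $n \ge k$; (2) bound the running time by analyzing the shape of the recursion tree; and (3) bound the approximation factor by induction on an appropriate potential, using \cref{lemma:dsp-composition} for recursive steps and \cref{corollary:duality-step} for duality steps.

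For the invariant, I would case on whether the current node is a duality step, an SVP/LLL base case, or a recursive step, exactly as sketched in the overview preceding the theorem statement. The key facts are: right children have $n_R = n$, $\ell_R = \ell^* = \lceil(n-k)/20\rceil \le (n-k)+1$, so $\Delta_R \ge 0$ trivially; left children have $n_L = n-\ell^*$, $\ell_L = \ell$, and if $\ell \ge n/2$ then $\Delta_L = \Delta$, while if $\ell < n/2$ then $\Delta_L = \Delta - \ell^*$ — and here one uses that the duality step fires whenever $(n-k)/5 < \ell < n/2$, which (after checking the arithmetic) forces $\ell \le (n-k)/5$ before any recursive step with $\ell < n/2$, so that $\ell^* = \lceil(n-k)/20\rceil$ is small enough relative to $\Delta$ to keep $\Delta_L \ge 0$. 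One also checks the corner condition $\ell \ge n - \max\{1,(n-k)/10\}$ keeps $\ell < n$ in all children. A short separate check confirms that when $n = k$ the only reachable parameter is $\ell = 1$ (anything else has been duality-stepped or is excluded by the invariant), so the SVP base case is always well-defined.

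For the running time, the recursion tree has recursive nodes, duality nodes, and leaves. Each right child has $n_R = n$ but strictly smaller $n_R - k$... no — right children keep $n-k$ fixed, so the argument from \cref{thm:DSP_to_DSP} that ``left children shrink $n-k$ by a factor $19/20$'' does not by itself bound depth, because of the corner case where $\tau_R = \tau$ (the $b=0$ case, $\ell \ge n/2$). The hard part of the running time analysis — and I expect this to be the main obstacle — is showing that chains of $b=0$ recursive steps cannot be long: one must show, as the overview hints, that the grandchildren of any $b=0$ recursive node either have strictly smaller $\tau$ or substantially smaller $n-k$. Concretely, a $b=0$ node has $\ell \ge n/2$ and (since it is not duality-stepped) $\ell < n - \max\{1,(n-k)/10\}$, so $n-\ell > (n-k)/10$, i.e. $\min\{\ell,n-\ell\} = n-\ell$ is comparable to $n-k$; its left child has $n_L = n - \ell^*$, and one argues that either $\ell \ge n_L/2$ still (so $\Delta$ is preserved but then a duality step is triggered soon, decreasing... ) or $\ell < n_L/2$ (so the next step has $b=1$ and lowers $\tau$). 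Pinning down this dichotomy so that every root-to-leaf path has at most $O(\log n)$ consecutive $b=0$ steps between $\tau$-decrements gives a tree of height $O(\log n) + 2\tau$ or so, hence size $\poly(n)\cdot 4^\tau$, matching the claimed bound. (The factor $4^\tau$ rather than $2^\tau$ is presumably the slack absorbed by this corner-case bookkeeping.)

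For the approximation factor, I would set $f(n,\ell,\tau) := \gamma(k)^{\ell(n-\ell)/(k-1)}\cdot 2^{\ell(n-\ell)n^2/2^\tau}$ and prove $\gamma'(n,\ell,\tau) \le f(n,\ell,\tau)$ by induction on, say, the pair $(n, \tau)$ (or on tree depth). Duality steps preserve both $\gamma'$ (by \cref{corollary:duality-step} / \cref{lemma:duality-map}) and $f$ (since $\ell(n-\ell)$ is symmetric under $\ell \mapsto n-\ell$), so they are free. Base cases: LLL gives $\gamma'(n,\ell,0)\le 2^{\ell(n-\ell)}\le f(n,\ell,0)$; the HSVP oracle gives $\gamma'(k,1,\tau) \le \gamma(k) = \gamma(k)^{1\cdot(k-1)/(k-1)} \le f(k,1,\tau)$. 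For a recursive step, \cref{lemma:dsp-composition} yields $\gamma'(n,\ell,\tau) \le \gamma'(n-\ell^*,\ell,\tau)\cdot\gamma'(n,\ell^*,\tau-b)^{\ell/(n-\ell^*)}$, and plugging in the inductive bound reduces everything to showing the exponent inequality
\[
\frac{\ell(n-\ell-\ell^*)}{k-1} + \frac{\ell}{n-\ell^*}\cdot\frac{\ell^*(n-\ell^*)}{k-1} \le \frac{\ell(n-\ell)}{k-1}
\]
for the $\gamma(k)$-part (which holds with equality, since $\ell(n-\ell-\ell^*) + \ell\ell^* = \ell(n-\ell)$ — note the exponent $\ell/(n-\ell^*)$ exactly cancels the $(n-\ell^*)$), and an analogous but strict inequality for the $2^{(\cdot)/2^\tau}$-part. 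The latter is where the choice $b=1$ when $\ell<n/2$ matters: there the exponent $\ell/(n-\ell^*) < 1/2 \cdot n/(n-\ell^*)$, small enough that $\sqrt{2}$-type losses from $\tau-1$ vs $\tau$ are absorbed; when $b=0$ we have $\ell\ge n/2$ but then $\tau$ is not decremented, so no loss is incurred there either. Carefully checking that the ``bad'' combination (large exponent \emph{and} decremented $\tau$) never occurs — which is exactly the content of the algorithm's duality and $b$-selection rules — is the crux, and it is the same phenomenon as the running-time corner case. Assembling the pieces gives $\gamma' \le f(n,\ell,\tau)$, completing the proof. \qed
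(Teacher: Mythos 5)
Your proposal mirrors the paper's three-part structure (invariant, running time, approximation factor), and two of the three parts are essentially right. The invariant argument matches Lemma~\ref{lemma:ell-invariant}, and the approximation-factor induction matches the paper's, including the key exact cancellation $\ell(n-\ell^*-\ell)+\ell\ell^*=\ell(n-\ell)$ in the $\gamma(k)$-exponent and the case split on $b$. (The paper's actual calculation writes $\alpha := 2^\tau\log(\gamma'/f)$, factors the $b=0$ case as $\alpha=\ell\ell^*(2n-\ell^*)(\ell+\ell^*-n)$, which is negative because the duality guard $\ell<n-\max\{1,(n-k)/10\}\le n-\ell^*$ gives $\ell+\ell^*<n$; the $b=1$ case uses $\ell\le(n-k)/5\le n/5$. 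You gesture at this but would need to carry out the algebra.)

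The genuine gap is in the running-time bound, which you flag as "the hard part" but then leave unresolved — your sentence even trails off mid-thought ("so $\Delta$ is preserved but then a duality step is triggered soon, decreasing\ldots"). You focus on the \emph{left} child of a $b=0$ node, but that is not where the argument lives. The paper's key observation is about the \emph{right} child: a $b=0$ recursive step has $\ell\ge n/2$, but its right child has $\ell_R=\ell^*=\lceil(n-k)/20\rceil<n_R/2$ (and one checks it triggers no duality step and is not a base case), so the right child is automatically a $b=1$ recursive step. Thus the grandchildren of any $b=0$ node either come from the left child (which shrinks $n-k$ by a factor $\le 19/20$) or from the right child's $b=1$ recursion (which decrements $\tau$). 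The paper packages this in a potential $\Phi:=\tau+20\log(n-k+1)$: $\Phi$ never increases, left children drop $\Phi$ by at least one, $b=1$ right children drop $\Phi$ by one, and $b=0$ right children preserve $\Phi$ but their children (being children of a $b=1$ node or base cases) drop it. Hence every grandchild has $\Phi'\le\Phi-1$, the recursion-step tree has height $\le 2(\tau+O(\log n))$, and the total number of calls is $\poly(n)\cdot 4^\tau$. Without this "right child of a $b=0$ node is a $b=1$ node" observation (or the potential that encodes it), your bound on the tree size does not go through, so you should not expect the running-time claim to assemble itself from the dichotomy you sketch.
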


\subsection{\texorpdfstring{Proof of \cref{theorem:dsp-to-svp}}{Proof of the theorem}}
First, we must check that all calls to $\A$ have $n \geq k$ and $n > \ell \geq 1$. The initial call satisfies these conditions by assumption, so we need only check the children of a call that does satisfy these conditions. 
We can ignore the duality step, since interchanging $\ell$ with $n - \ell$ leaves the two conditions unchanged. 
The right child of a recursive case clearly satisfies the conditions, since $1 \leq \ceil{(n-k)/20} < n $ (recalling that $\ell_R = \ceil{(n-k)/20}$ and $n_R = n$). 
So, it only remains to consider the left child of a recursive case, which has  $n_L = n - \lceil (n - k) / 20 \rceil$.
Clearly $n_L \geq k$. Moreover, the duality step ensures that $\ell_L = \ell < n - \max\{1, (n - k)/10\} \leq n_L$, as needed. (I.e., if $\ell$ were larger than this, we would have applied duality. So, we never reach a recursive call with $\ell \geq n-\max\{1, (n - k)/10\}$.) 

Next, we must verify that any recursive call with $n = k$ must have $\ell = 1$ or $\ell = n - 1$. This follows as an immediate corollary of the following lemma, which formalizes the discussion about $\Delta$ above.
\begin{lemma}
    \label{lemma:ell-invariant}
    All recursive calls satisfy the \emph{invariant}
    $\min\{\ell, n - \ell\} \leq n - k+1$.
\end{lemma}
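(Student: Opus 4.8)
The plan is to prove Lemma~\ref{lemma:ell-invariant} by induction on the recursion tree, showing that the quantity $\Delta := n - k + 1 - \min\{\ell, n - \ell\}$ stays nonnegative for every recursive call. The base of the induction is the initial call, where $\Delta \geq 0$ holds by the hypothesis $\ell \leq n - k + 1$ of \cref{theorem:dsp-to-svp} (note $\min\{\ell,n-\ell\}\le \ell \le n-k+1$). For the inductive step, I would assume a call $\A(\lat,\ell,\tau)$ satisfies $\Delta \geq 0$ (and the already-established invariants $n \geq k$, $1 \leq \ell < n$) and check that each child it spawns also satisfies $\Delta_{\text{child}} \geq 0$. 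Since the duality step replaces $\ell$ with $n-\ell$, it leaves $\min\{\ell, n-\ell\}$ and hence $\Delta$ unchanged, so duality nodes are free and I only need to handle the children produced by a recursive step.

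For a recursive step we have $\ell^* = \lceil (n-k)/20 \rceil$, and I would split into the cases governed by whether $\ell < n/2$. The right child has $n_R = n$ and $\ell_R = \ell^*$. Because a recursive step is only reached when no duality step fired, and the first duality condition would have triggered if $\max\{1,(n-k)/5\} < \ell < n/2$, in a recursive step with $\ell < n/2$ we must have $\ell \leq \max\{1, (n-k)/5\}$; in any case $\ell^* = \lceil(n-k)/20\rceil \le (n-k)/20 + 1$ is tiny compared to $n - k + 1$, so $\min\{\ell^*, n - \ell^*\} = \ell^* \le n - k + 1$ and $\Delta_R \geq 0$ trivially (here I should also note $\ell^* < n/2$ since $n \ge k \ge 10$, so $\min\{\ell_R, n_R-\ell_R\}=\ell^*$). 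The left child has $n_L = n - \ell^*$, $\ell_L = \ell$, and I would verify $\Delta_L \ge \Delta \ge 0$ case by case. If $\ell \geq n/2$: since $\ell < n - \max\{1,(n-k)/10\} \leq n - \ell^*$ (the second duality guard having not fired), the left child still has $\ell_L = \ell \geq n/2 > n_L/2$, so $\min\{\ell_L, n_L - \ell_L\} = n_L - \ell = n - \ell^* - \ell$, giving $\Delta_L = (n-\ell^*) - k + 1 - (n - \ell^* - \ell) = \ell - k + 1$; comparing with $\Delta = n - k + 1 - (n-\ell) = \ell - k + 1$ shows $\Delta_L = \Delta \ge 0$. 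If $\ell < n/2$: as noted, the first duality guard forces $\ell \leq \max\{1, (n-k)/5\}$, so $\ell \le (n-k)/5$ (or $\ell=1$, handled separately), while $\ell^* = \lceil(n-k)/20\rceil \le (n-k)/20 + 1$; then $\ell_L = \ell < n_L/2$ still holds, so $\min\{\ell_L, n_L-\ell_L\} = \ell$ and $\Delta_L = (n - \ell^*) - k + 1 - \ell = \Delta - \ell^*$, which is nonnegative precisely because in this branch $\Delta = n - k + 1 - \ell \ge n - k + 1 - (n-k)/5 = (4(n-k)/5) + 1 \ge \lceil(n-k)/20\rceil = \ell^*$ (and if $\ell = 1$ this is even clearer). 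This is exactly the place the duality guards were engineered to make work, so the key is simply to bookkeep the thresholds carefully.

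The corollary that any recursive call with $n = k$ has $\ell \in \{1, n-1\}$ then follows instantly: the invariant gives $\min\{\ell, k - \ell\} \leq 1$, and since $1 \le \ell \le k-1$ this forces $\ell = 1$ or $\ell = k - 1$.

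I expect the main obstacle to be the bookkeeping around the corner cases in the $\ell < n/2$ branch of the left child — specifically making sure that whenever a recursive step is actually taken (as opposed to a duality step), the precondition $\Delta \geq \ell^*$ genuinely holds, which relies on the $\max\{1, (n-k)/5\}$ threshold in the first duality guard being matched to the $\lceil (n-k)/20 \rceil$ choice of $\ell^*$, together with correctly tracking the edge behavior when $n - k$ is small (so that $\max\{1,(n-k)/5\}=1$ and $\lceil(n-k)/20\rceil = 1$) and confirming the "corner case where $\ell < n/2$ but $\ell_L \geq n_L/2$" flagged in the surrounding text genuinely never arises along a recursive step. Apart from this constant-chasing, the argument is a routine structural induction on the recursion tree.
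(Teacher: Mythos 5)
Your proof is correct and follows essentially the same route as the paper's: induction on the recursion tree using the quantity $\Delta = n-k+1-\min\{\ell,n-\ell\}$, observing that duality preserves $\Delta$, that right children trivially have $\ell^* \le n-k+1$, and that for left children the two duality guards ensure either $\Delta_L = \Delta$ (when $\ell\ge n/2$) or $\Delta_L = \Delta - \ell^* \ge 0$ (when $\ell < n/2$, thanks to $\ell\le\max\{1,(n-k)/5\}$). One small note: the paper's displayed computation for the $\ell\ge n/2$ left child reads ``$=\Delta+\ell^*$'' where the arithmetic actually gives $\Delta_L=\Delta$ (as you correctly derive); this appears to be a typo in the paper's proof and does not affect the conclusion $\Delta_L\ge 0$.
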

\begin{proof} 
As in \cref{eq:hat_ell_explanation} above, we define
\[
    \Delta(n,\ell) := n-k - \min\{\ell,n-\ell\} + 1
    \; ,
\]
and we note that it suffices to prove that $\Delta$ is non-negative for all recursive calls. 
     The initial call to $\A$ satisfies the invariant by assumption, so we need only check that the children of a call with non-negative $\Delta(n,\ell)$ will have $\Delta(n_L, \ell_L) \geq 0$ and $\Delta(n_R, \ell_R) \geq 0$. We can again ignore the duality step, because interchanging $\ell$ and $n - \ell$ clearly does not affect $\Delta$.
     The right children of recursion steps trivially satisfy $\Delta(n_R, \ell_R) \geq 0$, by our choice of $\ell_R = \ell^* = \ceil{(n-k)/20} \leq n-k+1 = n_R-k$+1. Furthermore, if $\ell \geq n/2$, then the left child of a recursion step has $n_L = n-\ell^*$ and $\ell_L = \ell$ and therefore must satisfy
     \[
        \Delta(n_L,\ell_L) = n_L-k - (n_L-\ell) +1 = n-\ell^*-k - (n - \ell^* - \ell) + 1 = \Delta + \ell^*
     \]
     So, if $\ell \geq n/2$ for a recursion step and the recursion step has non-negative $\Delta$, then so will its children.
     
     It remains to show that if a recursion step has $\ell < n/2$, then its left child still satisfies the invariant.
     To see this, notice that if $\ell \leq n/2$, $\Delta$ is actually quite large. Indeed, because of the duality step, we only reach a recursion step when with $\ell \leq n/2$ if we in fact have the stronger condition that $\ell \leq \max\{(n - k)/5,1\}$. This is enough to infer that the left child has non-negative $\Delta$, since 
     $n_L := n-\ceil{(n-k)/20} \leq n - \ceil{(n_L-k)/20}$ and 
     \[
        \ell_L := \ell \leq \max\{(n-k)/5,1\} \leq (n_L + (n_L-k)/20-k)/5 + 1 \leq n_L - k + 1
        \; ,
     \]
     as needed.
\end{proof}

Next we would like to argue that the total number of recursive calls is appropriately bounded.

\begin{lemma} 
    The total number of recursive calls made by the algorithm is bounded by $4^\tau \cdot \poly(n)$.
\end{lemma}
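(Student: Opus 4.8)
The plan is to bound the size of the recursion tree by a careful case analysis, showing that along any root-to-leaf path the quantity $\tau$ decreases "often enough" relative to the number of nodes, after accounting for the rare nodes where $\tau$ does not decrease. Recall that each recursive step has a right child with $n_R = n$, $\ell_R = \ell^* = \lceil (n-k)/20 \rceil$, and a left child with $n_L = n - \ell^*$, so that $n_L - k \leq \tfrac{19}{20}(n-k)$. Duality steps leave $n$, $k$, and $\tau$ unchanged, and there is at most one duality step between any two recursive steps (a duality step cannot be immediately followed by another, since $\Delta$ and the duality conditions are unchanged by $\ell \mapsto n - \ell$), so duality steps inflate the tree size only by a $\poly(n)$ factor and can be ignored for the purpose of counting recursive steps.

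First I would observe that if every recursive step decreased $\tau$ by $1$ on both children, the tree would have height $O(\log n) + \tau$ and size $2^\tau \cdot \poly(n)$; the only difficulty is the "frustrating corner case" in the recursive step, where $b = 0$ because $\ell \geq n/2$, so \emph{neither} child has smaller $\tau$. Call a recursive step \emph{bad} if $b = 0$ (i.e.\ $\ell \geq n/2$) and \emph{good} otherwise. The key claim to establish is that bad steps cannot chain indefinitely: I would show that any child (or grandchild) of a bad step either has strictly smaller $n - k$ by a constant factor, or has strictly smaller $\tau$. Concretely, for a bad step we have $\ell \geq n/2$; I would examine the left child, which has $n_L = n - \ell^* $ and $\ell_L = \ell \geq n/2 > n_L/2$, so the left child is either a duality step or again bad — and crucially, a bad step only occurs when $\ell < n$ is not too large, while a long chain of bad left-children keeps $\ell$ fixed at $\ell_L = \ell$ while $n_L$ shrinks geometrically, so within $O(\log n)$ steps we force $\ell \geq n - \max\{1,(n-k)/10\}$, triggering a duality step that flips to $\ell' = n - \ell \leq \max\{1,(n-k)/10\}$ — a \emph{good} regime. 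For the right child of a bad step, $n_R = n$ but $\ell_R = \ell^* = \lceil(n-k)/20\rceil < n/2$ (since $n \geq k \geq 10$ forces $\ell^* \ll n$), so the right child of a bad node is good (or a duality step leading to a good node). Hence a bad node's right subtree immediately behaves well, and its left spine of bad nodes has length $O(\log(n-k)) = O(\log n)$ before it is forced good.

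Putting this together, I would argue as follows. Associate to each node the potential $\Phi = c_1 \log(n - k + 2) + c_2 \tau$ for suitable constants; good recursive steps and duality steps decrease $\Phi$ appropriately on at least one branch, and the analysis above shows that any maximal chain of bad steps has bounded length $O(\log n)$ and is terminated by a node that does make progress. Thus every root-to-leaf path has length $O(\log n) + O(\tau)$, but because bad steps can \emph{duplicate} work (both children at the same $\tau$), the number of nodes at a given "$\tau$-level" can grow — this is where I would be careful. Since along the left spine $n - k$ shrinks geometrically at bad steps too, each $\tau$-level contributes only $\poly(n)$ many extra nodes per bad chain, giving a total node count of $4^\tau \cdot \poly(n)$ rather than $2^\tau \cdot \poly(n)$ (the base $4$ rather than $2$ is the price of the bad steps potentially doubling the tree at their own $\tau$-level before progress resumes). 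Summing over the $O(\tau)$ levels and the $\poly(n)$ branching within each level gives the claimed bound.

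**The main obstacle** I anticipate is controlling the branching \emph{within} a fixed value of $\tau$: a bad step produces two children both at the same $\tau$, so naively the subtree rooted at a bad node at depth-in-$\tau$ equal to $j$ could be as large as the whole remaining tree. The resolution — which I would need to make precise — is that one of the two children of a bad node is always good (the right child, since $\ell^* < n/2$), and the other (left) child, while possibly bad, has strictly smaller $n - k$ after at most $O(\log n)$ further bad steps; so the bad nodes at a fixed $\tau$ form a structure of total size $\poly(n)$, not exponential. Making this counting airtight — essentially, showing that "bad subtrees that don't decrease $\tau$" have only polynomially many nodes — is the crux, and I expect it to require a secondary induction on $n - k$ nested inside the induction on $\tau$, with the factor $4^\tau$ (as opposed to $2^\tau$) absorbing the at-most-doubling caused by bad steps.
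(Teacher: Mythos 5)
Your approach is fundamentally the same as the paper's — a potential of the form $\Phi = c_1\log(n-k+2) + c_2\tau$, together with the key observation that the right child of a ``bad'' recursion step (one with $\ell \geq n/2$, hence $b=0$) is either a base case or a ``good'' recursion step (since $\ell_R = \lceil(n-k)/20\rceil < n_R/2$). But the execution is muddled in two ways that are worth correcting.

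First, you write that the left child of a bad node ``has strictly smaller $n-k$ \emph{after at most $O(\log n)$ further bad steps}'' — this is wrong, or at least badly overstated. The left child of \emph{any} recursion step has $n_L - k = (n-k) - \lceil(n-k)/20\rceil \leq \tfrac{19}{20}(n-k)$, so $n-k$ shrinks by a constant factor in a single step, regardless of whether the node is good or bad. With the paper's choice $\Phi = \tau + 20\log(n-k+1)$ one can check this already forces $\Phi_L \leq \Phi - 1$. Your extended digression about ``long chains of bad left-children eventually triggering a duality step'' is therefore unnecessary and is a red herring: the left branch makes progress immediately.

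Second, and more substantively, your ``main obstacle'' paragraph — worrying about the branching within a fixed $\tau$-value and proposing a nested induction on $n-k$ inside an induction on $\tau$ — is solving a problem you don't have. Once you observe that (a) the potential never increases along any edge, (b) the left child and (for $b=1$) the right child have potential $\leq \Phi - 1$, and (c) the right child when $b=0$ has the same potential but is necessarily good, so \emph{its} children have potential $\leq \Phi - 1$, you have shown that every \emph{grandchild} of a recursion step has potential at least $1$ smaller. That immediately bounds the length of any path of recursion steps by $2\Phi_0 = 2\tau + O(\log n)$, and since the tree is binary its total size is at most $2 \cdot 2^{2\tau + O(\log n)} = 4^\tau \cdot \poly(n)$. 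There is no need to count ``bad subtrees at a fixed $\tau$-level'' separately; the height bound on the binary tree does all the work, and the factor $4^\tau$ (rather than $2^\tau$) simply reflects that progress is made every \emph{two} recursion steps rather than every one. Your instinct that this was the price of bad steps is right, but the clean way to cash it in is via grandchildren, not via per-$\tau$-level counting.
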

\begin{proof}
Notice that it suffices to argue that the total number of recursion steps in any path in the tree is bounded by $2 \tau + O(\log n)$. (Indeed, since every duality step is followed by a recursion step, it suffices to prove that the number of recursion steps and base cases is bounded by $4^\tau \cdot \poly(n)$. Of course, the size of a binary tree is at most $2 \cdot 2^h$, where $h$ is the height of the tree, so the total number of base cases and recursion steps is bounded by $2 \cdot 2^h$, where $h$ is the length of the longest path of recursion steps.)

Define the \emph{potential} $\Phi$ for a recursive call with parameter $n$, $\ell$, and $\tau$ as
\[
    \Phi = \tau + 20 \log(n - k + 1) \; .
\]
Notice that the potential of a child is always at most the potential of its parent. 
Thus it suffices to argue that every recursion step's grandchildren have potential $\Phi' \leq \Phi - 1$. Indeed, if this holds, then a simple induction argument shows the desired bound on the number of recursion steps in any path.

 It is easy to check that the left children of recursion steps have potential $\Phi_L \leq \Phi-1$ . Furthermore, when $\ell < n/2$, the right children of recursion steps have $\Phi_R \leq \Phi-1$ (since we lower $\tau$ by one). So, when $\ell < n/2$, both children already have smaller potential (and, since the potential never increases, the same is true of grandchildren). However, right children of a recursion step with $\ell \geq n/2$ have $\Phi_R = \Phi$. So, to finish the proof, we must show that all children of a right child of such a recursion step have potential $\Phi' \leq \Phi_R-1 = \Phi-1$. But, to see this, it suffices to notice that the right child of a recursion step with $\ell \leq n/2$ is either a base case (in which case there are simply no grandchildren to worry about) \emph{or} is itself a recursive step with has $\ell_R = \ell^*= \ceil{(n-k)/20} < n/2 = n_R/2$. But, we have already seen that nodes with $\ell < n/2$ have children with strictly smaller potential. The result follows.
\end{proof}

Finally, we bound the approximation factor. Let $\gamma'(n, \ell, \tau)$ be the (worst-case) approximation factor 
achieved by the reduction $\A(\lat, \ell, \tau)$ when its input lattice has rank $n$. It suffices to show that 
\[
    \gamma'(n, \ell, \tau) \leq f(n, \ell, \tau) :=  \gamma(k)^{\frac{\ell(n-\ell)}{k - 1}} \cdot  2^{\frac{\ell (n - \ell) n^2 }{ 2^\tau}} \;.
\]
    
The proof is by induction on $n$ and $\tau$. The base cases are easy to check using the HSVP and LLL guarantees 
$\gamma'(k, 1, \tau) \leq \gamma(k)$,
and  $\gamma'(n,\ell,0) \leq 2^{\ell (n-\ell)}$. So, we assume that the result holds for all $(n',\tau') < (n,\tau)$ (under the lexicographic order).
Notice that we can ignore duality steps, as both $\gamma$ and $f$ are unchanged if we replace $\ell$ with $n-\ell$.

By \cref{lemma:dsp-composition}, in a recursive step, $\gamma$ satisfies the recurrence
\[
    \gamma'(n, \ell, \tau) \leq \gamma'(n - \ell^*, \ell, \tau) \cdot \gamma'(n, \ell^*, \tau-b)^{\frac{\ell }{n - \ell^*}} \;.
\] 
Using the recurrence and the induction hypothesis we get
\[
    \gamma'(n, \ell, \tau) \leq \gamma(k)^{\frac{\ell (n - \ell)}{k - 1}} \cdot 2^{ \frac{\ell (n - \ell^* - \ell) (n - \ell^*)^2}{2^\tau} + \frac{\ell \ell^* n^2} { 2^{\tau-b}}} \; .
\]
Therefore, we have
\begin{align*}
    \alpha &:= 2^\tau \cdot \log(\gamma'(n,\ell,\tau)/f(n,\ell,\tau)) \\
        &\leq \ell(n-\ell^*-\ell)(n-\ell^*)^2 + 2^b\cdot \ell \ell^* n^2 - \ell(n-\ell) n^2\\
        &= \ell \ell^* \cdot \big(2 \ell  n + 3\ell^* n -(3-2^b) n^2- (\ell^*)^2 - \ell \ell^* \big) \; .
\end{align*}
When $b=0$, we therefore have
\[
\alpha = \ell \ell^* (2n-\ell^*)(\ell + \ell^* -n) 
\; ,
\]
which is negative since we chose $\ell^* = \ceil{(n-k)/20}$ and by the duality step we have $\ell < n-\max\{1,(n-k)/10\} \leq n-\ell^*$. On the other hand, when $b = 1$, we have $\ell \leq n/2$ by definition, but notice that by the duality step this actually implies that $\ell \leq \max\{1,(n-k)/5\} \leq n/5$ and $\ell^* = \ceil{(n-k)/20} \leq n/20+1$, so that
\[
    \alpha < \ell \ell^* n (2\ell + 3\ell^* - n) < \ell \ell^* n (2n/5 + 3n/20 + 3 -n) = \ell \ell^* n^2( -9n/20 + 3)
    \;, 
\]
which is clearly negative for $n \geq 10$. 

Therefore, $\alpha \leq 0$ in both cases, i.e., $\gamma'(n,\ell,\tau) \leq f(n,\ell,\tau)$, as needed. This completes the proof of \cref{theorem:dsp-to-svp}.

\section{Representing the lattices in our reductions}
\label{section:representation}

So far, we have described our reductions as operating on abstract lattices. To make them fully specified, and to prove concrete bounds on their running times, we must choose a 
way of representing lattices and performing operations on them.
We will deliberately avoid being completely formal by what we mean by such a \emph{representation} for the lattices in our reductions.
Our reductions involve two fundamental operations; duality, which maps $\lat$ to its dual $\lat^*$, and intersection, which maps $\lat$ to $\lat \cap \M^\perp$, where $\M$ is a relatively dense sublattice of the dual $\lat^*$. So, intuitively, a representation (1) assigns each lattice $\lat \subset \Q^d$ to a bitstring $[\lat] \in \{0, 1\}^*$; and (2) defines how to compute $[\lat^*]$ and $[\lat \cap \M^\perp]$ given $[\lat]$ and $[\M]$. 
(In fact, we are being a bit misleading here in two ways. First, the representation $[\lat]$ of $\lat$ might not be unique. Second, in our second example representation, the algorithm will need a bit more information than just $[\lat]$ and $[\M]$ in order to compute $[\lat \cap \M^\perp]$.)

We now describe two specific representations for concreteness. We will pay particular attention to the subtle question of how the representation affects the running time, and in particular whether the size of the representation blows up as we descend the recursion tree. (Of course, our reductions work with any representation, and different representations will have different benefits. We view the two that we chose below merely as proofs of concept.)

In the first, a lattice $\lat$ is represented by (any) LLL-reduced basis $\basis$ that generates it (and $\basis$ is in turn represented by the list of its numerators and denominators, written in binary). It is easy to see that LLL-reduced bases for $\lat^*$ and $\lat \cap \M^\perp$ can be computed in polynomial time from LLL-reduced bases for $\lat$ and $\M$. (See \cref{app:computing_stuff}.) Using this representation, we show that all of the lattices encountered by our reduction can be represented with bitlength bounded by $n^{O(\log n)} \cdot b_0$ where $b_0$ is the bitlength of the input lattice. This yields $n^{O(\log n)}$-time reductions using a very simple and natural representation.

The second representation is approximate and parameterized by a global bound $m'$ on the representation bitlength. In it, a rank-$n$ lattice $\lat \subset \Q^d$ with basis $\basis$ is represented by some other basis $\basis' \in \Q^{d \times n}$ (not necessarily a basis for $\lat$!) written as a list of numerators and denominators in binary, with the following two properties. First, $\basis'$ has bitlength at most $m'$. Second, defining $\lat' = \lat(\basis')$, we have that for any $\gamma \leq 2^{n^2}$ and any sublattice $\lat'' = \lat(\basis' \mathbf{Z})$ of $\lat'$ with $\det(\lat'') \leq \gamma \det(\lat')^{\ell/n}$, the corresponding sublattice $\lat''' = \lat(\basis \mathbf{Z})$ of $\lat$ satisfies $\det(\lat''') \leq \gamma' \det(\lat)^{\ell / n}$, for $\gamma' \leq (1 + 2^{\poly(n) - m'}) \cdot \gamma $. The bound $m'$ on the bitlength lets us achieve a polynomial running time, and the additional small factors in the approximation are insignificant for $m'$ a sufficiently large polynomial in $n$. (See \cref{section:bitlength} for discussion of this representation, and particularly \cref{thm:rounded_basis}.)

We elaborate below.

\subsection{Bounding the size of the LLL-reduced basis representation by \texorpdfstring{$n^{O(\log n)}$}{}}
\label{section:n-log-n}

\newcommand{\bC}{\mathbf{C}}
\newcommand{\bA}{\mathbf{A}}

We first simply consider the case in which our lattices are always represented by LLL-reduced bases. In \cref{app:computing_stuff}, we note that we can efficiently compute duals and intersections using this representation. The rest of this section is therefore devoted to bounding the growth of the bitlength of such representations as the algorithm continues.

Recall that each of our reductions can be described as a tree of lattices. The root node is the input lattice, and each path from the root to a node in the tree corresponds to a sequence of duality operations $\lat \mapsto \lat^*$ and intersection operations $\lat \mapsto \lat \cap \M^\perp$, where $\M$ is a relatively dense sublattice of the dual $\lat^*$. In particular, our approximation factor bounds always imply  
$\det(\M) \leq 2^{n^2 / 2} \det(\lat^*)^{\ell^*/n}$,
where $\ell^* := \rank(\M)$. 
We may assume without loss of generality that the input lattice is full-rank (i.e., $d = n$),
and is represented by a basis of bitlength $b_0$.
Let $\lat$ be a lattice in the reduction call tree such that the path from the root lattice to $\lat$ has $I$ intersection operations and $D$ duality operations. 
Then we claim that any LLL-reduced basis for $\lat$ has bitlength at most $(4 n)^D \cdot (b_0 + I \cdot n^2/2)$.  Using this claim, it is immediate that any LLL-reduced basis for any lattice in our HSVP to HSVP reduction with $\tau = O(\log n)$, in our DSP to DSP reduction with $\tau = O(\log n)$ and $\ell^* = \Theta(n - k)$, or in our DSP to HSVP reduction
with $\tau = O(\log n)$, has bitlength at most $n^{O(\log(n))}$, whenever the input lattice has bitlength polynomial in $n$. 
Indeed, the latter two reductions have call trees of height $O(\log n)$, which immediately implies $D, I \leq O(\log n)$, and it is easy to see that the former reduction has $D \leq \tau \leq O(\log n)$ and $I \leq n$; plugging these bounds into the claim yields the desired result.
(The above bounds on the parameters in the reductions are just those we already needed to prove bounds on the number of oracle calls each reduction makes.)

In the remainder of this subsection, we prove the claim that the bitlength of a LLL-reduced basis for a lattice encountered by one of our reductions is bounded by $(4 n)^D \cdot (b_0 + I \cdot n^2/2)$. Given a lattice $\lat \subset \Q^n$ of rank $\ell$, we define
\[
    q(\lat) := \min \{q > 0 : \lat \subseteq \Z^n / q\}
\]
and 
\[
    \Delta(\lat) := \det(\lat)^{1/\ell} \;.
\]
Finally, we define 
\[
    \beta(\lat) := \log \max \{\Delta(\lat), q(\lat)\} \;.
\]

To show that $\lat$ has bitlength at most $n^{O(\log n)}$, it suffices to show that $\beta(\lat) \leq n^{O(\log n)}$. This implication follows immediately from the following fact: any LLL-reduced basis $\basis$ for a lattice $\lat \subset \Q^n$ has bit complexity $\poly(n, \log q, \log(\Delta)) = \poly(n, \beta)$. Indeed, recall that any LLL-reduced basis for a lattice $\lat' \subset \Z^n$ has entries bounded by $2^{\poly(n)} \cdot \det(\lat')$.  And, notice that $q \basis$ is a LLL-reduced basis for $\lat' := q \lat \subset \Z^n$. Hence $q \basis$ has entries bounded by $2^{\poly(n)} \cdot \det(\lat') = 2^{\poly(n)} \cdot q^n \cdot \det(\lat)$, and $\basis$ can therefore be represented using $\poly(n, \log q, \log \det(\lat)) = \poly(n, \beta)$ bits. (See \cref{lem:LLL_bounded_entries} for a formal statement.)

To bound $\beta$, we proceed inductively. 
First, consider how intersection with $\M$ affects $\beta$.
Since $\M$ is a relatively dense sublattice of $\lat$, we have
\[
    \det(\lat') = \det(\lat) \det(\M) \leq 2^{\ell^2/2} \det(\lat)^{(n - \ell^*) / n}
\]
which implies
\[
    \Delta(\lat') \leq 2^{\ell^2/2} \cdot \Delta(\lat). 
\]
 Moreover, $q(\lat') \leq q(\lat)$ simply because $\lat' \subseteq \lat \subseteq \Z^n / q(\lat)$. It follows that
 \[\beta(\lat') 
    \leq \beta(\lat) + \ell^2/2 
    \leq \beta(\lat) + n^2/2 \; .
 \]

Next, consider duality. To bound $\Delta(\lat^*)$, notice that since $\lat \subseteq \Z^n / q$, we have\footnote{This is slightly more subtle than it might seem. Here, we are using the not-entirely-obvious fact that no sublattice of $\Z^d$ has determinant less than one. This follows from the fact that the square of the determinant of a sublattice of $\Z^d$ must be an integer.} 
\[
\det(\lat) \geq \det(\Z^n / q(\lat)) = 1/q(\lat)^n \;.
\]
Thus $\det(\lat^*) = 1/\det(\lat) \leq q(\lat)^n$, and $\Delta(\lat^*) \leq q(\lat)$. To bound $q(\lat^*)$, first recall that given a basis $\basis$ for $\lat$, the dual basis $\basis^* := \basis (\basis^T \basis)^{-1}$ is a basis for $\lat^*$. Let $\bC := q(\lat) \basis \in \Z^{n \times \ell}$, and write $\basis^* = q(\lat) \bC (\bC^T \bC)^{-1}$. Recall that the inverse of an invertible matrix $\bA \in \Z^{n \times n}$ can be written as $\vec{D} / \det(\bA)$, where the entries of $\vec{D}$ are polynomials in the entries of $\bA$ with integer coefficients. In particular, $\bA^{-1} \in \Z^{n \times n} / \det(\bA)$. It follows that 
\[
    q(\lat^*) \leq \det(\bC^T \bC) = \det(\bC)^2 = q(\lat)^{2 \ell} \det(\lat)^2 = (q(\lat) \Delta(\lat))^{2 \ell} \leq (q(\lat) \Delta(\lat))^{2 n} \;.
\]
Combining our bounds on $\Delta(\lat^*)$ and $q(\lat^*)$, we see that $\beta(\lat^*) \leq 4 n \cdot \beta(\lat)$. 

A simple inductive argument now shows that for $\lat$ as in the claim,
\[
    \beta(\lat) \leq (4 n)^D \cdot (b_0 + I \cdot n^2/2) \;,
\]
as required.

\subsection{The approximate representation}
\label{section:rep-recurrences}

\newcommand{\cM}{\mathcal{M}}

In this section, we show how to run our reductions using a fixed bound $m' \geq 10n^5$ on the bitlength of the bases that we work with throughout, with a small loss in approximation factors. To that end, in \cref{thm:rounded_basis}, we show an efficient ``rounding procedure'' that converts any basis $\basis$ for a lattice $\lat$ into a new basis $\basis'$ for a new lattice $\lat'$ such that (1) the bitlength of $\basis'$ is bounded by $m'$; and (2) for any $\gamma \leq 2^{n^2}$ and any sublattice $\lat'' = \lat(\basis' \mathbf{Z})$ of $\lat'$ with $\det(\lat'') \leq \gamma \det(\lat')^{\ell/n}$, the corresponding sublattice $\lat''' = \lat(\basis \mathbf{Z})$ of $\lat$ satisfies $\det(\lat''') \leq \gamma' \det(\lat)^{\ell / n}$, for $\gamma' \leq (1 + 2^{\poly(n) - m'}) \cdot \gamma$. 

Using this representation in our framework works as follows: each recursive call first ``rounds'' its input $\basis$ to $\basis'$, then proceeds as before to compute a basis for an appropriate sublattice $\basis' \mathbf{Z}$ of $\lat(\basis')$, and finally outputs the corresponding basis of a 
sublattice $\basis \mathbf{Z}$ of the input lattice.

The running time of the resulting reduction is clearly bounded by the size of the tree times the running time required by any individual node, which is $\poly(m')$ (plus a polynomial in the bit length $m_0$ of the original input lattice, a cost incurred only by the root node). So, we obtain polynomial-time instantiations of the DSP to DSP reduction in \cref{sec:DSP-to-DSP} and the DSP to HSVP reduction in \cref{sec:DSP-to-SVP}, using only the fact that the recursive call trees of these reduction contain only polynomially many nodes. (Similarly, we obtain another quasipolynomial $n^{O(\log n)}$-time instantiation of our HSVP to HSVP reduction described in \cref{sec:SVPtoSVP}.)

\subsubsection*{Bounding the approximation factor}

By \cref{thm:rounded_basis}, the loss in the approximation factor at each node is at most $(1 + 1/2^{n^5})$. That is, in place of \cref{eq:generic-gamma}, our generic approximation factor recurrence becomes
\[
    \gamma(n,\ell,\tau) \leq (1 + 1/2^{n^5}) \gamma(n-\ell^*,\ell,\tau) \cdot \left((1 + 1/2^{n^5}) \cdot \gamma(n,\ell^*,\tau')\right)^{\ell/(n-\ell^*)}
    \; .
\]

Noting that the exponent $\ell / (n - \ell^*)$ can be loosely upper bounded by $n$, we have
\begin{equation}
\label{eq:lossy-gamma}
    \gamma(n,\ell,\tau) \leq (1 + 1/2^{n^5})^{2n} \gamma(n-\ell^*,\ell,\tau) \cdot \gamma(n,\ell^*,\tau')^{\ell/(n-\ell^*)}
    \; .
\end{equation}

Thus, we lose one factor of at most $(1 + 1/2^{n^5})^{2n}$ for each level of the tree. Since all of our recursive call trees have depth at most $n$, we lose a factor in total of at most $(1 + 1/2^{n^5})^{2n^2}$, which is quite small.

\section{Computer-aided search for reductions, and numerical comparisons}
\label{sec:computers}

One feature of our DSP to DSP and DSP to HSVP reductions (and reductions in our framework more generally), is that the approximation factors that they achieve satisfies a simple recurrence. This means that concrete provable bounds on the approximation factor achieved by each recursive call can be computed recursively, using exactly the same pattern of recursive calls as the reduction itself. In this section, we illustrate how these concrete bounds can be used to (1) find optimal parameters for the recursion; (2) compare variants of our reductions with each other; and (3) compare our reductions to basis reduction.  For simplicity, we will restrict our attention to the case $\ell = 1$, which corresponds to reducing from HSVP. 

We stress that our purpose is to illustrate what is possible here, and certainly \emph{not} to provide a definitive analysis on the performance of our reductions (or the performance of basis reduction). In particular, we do not wish to get caught up by thorny issues about (1) the precise value of Hermite's constant in different dimensions; (2) the precise time required to solve $\sqrt{\delta_k}$-HSVP in $k$ dimensions; (3) the precise behavior of basis reduction, including the precise approximation factor achieved by LLL; etc. So, when we encounter such issues, we endeavor to make choices (detailed in \cref{sec:technical-computational-stuff}) that yield bounds that are reasonable, simple, and \emph{provably correct} (up to lower-order terms in the running time of (H)SVP algorithms, which we do not attempt to model), but certainly not optimal.\footnote{While this thicket of thorny issues is not ideal, we note that the situation is far worse in the context of basis reduction.  And, in some sense, these thorns are not the fault of the recursive lattice reduction framework. Indeed, our first thorny issue---the imprecision in Hermite's constant---of course comes with the territory (and is really rather minor). The remaining thorny issues fundamentally boil down to difficulties in nailing down the behavior of algorithms that are outside of our framework, specifically LLL and sieving.} We leave it to future work to consider such issues more carefully (mindful of the ongoing effort to understand such issues in the context of basis reduction), and here simply show what is possible when these issues are largely ignored.

With that disclaimer out of the way, we note that our framework is quite amenable to such computations. First, not only can we obtain concrete bounds for a particular reduction with particular parameter choices such as the one in \cref{theorem:dsp-to-svp}, we can use dynamic programming to compute the optimal approximation factor achievable (by a family of reductions that take roughly the same form as that in \cref{sec:DSP-to-SVP}) with a given number of oracle calls, along with a description of the reduction achieving it. And, (in stark contrast to, e.g., simulations for basis reduction) the resulting bound on the approximation factor is \emph{proven} correct (assuming that one uses proven bounds on Hermite's constant and proven bounds on the approximation factor achieved by LLL for the base cases---one can also of course plug in conjectured or heuristic bounds here).

Concretely, consider the following variant $\A(\lat, \ell, C)$ of our DSP to SVP reduction in \cref{sec:DSP-to-SVP}. The reduction is underspecified; namely, the choices of whether to apply duality, and which values $\ell^*$ and $C^*$ to use in the recursive step, are left unspecified. 
\begin{enumerate}
    \item \textbf{Duality step:}
    Possibly choose to output $\lat \cap \A(\lat^*, n - \ell, C)^\perp$ (where $n := \rank(\lat)$). Otherwise, continue.
    \item \textbf{Base cases: }
    \begin{enumerate} 
        \item \textbf{LLL:} If $C = 0$ or if $n = k$ and $\ell > 1$, run LLL on $\lat$ and output the lattice generated by the first $\ell$ vectors of the resulting basis.
        \item \textbf{SVP:} 
        If $n = k$ and $\ell = 1$, use an oracle for $\gamma = \gamma(k)$-HSVP to output (the lattice generated by) a short vector in $\lat$. 
    \end{enumerate}
    \item \textbf{Recursive step:}
 Otherwise, choose integers $1 \leq \ell^* \leq n - \max\{\ell + 1, k\}$, $0 \leq C^* < C$, and output $\A(\lat \cap \A(\lat^*, \ell^*, C^*)^\perp, \ell, C - C^*)$. 
\end{enumerate}

Notice that $C$ bounds the number of HSVP oracle calls the reduction can make, since if $C = 0$ the reduction must run LLL, and in the recursive step, the reduction must divide its budget $C$ between the two recursive calls. (And, note that the total running time of the reduction is bounded by $C \cdot \poly(n)$, assuming that the trivial operation of applying the duality step twice in a row is never chosen.)

Now we can define a quantity $\gamma'(n, \ell, C)$ that is the best provable bound (or, at least, the best bound that one can derive via the recurrences that we use throughout this paper) on the approximation factor the reduction can obtain, over all of its choices.
In detail, the base cases are $\gamma'(n, \ell, 0) := (4/3)^{\ell (n - \ell) / 4}$ and $\gamma'(k, 1, C) := \gamma(k)$, and for all other values,
\[
    \gamma'(n, \ell, C) := \min_{\hat{\ell} \in \{\ell, n - \ell\}} \; \min_{1 \leq \ell^* \leq n - \max\{\hat{\ell} + 1, k\}} \; \min_{0 \leq C^* < C} \gamma'(n - \ell^*, \hat{\ell}, C - C^*) \cdot \gamma'(n, \ell^*, C^*)^{\frac{\hat{\ell} }{n - \ell^*}} \;.
\]

The value $\gamma'(n, \ell, C)$, along with a description of the reduction achieving it (that is, the choices of $\ell^*$ and $C^*$ made by the reduction at each step) can then be straightforwardly computed by dynamic programming in space $\widetilde{O}(n^2 C)$ and time $\widetilde{O}(n^3 C^2)$. See \cref{figure:C50,figure:C100,fig:concrete-tree}.

 We will also want to consider a variant of the above reduction that can make HSVP oracle calls, not just on lattices with fixed rank $k$, but lattices of potentially different ranks. Of course, this does not make much sense unless we also ``charge our reduction more'' for oracle calls on higher-rank lattices. Concretely, suppose that $\gamma(k)$-HSVP can be solved in time $T(k)$ with rank $k$ lattices.
 We modify the reduction as follows: (1) we replace the variable $C$ bounding the number of oracle calls with a variable $T$ bounding the \emph{total running time} of oracle calls; (2) we let the condition for the LLL base case be simply $T = 0$; (3) we let the condition for the SVP base case be that $\ell = 1$ and $T \geq T(n)$; and (4) we allow the reduction to choose any $1 \leq \ell^* < n - \ell$ in the recursive case.
 The modified reduction satisfies a similar recurrence, but with corresponding changes: the SVP base case $\gamma'(n, 1, T) := \gamma(n)$ now applies for any rank $n$, provided that $T \geq T(n)$, and the bounds on $\ell^*$ in the the recursive case become
 $1 \leq \ell^* < n - \ell$.
 With these modifications, $T$ now bounds the running time of the algorithm obtained by instantiating the oracle calls made by such a reduction using an oracle that runs in time $T(n)$ (neglecting for simplicity the time taken by operations other than HSVP oracle calls).

\begin{figure}
    \centering
    \includegraphics{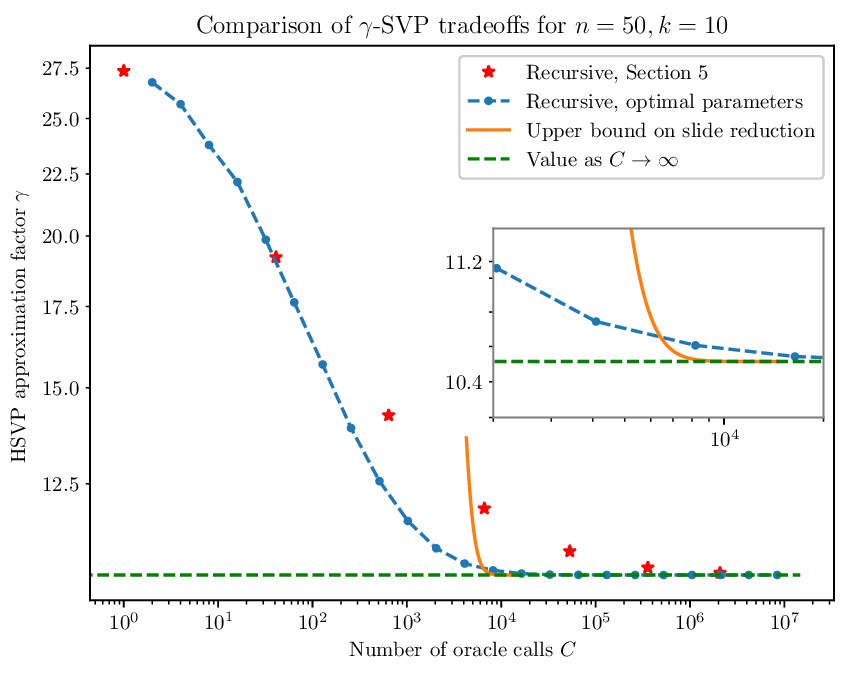}
    \caption{ 
    A comparison of tradeoffs between the approximation factor $\gamma$ and the running time (measured by the number of HSVP oracle calls $C$) when reducing $\gamma$-HSVP with rank $n = 50$ to $\sqrt{\delta_k}$-HSVP with rank $k = 10$. The blue dotted curve is our recursive DSP to HSVP reduction with nearly optimal parameters chosen by computer search. The red stars show the tradeoff achieved by the recursive reduction of \cref{theorem:dsp-to-svp}. The orange curve is an upper bound on the approximation factor obtained by slide reduction from~\cite[Corollary 1]{WalConvergenceSlidetypeReductions2021}, which we include to provide (rough) context. The blue, orange, and red curves all converge to the green dotted line as the running time grows large. }
    \label{figure:C50}    
\end{figure}

\begin{figure}
    \centering
    \includegraphics{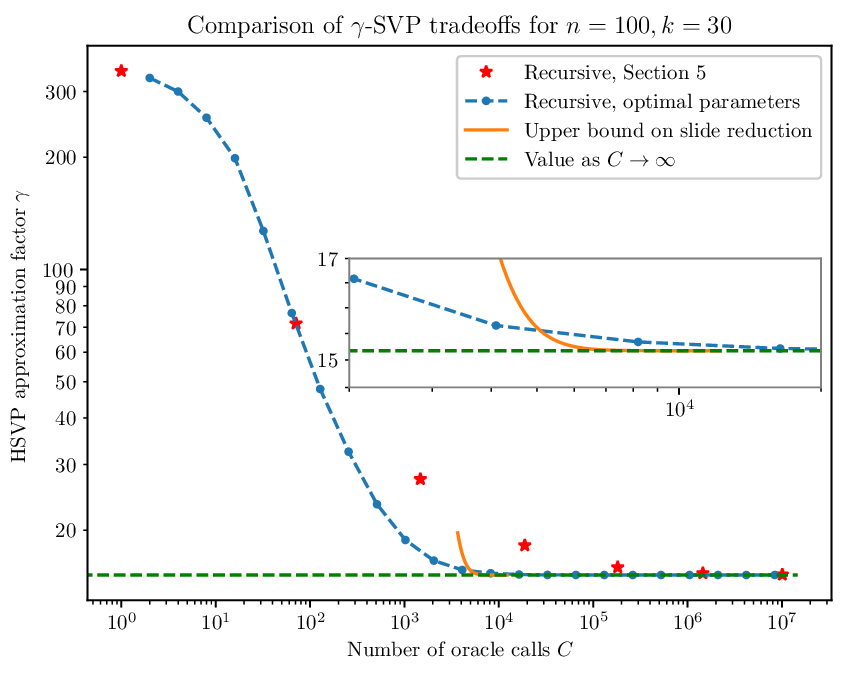}
    \caption{A similar comparison to \cref{figure:C50}, but with $n = 100$ and $k = 30$.}
    \label{figure:C100}
\end{figure}

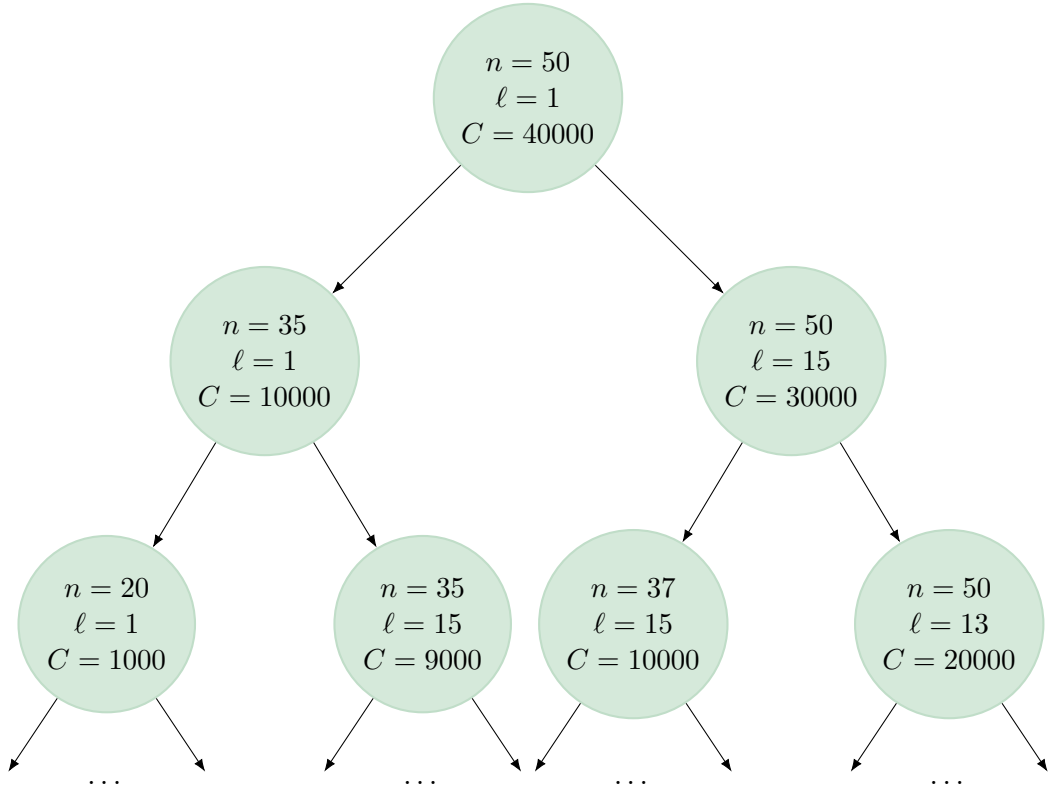
\begin{figure}
    \centering
    \begin{tikzpicture}[scale = 0.7,
        xnode/.style={circle, draw=etonblue!60, fill=etonblue!40, thick, minimum size=1.2cm, align=center},
]
\node[xnode] (A) at (0, 0) {$n = 50$ \\ $\ell = 1$ \\ $C = 40000$};
\node[xnode] (B1) at (-5, -5) {$n = 35$ \\ $\ell = 1$ \\ $C = 10000$};
\node[xnode] (B2) at (5, -5) {$n = 50$ \\ $\ell = 15$ \\ $C = 30000$};
\node[xnode] (C1) at (-8, -10) {$n = 20$ \\ $\ell = 1$ \\ $C = 1000$};
\node[xnode] (C2) at (-2, -10) {$n = 35$ \\ $\ell = 15$ \\ $C = 9000$};
\node[xnode] (C3) at (2, -10) {$n = 37$ \\ $\ell = 15$ \\ $C = 10000$};
\node[xnode] (C4) at (8, -10) {$n = 50$ \\ $\ell = 13$ \\ $C = 20000$};
\node[] (D1) at (-10, -13) {};
\node[] (D2) at (-6, -13) {};
\node[] (D3) at (-4, -13) {};
\node[] (D4) at (0, -13) {};
\node[] (D5) at (0, -13) {};
\node[] (D6) at (4, -13) {};
\node[] (D7) at (6, -13) {};
\node[] (D8) at (10, -13) {};

\foreach \from\to in {A/B1, A/B2, B1/C1, B1/C2, B2/C3, B2/C4}
\draw[-Latex] (\from) -- (\to);

\foreach \from\to in {C1/D1, C1/D2, C2/D3, C2/D4, C3/D5, C3/D6, C4/D7, C4/D8}
\draw[-Latex] (\from) -- (\to);

\foreach \from\to in {D1/D2, D3/D4, D5/D6, D7/D8}
\path (\from) -- node{\ldots} (\to);

\end{tikzpicture}
    \caption{The first few recursive calls of our recursive DSP to HSVP reduction, with optimal parameters (up to some rather aggressive coarsening) discovered by computer search. Each recursive call is labeled with the input rank $n$, output rank $\ell$, and the budget $C$ (of running time measured in HSVP oracle calls) allocated to the call. The initial parameters $n = 50, \ell = 1, C = 40000$ correspond to the regime of \cref{figure:C50} where the approximation factor achieved by the recursive reduction has gotten quite close to its value as $C$ goes to infinity (i.e., where the blue curve has more-or-less merged with the green curve).}
    \label{fig:concrete-tree}
\end{figure}

\begin{figure}
    \centering
    \includegraphics{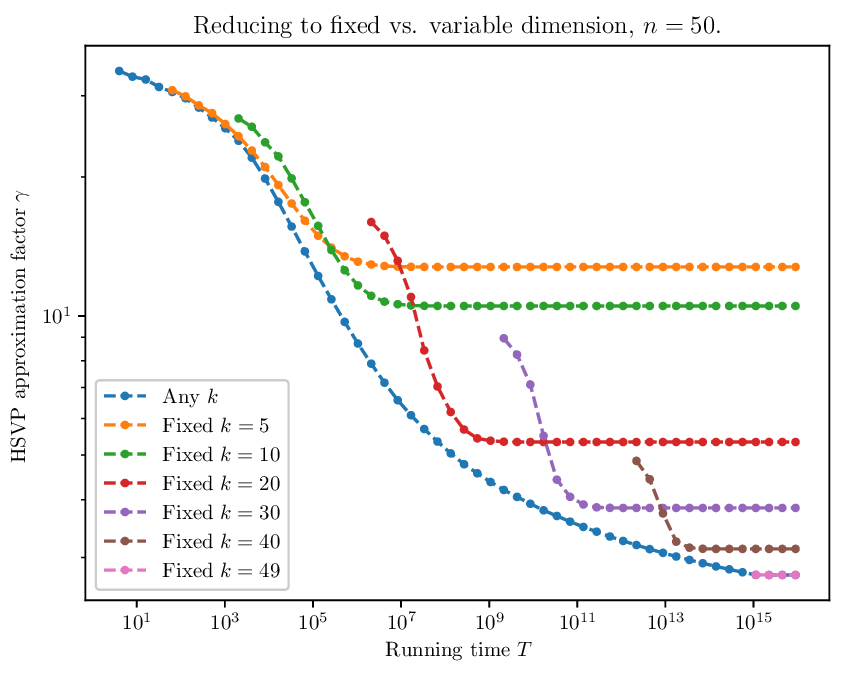}
    \caption{A comparison of tradeoffs between the approximation factor $\gamma$ and the running time $T$ for solving $\gamma$-HSVP with rank $n = 50$, in the simplified model where we assume that an HSVP oracle call with rank $k$ takes time $2^k$, and we neglect the time taken by all other operations. All curves correspond to our recursive HSVP to HSVP reduction with nearly optimal parameters chosen by computer search. The blue curve is allowed to make HSVP oracle calls with arbitrary rank $k$ (provided $2^k$ is less than its time budget $T$), whereas the others are only allowed to make HSVP oracle calls with a certain fixed rank $k$.}
    \label{figure:T50}
\end{figure}

\begin{figure}
    \centering
    \includegraphics{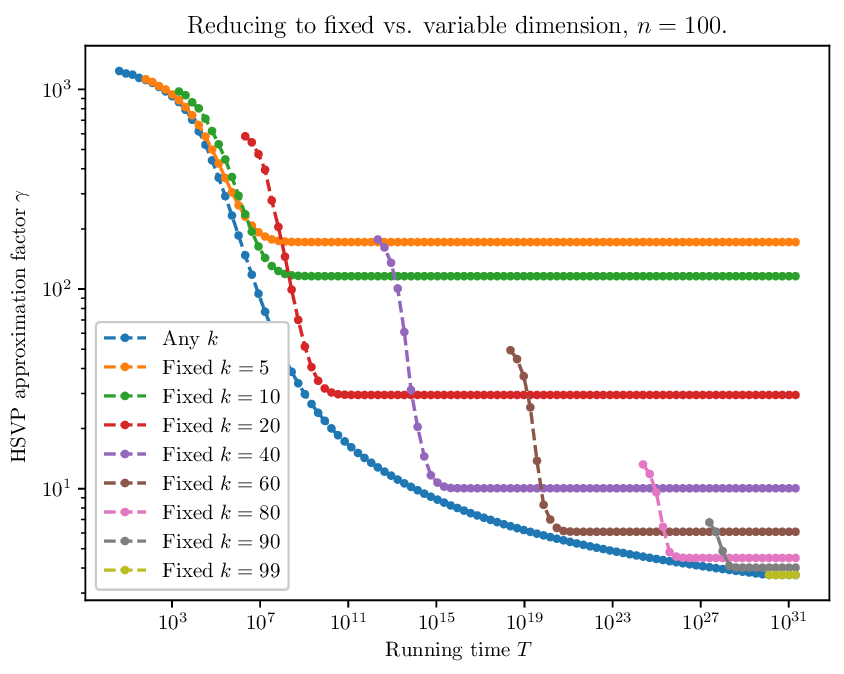}
    \caption{A similar comparison to \cref{figure:T50}, but with $n = 100$.}
    \label{figure:T100}
\end{figure}

\subsection{Results}

In \cref{figure:C50,figure:C100}, we show a comparison between the explicit reduction described in \cref{sec:DSP-to-SVP} and the optimized reduction found by a coarse-grained version of the automated procedure described above, with parameters $(n,k) = (50,10)$ and $(n,k) = (100,30)$.%
\footnote{See \cref{sec:technical-computational-stuff} for details on the coarse-grained approach.}
(The approximation factors achieved and number of oracle calls needed by the explicit reduction from \cref{sec:DSP-to-SVP} were computed by solving the associated recurrences exactly.) These plots show that the computer-generated reduction rather massively improves upon the explicit reduction described in \cref{sec:DSP-to-SVP}. 

\cref{fig:concrete-tree} shows some of the actual recursive tree in one of the reductions generated by our automated procedure. Notice that the parameters chosen here are significantly more subtle than the simple ones that we chose in \cref{sec:DSP-to-SVP}. In particular, $\ell^*$ (the value of $\ell$ taken by the right node) is clearly not just a fixed function of $n$, and the allocation of oracle calls is similarly subtle--often larger for the right node than the left.

We also include in  \cref{figure:C50,figure:C100} Walter's~\cite[Corollary 1]{walterLatticeBlogReduction} closed-form (proven) upper bound on the approximation factor achieved by (a variant of) Gama and Nguyen's slide reduction algorithm as a function of the number of oracle calls~\cite{gamaFindingShortLattice2008}. We stress that this is only meant to provide some rather rough context, to show that our (computer-generated) reduction is roughly comparable to prior work. In particular, while the plots might suggest that our reduction outperforms slide reduction when the number of oracle calls is rather small, we do not claim that this is actually true. (Slide reduction is not well studied in the setting where the number of oracle calls is significantly less than the number needed to converge. We in particular do not know if Walter's estimate is tight in the regime where the number of oracle calls is small.)

In \cref{figure:T50,figure:T100}, we show comparisons between the performance of different computer-generated reductions in our framework. In particular, we compare the time-approximation-factor tradeoff achieved by our computer-optimized reductions with different fixed oracle sizes $k$ (for $n = 50$ in \cref{figure:T50} and $n= 100$ in \cref{figure:T100}), as well as the tradeoff achieved by allowing the algorithm to adaptively choose the rank $k$ in which the HSVP oracle is run, as discussed above. As we discuss more in \cref{sec:technical-computational-stuff}, we used quite a crude cost model of $T(k) := 2^k$ computational operations for $\sqrt{\delta_k}$-HSVP oracle calls on rank-$k$ lattices, and no cost for all other operations. (We found that changing this cost model did not change the results qualitatively.)

Again, the resulting plots show that substantial improvement can be made by allowing for oracle sizes with different ranks.

\subsection{Technical details}

\paragraph{Code}
The code that we used for our computer-aided search experiments is available at \cite{code}. Running it will produce \cref{figure:C50,figure:C100,figure:T50,figure:T100}.

\label{sec:technical-computational-stuff}
\paragraph{LLL oracle}
The LLL algorithm, viewed as a $\gamma_{\mathsf{LLL}}$-DSP oracle, provably achieves approximation factor at most~\cite{PTUnifyingLLLInequalities2009}
\[
    \gamma_{\mathsf{LLL}} \leq (4/3)^{\frac{\ell (n - \ell) }{ 4}} \;.
\]
We use this bound for the approximation factor achieved by the LLL oracle.\footnote{In practice, LLL is known to perform \emph{much} better than this~\cite{NSLLLAverage2006}. So, one might instead plug in heuristic bounds on the approximation factor achieved by LLL.}

\paragraph{SVP oracle.}
We assume an exact HSVP oracle, achieving approximation factor $\gamma = \sqrt{\delta_k}$ in dimension $k$. We use known exact values of $\delta_k$ in dimensions $k \leq 8$. For $k > 8$, we use the best known explicit asymptotic upper bound on $\delta_k$, due to  Blichtfeld~\cite{BliMinimumValueQuadratic1929},
\[
    \delta_k \leq \frac 2\pi \Gamma(2 + k/2)^{\frac 2k} \;,
\]
where $\Gamma(x)$ is the gamma function.\footnote{There is a lower bound of $\Gamma(k/2+1)^{2/k}/\pi$, so Blichfeldt's bound is tight up to a factor of $2+o(1)$. One might reasonably guess that the right answer is closer to $\Gamma(k/2+1)^{\frac 2k}/\pi$.}

As a rough and simple guess for the running time $T$ required to solve (H)SVP in dimension $k$, we use $T = 2^k$.\footnote{This is approximately the running time of state-of-the-art (provable) exact SVP algorithms~\cite{aggarwalSolvingShortestVector2015}, which runs in time $2^{k+o(k)}$. But, we are not being too precise---in particular, dropping the unspecified $2^{o(k)}$ factor. One might instead plug in estimates for, e.g., heuristic sieving algorithms---or, in low dimensions, even for heuristic enumeration algorithms. Here, we only wish to give a qualitative picture of how our algorithm behaves when we allow oracle calls in different dimensions. (This qualitative picture was unchanged when we tried changing how we model the running time of the oracle.)}

\paragraph{Approximate minimization.}
Though computer-aided search for optimal recursive reductions in our framework is relatively efficient, finding the exact minimum via dynamic programming still requires non-trivial amounts of time and memory. The results presented here are therefore only approximate minima (upper bounds on the true minimum), generated by only searching for reductions where the budget $C$ of HSVP oracle calls allocated to a given recursive call cannot be an arbitrary integer, but instead must be chosen from a smaller ``coarse'' set of values. (Our approach to the running time budget $T$ is similar.) Experimentally, for the small values of $n$ for which we tested this directly, the upper bounds obtained this way were not much worse than the optima obtained by exact search, so we felt they sufficed for these proof-of-concept illustrations.

In more detail, the coarse sets of values we use are of size $O(\log C)$ (or $O(\log T)$), and they consist of all integers $\leq C$ $(\leq T)$ that can be written in base $b$ with all non-leading digits 0. (For example, choosing $b = 10$ gives $0, 1, \dots, 9, 10, 20, \dots, 90, 100, \dots$ and so on up to $C$.) We used $b = 10$ for \cref{fig:concrete-tree} (in order to have nice round numbers) and $b = 8$ for \cref{figure:C50,figure:C100,figure:T50,figure:T100} (in order to display evenly spaced points on our log-log plots). The reduction is allowed to split $C$ into $C^*$ and $C'$ with $C^* + C' = C$, which means that when $C = x \cdot b^a$, if the leading digit $x$ is 1 the reduction can choose only $C^* = y \cdot b^{a-1}$ for any $y \in \{0, \dots, b - 1\}$, and when $x > 1$, the reduction can choose only $C^* = z \cdot b^a$ for some $0 \leq z < x$.

\section{Directions for future work}
\label{sec:THEFUTURE}

We expect there to be much future work on this new paradigm. Indeed, we largely view this work as introducing new ideas that we hope and expect others will expand upon. We detail some potential future directions below.

\paragraph{New algorithms in this framework.} Perhaps the most natural direction for future work is to investigate more (hopefully better) instantiations of our new framework. For example, while we only considered binary recursion trees, one can consider trees with larger degree. One can also consider algorithms that incorporate projection more directly (in close analogy with basis reduction algorithms). E.g., one could consider algorithms that find a dense sublattice of $\lat$ with rank $\ell$ by first finding a dense sublattice $\lat'$ with rank $\ell' < \ell$, finding a dense sublattice $\lat'' \subset \Pi_{(\lat')^\perp}(\lat)$ with rank $\ell - \ell'$,  and then ``lifting'' the result to a sublattice $\widehat{\lat} \subset \lat$ with rank $\ell$ such that $\lat' \subset \widehat{\lat}$ and $\Pi_{(\lat')}(\widehat{\lat}) = \lat''$. It would be interesting both to study such approaches theoretically and to incorporate them into automated searches for better-performing reductions in this framework.

\paragraph{Recursive lattice reduction and basis reduction. } It would also be quite interesting to provide a better understanding of the relationship between our new framework of recursive lattice reduction and the pre-existing framework of basis reduction. As we discussed in \cref{sec:context}, the two frameworks feel quite similar and achieve more-or-less the same results, but we know of no formal relationship between them. In particular, we do not know whether common basis reduction algorithms can be captured in our framework, and we view this as a major open question. For example, all of our instantiations of recursive lattice reduction require the use of the LLL algorithm as a subprocedure (as we also discuss in \cref{sec:context}). Ideally, we would like to show that something like the LLL algorithm itself can be captured in our framework (or a natural generalization of it), to remove this dependency. In particular, we would at least like to match the approximation factor achieved by LLL without using LLL as a subroutine.\footnote{We do note that there is a certain weak sense in which our framework captures LLL. If one instantiates any of our algorithms with block size $k = 2$ and replaces calls to LLL with some hypothetical algorithm achieving approximation factor $2^{n^C}$ for any constant $C$, then it is not hard to see that one obtains an algorithm more-or-less matching the performance of LLL. I.e., one can use our framework to generically convert any efficient algorithm with approximation factor $2^{n^C}$ to one with approximation factor $2^{O(n)}$.} 

Failing that, one could also consider explicitly using other basis reduction algorithms as subprocedures in our framework. E.g., one could imagine including the possibility of running basis reduction algorithms instead of some of the recursive calls---in particular in automated searches for the best reductions, as in \cref{sec:computers}. Perhaps such an algorithm can achieve better performance? Or perhaps one can achieve better performance in basis reduction by using recursive lattice reduction as a subprocedure there? Perhaps there are more general combinations of the two techniques that are interesting as well.

There are also various tricks for speeding up basis reduction algorithms that we currently do not know how to apply in our framework. For example, basis reduction algorithms are often applied \emph{progressively}, in that one reduces the basis with progressively larger block size, obtaining shorter and shorter vectors~\cite{AWHTImprovedProgressiveBKZ2016}. This effectively lowers the number of high-dimensional oracle calls at the expense of using many lower-dimensional oracle calls.  While the technique we discuss in \cref{sec:computers} of using oracle calls with different dimensions is similar to this at a high level, it is not clear that they are truly analogous to progressive basis reduction algorithms. As another example, Walter~\cite{WalConvergenceSlidetypeReductions2021} and Li and Walter~\cite{LWImprovingConvergencePracticality2023} showed that it can be favorable to use the HKZ-reduced bases that one can easily generate with an SVP oracle to get a coarse solution to DSP. Perhaps similar tricks are possible here?

More generally, modern basis reduction algorithms are quite sophisticated and highly optimized---from high-level optimizations like the progressive reduction described above or optimizations that take into account the fact that the length of an output vector is not fixed but rather is a random variable~\cite{YDSecondOrderStatistical2018} to low-level (but important and very clever!)\ optimizations related to precision issues~\cite{NSLLLAlgorithmQuadratic2009,SteFloatingPointLLLTheoretical2010,RHFastPracticalLattice2023}. We hope that similarly sophisticated optimizations can and will be applied in our framework. (The fact that many of the techniques for controlling precision issues use recursion~\cite{NS16,KEF21,RHFastPracticalLattice2023} seems promising---at least superficially.)

 And, for practical performance, one should of course introduce heuristics into the study of this framework. We hope that in our setting one can reap the tremendous benefits that heuristics have provided in our understanding of basis reduction without quite needing to completely sacrifice the simple presentation and analysis that our framework provides. In particular, one can hope that it would be sufficient to, e.g., apply heuristics about Hermite's constant, the approximation factor achieved by LLL, and the other ``thorny issues'' discussed in \cref{sec:computers} without, e.g., making additional heuristic assumptions about the behavior of reductions in our framework.

\paragraph{Recursive reduction of algebraic lattices.} Additionally, one might ask how this new framework applies to algebraically structured lattices, such as embeddings of ideals in number fields or module lattices over the ring of integers. These lattices play a crucial role in lattice-based cryptography, and their reduction remains an active area of research.

It is known that the LLL algorithm~\cite{leeLLLAlgorithmModule2019} and the slide-reduction algorithm~\cite{NoahTamalika} can be readily adapted to this context. Furthermore, \cite{KEF20} have explored recursion techniques leveraging the algebraic structure of this class of lattices to enhance the efficiency of the LLL algorithm. Investigating the potential interaction between our recursive framework and the inherent recursive structure of module lattices could yield improvements to such results.

\paragraph{Lattice reduction with few oracle calls.} 
Finally, we note that our framework places new emphasis on the performance of reductions that make very few oracle calls. In particular, because of the recursive nature of our reductions, the final parameters that we obtain would be improved quite a bit if we could greatly improve the approximation factor achieved by reductions using, e.g., just one oracle call. More generally, our framework makes very concrete the intuitive notion that an improvement to the fastest running time needed to solve (H)SVP for some ranks and approximation factors can yield similar improvements for other ranks and approximation factors. This further motivates study of lattice reduction in \emph{all} parameter regimes (even rather strange regimes, like the regime of one oracle call).

\section*{Acknowledgments}

We thank the anonymous reviewers for their helpful comments. We are particularly grateful to an anonymous Crypto 2024 reviewer, who noted that the details now included in \cref{section:representation,section:bitlength} required elaboration.

NSD and SP are supported in part by the NSF under Grants Nos.~CCF-2122230 and CCF-2312296, a Packard Foundation Fellowship, and a generous gift from Google. Some of this work was done while all four authors were visiting the Simons Institute in Berkeley, as part of the summer cluster on lattices. Some of this work was done while TE, SP, and NSD were visiting DA at the National University of Singapore and the Centre for Quantum Technologies.
This work was partially supported by the grant at Centre for
Quantum Technologies titled “Quantum algorithms, complexity, and communication”.

\newcommand{\etalchar}[1]{$^{#1}$}
\def\cprime{$'$}

\appendix

\section{On the bit length of representations of lattices}
\label{section:bitlength}

In this section, we address certain technical questions that arise when one considers the number of bits required to represent the lattices $\lat$ encountered by the recursive calls to our algorithm. 

We note that much of the below is very similar to prior work---see \cref{sec:related_work_app} for a discussion on this.  To our knowledge, no prior work achieved precisely what we need here (in particular, no prior work showed a way to reduce DSP (as we parameterize it) to a variant in which the bit length is bounded), so we included the below. We stress, however, that we do not view this as an important novel contribution.

The rest of this section is organized as follows. In \cref{app:LLL_prelims}, we list some technical results about LLL-reduced bases that we will use below.  In \cref{app:LLL_bit_length}, we  bound the bit representation of an LLL-reduced basis in terms of the denominator $q$ and the determinant $\det(\lat)$ of a lattice $\lat \subset \Q^d$, which we needed in \cref{section:n-log-n}. In \cref{app:well-conditioned}, we use LLL-reduced bases and a simple rescaling technique to show that when working with DSP we can essentially assume without loss of generality that the input lattice is ``well-conditioned'' (in particular, that it has an LLL-reduced basis in which all basis vectors have roughly the same length).

In \cref{app:bounded_bits}, we show how to ``round'' any lattice to a new lattice that has a basis with bitlength bounded by $\poly(n)$ in such a way that solutions to DSP are essentially preserved. I.e., any solution to $\gamma$-DSP on the rounded lattice can be easily converted into a solution to $\gamma'$-DSP on the original lattice where $\gamma'$ can be made arbitrarily close to $\gamma$. We do this using LLL-reduced bases and the ideas from \cref{app:well-conditioned} together with a simple rounding technique.

Finally, in \cref{app:computing_stuff}, we briefly address the question of how to efficiently perform (when lattices are represented by bases) the two primitive operations performed by our algorithms: (1) computing the dual; and (2) computing $\lat \cap (\lat')^\perp$ when $\lat' \subseteq \lat^*$ is a sublattice of the dual lattice.

Note that throughout this section, we make no attempt to optimize constants.

\subsection{Some background on LLL-reduced bases}
\label{app:LLL_prelims}

Recall that given a basis $\basis = (\vec{b}_1,\ldots, \vec{b}_n) \in \Q^{d \times n}$, we define its Gram-Schmidt orthogonalization $\gs{\vec{b}}_1,\ldots, \gs{\vec{b}}_n$ by $\gs{\vec{b}}_1 := \vec{b}_1$ and 
\[
	\gs{\vec{b}}_i := \Pi_{\{\vec{b}_1,\ldots, \vec{b}_{i-1}\}^\perp}(\vec{b}_i)
	\; .
\]
We also define
\[
	\mu_{i,j} := \langle \gs{\vec{b}}_i, \vec{b}_j \rangle/\|\gs{\vec{b}}_i\|^2
	\;, 
\]
so that $\vec{b}_j = \gs{\vec{b}}_j + \sum_{i=1}^{j-1} \mu_{i,j}\gs{\vec{b}}_i$. We have the simple identity
\[
	\det(\lat) = \prod_{i=1}^n \|\gs{\vec{b}}_i\|
	\; .
\]

\begin{theorem}[{\cite[Lemma 1]{PTSublatticeDeterminantsReduced2008}}]
	\label{thm:dense_sublattice_GS}
	For any lattice $\lat \subset \Q^d$ with basis $\basis := (\vec{b}_1,\ldots, \vec{b}_n) \in \Q^{d \times n}$ and any sublattice $\lat' \subseteq \lat$ with rank $\ell$,
	\[
		\det(\lat') \geq \min_{\substack{S \subseteq [n]\\ |S| = \ell}}  \prod_{i \in S} \|\gs{\vec{b}}_i\|
	\]
\end{theorem}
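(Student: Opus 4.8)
The plan is to prove the (apparently stronger, but equivalent) statement that there is a \emph{single} index set $S \subseteq [n]$ with $|S| = \ell$ such that $\det(\lat') \geq \prod_{i \in S} \|\gs{\vec{b}}_i\|$. To produce such an $S$, I would first fix any basis of $\lat'$ and record its coordinates with respect to $\basis$ as the columns of an integer matrix $\vec{Z} \in \Z^{n \times \ell}$ of rank $\ell$. Right-multiplying $\vec{Z}$ by a matrix in $\mathrm{GL}_\ell(\Z)$ changes the chosen basis of $\lat'$ but not $\lat'$ itself, so I may bring $\vec{Z}$ into column echelon (Hermite normal) form: after a suitable unimodular column transformation, the map sending a column index $j$ to the largest row index $i$ with $\vec{Z}_{ij} \neq 0$ becomes injective, with image $S = \{i_1 < i_2 < \cdots < i_\ell\}$ and with the corresponding ``pivot'' entries nonzero integers, hence of absolute value at least $1$. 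Reordering the columns so that the pivot of column $j$ sits in row $i_j$, the $j$-th column $\vec{c}_j$ of $\basis \vec{Z}$ then lies in $\spn(\vec{b}_1, \ldots, \vec{b}_{i_j})$ and has $\vec{b}_{i_j}$-coefficient $a_j$ with $|a_j| \geq 1$.

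The second step is to translate this into the orthogonal frame. Writing $\vec{e}_i := \gs{\vec{b}}_i / \|\gs{\vec{b}}_i\|$ for the normalized Gram--Schmidt vectors and expanding $\vec{c}_j = \sum_i t_{ij}\vec{e}_i$, I would use the identity $\vec{b}_i = \gs{\vec{b}}_i + \sum_{k<i}\mu_{k,i}\gs{\vec{b}}_k$ (so that $\vec{b}_i$ has $\gs{\vec{b}}_i$-coefficient exactly $1$ and no $\gs{\vec{b}}_k$-component for $k > i$) to conclude two facts: $t_{ij} = 0$ whenever $i > i_j$, and $t_{i_j, j} = a_j \|\gs{\vec{b}}_{i_j}\|$, whence $|t_{i_j, j}| \geq \|\gs{\vec{b}}_{i_j}\|$. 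Consequently the $\ell \times \ell$ submatrix $\vec{T}_S$ of $\vec{T} = (t_{ij})$ obtained by keeping the rows indexed by $S$ (in increasing order) is triangular, with diagonal entries $a_1\|\gs{\vec{b}}_{i_1}\|, \ldots, a_\ell\|\gs{\vec{b}}_{i_\ell}\|$, so $|\det(\vec{T}_S)| = \prod_{j=1}^{\ell} |a_j| \cdot \|\gs{\vec{b}}_{i_j}\| \geq \prod_{i \in S} \|\gs{\vec{b}}_i\|$.

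Finally, since $\{\vec{e}_i\}$ is an orthonormal system and every $\vec{c}_j$ lies in its span, the Gram matrix of $\vec{c}_1, \ldots, \vec{c}_\ell$ equals $\vec{T}^{T}\vec{T}$, so $\det(\lat')^2 = \det(\vec{T}^{T}\vec{T})$. The Cauchy--Binet formula gives $\det(\vec{T}^{T}\vec{T}) = \sum_{S' \subseteq [n],\, |S'| = \ell} \det(\vec{T}_{S'})^2 \geq \det(\vec{T}_S)^2$, and combining this with the previous bound yields $\det(\lat') \geq |\det(\vec{T}_S)| \geq \prod_{i\in S}\|\gs{\vec{b}}_i\| \geq \min_{|S'| = \ell}\prod_{i \in S'}\|\gs{\vec{b}}_i\|$, as desired. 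I expect the only genuine obstacle is the first step --- carefully justifying the reduction to column echelon form via unimodular column operations and doing the bookkeeping that guarantees $\vec{T}_S$ comes out triangular with the pivots on the diagonal; the passage to Gram--Schmidt coordinates and the Cauchy--Binet step are routine.
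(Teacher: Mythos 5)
Your proof is correct. The paper itself does not include a proof of this statement --- it simply cites it as Lemma~1 of Pataki and Tural~\cite{PTSublatticeDeterminantsReduced2008} --- so there is nothing in-paper to compare against. Your argument (bring an integer coordinate matrix $\vec{Z}$ of a basis of $\lat'$ into column echelon form by a unimodular transformation, pass to the Gram--Schmidt orthonormal frame so that the row-restricted submatrix $\vec{T}_S$ is triangular with diagonal entries of magnitude at least $\|\gs{\vec{b}}_{i_j}\|$, then apply Cauchy--Binet to lower-bound $\det(\lat')^2 = \det(\vec{T}^T\vec{T})$ by $\det(\vec{T}_S)^2$) is sound in all steps and is essentially the standard argument for this lemma.
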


We say that $\basis$ is \emph{LLL reduced} (with parameter $\delta = 3/4$) if (1) for all $i \neq j$, we have $|\mu_{i,j}| \leq 1/2$; and (2) for all $1 \leq i \leq n-1$, we have $3\|\gs{\vec{b}}_i\|^2/4 \leq  \|\gs{\vec{b}}_{i+1}\|^2 + \mu_{i,i+1}^2 \|\gs{\vec{b}}_i\|^2$. 

From the first condition, we have the simple inequality
\begin{equation}
	\label{eq:size_reduced}
	\|\vec{b}_i\| \leq \sqrt{n} \max_j \|\gs{\vec{b}}_j\|
	\; 
\end{equation}
for any LLL-reduced basis.
More interestingly, it follows immediately from the definition that LLL-reduced bases satisfy the inequality
\[
	\|\gs{\vec{b}}_i\| \leq \sqrt{2} \|\gs{\vec{b}}_{i+1}\|
	\; .
\]
By applying this inequality repeatedly, we see that for any $j \geq i$,
\begin{equation}
	\label{eq:LLL_decay}
	\|\gs{\vec{b}}_i\| \leq 2^{(j-i)/2}	 \|\gs{\vec{b}}_{j}\|
	\; .
\end{equation}

\begin{theorem}[{\cite{LLL82}}]
    \label{thm:LLL_with_bit_lengths}
    There is an algorithm that takes as input a basis $\basis \in \Q^{d \times n}$ for a lattice $\lat$ and outputs an LLL-reduced basis $\basis'$ for $\lat$ in time $O(b^5)$, where $b$ is a bound on the bitlength of the representation of $\basis$ (when written as a list of numerators and denominators in binary).
\end{theorem}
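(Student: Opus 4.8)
The plan is to recall and analyze the classical LLL algorithm of \cite{LLL82}; since the result is standard we only sketch the argument. First I would reduce to an integer lattice. Given the rational basis $\basis$, let $D$ be a common denominator of all of its entries; since $D$ divides the product of the entrywise denominators, $D$ has bitlength at most $b$, and $D\basis$ is an integer basis for the scaled lattice $D\lat$, with every entry of $D\basis$ of bitlength $O(b)$ and hence $B := \max_i \|\vec{b}_i\|^2 \le 2^{O(b)}$. The two LLL-reducedness conditions, namely $|\mu_{i,j}| \le 1/2$ for all $i \ne j$ and $3\|\gs{\vec{b}}_i\|^2/4 \le \|\gs{\vec{b}}_{i+1}\|^2 + \mu_{i,i+1}^2\|\gs{\vec{b}}_i\|^2$ for all $i$, are invariant under scaling the lattice, so it suffices to LLL-reduce $D\basis$ and divide the output basis by $D$; this adds at most $\log D \le b$ bits to the final representation.

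Next I would describe the algorithm and bound its iteration count. The algorithm maintains a basis $\basis$ of $\lat$ and its Gram--Schmidt data, and repeatedly (i) \emph{size-reduces}, replacing each $\vec{b}_i$ by $\vec{b}_i$ minus suitable integer multiples of $\vec{b}_1, \ldots, \vec{b}_{i-1}$ so that $|\mu_{i,j}| \le 1/2$ for all $j < i$, and then (ii) if the Lov\'asz condition fails at some index $i$, swaps $\vec{b}_i$ and $\vec{b}_{i+1}$; it halts once the basis is LLL-reduced, at which point correctness is immediate. For the iteration count I would use the integral potential $\Phi := \prod_{i=1}^{n} d_i$, where $d_i := \det(\basis_i^T\basis_i) = \prod_{j \le i}\|\gs{\vec{b}}_j\|^2$ and $\basis_i$ is the matrix of the first $i$ current basis vectors; each $d_i$ is a positive integer. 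Size reduction leaves every $\gs{\vec{b}}_j$ (hence $\Phi$) unchanged; a swap at index $i$ multiplies $d_i$ by a factor strictly less than $3/4$ and leaves every other $d_j$ unchanged, so $\Phi$ decreases by a constant factor. Since $\Phi \ge 1$ throughout and $\Phi \le B^{n^2}$ initially, there are $O(n^2\log B) = \poly(b)$ swaps, and $O(nd) = \poly(b)$ arithmetic operations between consecutive swaps.

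The main obstacle is the bit-length control: one must show that every integer or rational produced during the execution has bitlength $\poly(b)$, so that each arithmetic operation costs $\poly(b)$ bit operations. The key points are that each $d_j$ is a non-increasing positive integer bounded by its initial value $B^n$; that $d_{j-1}\gs{\vec{b}}_j$ is an integer vector of norm at most $d_{j-1}\sqrt{d_j} \le B^{O(n)}$, so $\gs{\vec{b}}_j$ and $\mu_{i,j} = \langle \gs{\vec{b}}_i, \vec{b}_j\rangle/\|\gs{\vec{b}}_i\|^2$ are rationals with denominators dividing $B^{O(n)}$ and numerators of magnitude $B^{O(n)}$; and that after a completed size-reduction pass $\|\vec{b}_i\| \le \sqrt{n}\max_j \|\gs{\vec{b}}_j\|$ by \cref{eq:size_reduced}, which is $\le \sqrt{nB^n}$. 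The genuinely delicate part is bounding the basis vectors \emph{during} a size-reduction pass, before it completes; this is handled using $\|\gs{\vec{b}}_i\| \le 2^{(j-i)/2}\|\gs{\vec{b}}_j\|$ for $j \ge i$ (\cref{eq:LLL_decay}) together with a careful induction on the order in which the coefficients $\mu_{i,j}$ are cleared. Combining the $\poly(b)$ bound on the number of swaps, the $\poly(b)$ bound on operations between swaps, and the $\poly(b)$ bound on operand bitlength gives a running time polynomial in $b$; tracking the exponents as in \cite{LLL82} yields the stated $O(b^5)$ bound, and rescaling the output by $1/D$ recovers an LLL-reduced basis of the original lattice $\lat$ in the same time.
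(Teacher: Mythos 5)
This is a \emph{cited} result: the paper invokes it directly from~\cite{LLL82} and offers no proof of its own, so there is no internal argument to compare against. Your sketch is a fair outline of the standard LLL analysis — rescaling to an integer lattice, the decreasing integer potential $\Phi = \prod_i d_i$ to bound the number of swaps, and the use of the integrality of the $d_i$ to keep numerators and denominators of $\gs{\vec{b}}_j$ and $\mu_{i,j}$ polynomially bounded — and you correctly flag the step that genuinely requires care, namely controlling the bit length of the basis vectors mid-way through a size-reduction pass.

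Two points deserve a caveat. First, you appeal to \cref{eq:LLL_decay} (the $\|\gs{\vec{b}}_i\| \le 2^{(j-i)/2}\|\gs{\vec{b}}_j\|$ decay) to control intermediate quantities, but that inequality is a consequence of the Lov\'asz condition, which the current basis does \emph{not} satisfy during the run — it is precisely what the algorithm is trying to establish. It holds at best for the already-reduced prefix $\vec{b}_1,\ldots,\vec{b}_{k-1}$ that the algorithm maintains, and the classical argument bounds the unreduced tail by a separate induction on the size-reduction updates (tracking how $\round{\mu_{k,i}}$ propagates), not by the decay inequality. You gesture at this with ``a careful induction on the order in which the coefficients $\mu_{i,j}$ are cleared,'' which is the right instinct, but as written the invocation of \cref{eq:LLL_decay} is circular. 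Second, the claim that ``tracking the exponents yields $O(b^5)$'' is stated but not established: the bound in~\cite{LLL82} is $O(n^4\log B)$ arithmetic operations on integers of length $O(n\log B)$, and turning that into a clean power of the total input bitlength $b$ requires keeping track of how $n$, $d$, and $\log B$ each compare to $b$; a naive accounting in fact tends to give a slightly worse exponent. Since the theorem is cited rather than proved, the precise constant in the exponent is not load-bearing for the paper, but it would be worth being explicit about what you are assuming (e.g., naive versus fast multiplication) if you want to pin down a specific power of $b$.
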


\begin{theorem}[{\cite[Theorem 1]{PTSublatticeDeterminantsReduced2008}}]
    \label{thm:LLL_orthogonality_defect}
    If $\basis = (\vec{b}_1,\ldots, \vec{b}_n) \in \Q^{d \times n}$ is LLL-reduced, then 
    \[
        \prod_{i=1}^n \|\vec{b}_i\| \leq 2^{n^2/4} \det(\lat)
    \]
\end{theorem}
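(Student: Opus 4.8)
The plan is to reduce everything to comparing $\prod_{i=1}^n \|\vec{b}_i\|$ against $\det(\lat) = \prod_{i=1}^n \|\gs{\vec{b}}_i\|$, so that it suffices to bound the ratio $\prod_{i=1}^n \|\vec{b}_i\|/\|\gs{\vec{b}}_i\|$ by $2^{n^2/4}$. In fact I would prove the slightly stronger \emph{pointwise} bound $\|\vec{b}_i\| \leq 2^{(i-1)/2}\|\gs{\vec{b}}_i\|$ for each $i$, and then simply take the product. The whole argument uses only the two defining properties of an LLL-reduced basis, one of them already packaged in the excerpt as the decay inequality \eqref{eq:LLL_decay}.

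To establish the pointwise bound, I would write the Gram--Schmidt decomposition $\vec{b}_i = \gs{\vec{b}}_i + \sum_{j=1}^{i-1} \mu_{j,i}\gs{\vec{b}}_j$. Since the $\gs{\vec{b}}_j$ are pairwise orthogonal, the Pythagorean theorem gives $\|\vec{b}_i\|^2 = \|\gs{\vec{b}}_i\|^2 + \sum_{j=1}^{i-1}\mu_{j,i}^2 \|\gs{\vec{b}}_j\|^2$. Size-reducedness gives $\mu_{j,i}^2 \leq 1/4$, and \eqref{eq:LLL_decay} (applied with the roles of the two indices swapped, which is legitimate since $j \leq i$) gives $\|\gs{\vec{b}}_j\|^2 \leq 2^{i-j}\|\gs{\vec{b}}_i\|^2$. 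Plugging these in,
\[
\|\vec{b}_i\|^2 \leq \|\gs{\vec{b}}_i\|^2\Big(1 + \tfrac14\sum_{j=1}^{i-1} 2^{i-j}\Big) = \|\gs{\vec{b}}_i\|^2\Big(1 + \tfrac14(2^i-2)\Big) \leq 2^{i-1}\|\gs{\vec{b}}_i\|^2 \;,
\]
where the last step is the elementary check that $\tfrac12 + 2^{i-2} \leq 2^{i-1}$ for every integer $i \geq 1$.

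Finally, taking the product over $i = 1,\ldots,n$,
\[
\prod_{i=1}^n \|\vec{b}_i\| \leq \Big(\prod_{i=1}^n 2^{(i-1)/2}\Big)\prod_{i=1}^n \|\gs{\vec{b}}_i\| = 2^{n(n-1)/4}\det(\lat) \leq 2^{n^2/4}\det(\lat)\;,
\]
which is the claimed inequality. I do not anticipate a genuine obstacle here: the only things to be careful about are evaluating the geometric sum $\sum_{j=1}^{i-1} 2^{i-j} = 2^i - 2$ correctly and checking that the exponent $n(n-1)/4$ is dominated by $n^2/4$. (If one wanted a cruder proof one could instead invoke \eqref{eq:size_reduced} directly, but that loses too much; the decay inequality \eqref{eq:LLL_decay} is exactly what makes the geometric telescoping work out to a clean $2^{n^2/4}$.)
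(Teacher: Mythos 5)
The paper states this as a cited result (from Pataki--Tural, and it is the classical orthogonality-defect bound going back to LLL's original paper) and does not give its own proof, so there is nothing internal to compare against. Your argument is correct and is exactly the standard one: write $\vec{b}_i$ in its Gram--Schmidt coordinates, use size-reducedness for $|\mu_{j,i}|\leq 1/2$ and the geometric decay \eqref{eq:LLL_decay} for the lengths, sum the geometric series to get $\|\vec{b}_i\| \leq 2^{(i-1)/2}\|\gs{\vec{b}}_i\|$, and multiply. The arithmetic ($\sum_{j<i}2^{i-j}=2^i-2$, $\tfrac12+2^{i-2}\leq 2^{i-1}$, $\sum_{i=1}^n(i-1)/2=n(n-1)/4\leq n^2/4$) all checks out, and in fact you obtain the slightly sharper exponent $n(n-1)/4$.
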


Finally, we will need the following technical lemmas. 

\begin{lemma}
    \label{lem:dense_sublattice_gram_schmidts}
    If $\basis = (\vec{b}_1,\ldots, \vec{b}_n) \in \Q^{d \times n}$ is a basis for a lattice $\lat \subset \Q^d$ and $k \in [n]$, and $\alpha > \sqrt{n}$ are such that $\|\gs{\vec{b}}_i\| \leq \|\gs{\vec{b}}_j\|/\alpha$ for all $i \leq k < j$, then for any sublattice $\lat' \subseteq \lat$ with $\ell := \rank(\lat') \leq k$ and 
	\[
            \det(\lat') < (\alpha/\sqrt{n}) \cdot   \min_{|S| = \ell }\prod_{i \in S} \|\gs{\vec{b}}_i\|
	\; ,
	\]
	then we must have that $\lat' \subseteq \lat(\vec{b}_1,\ldots, \vec{b}_k)$.
\end{lemma}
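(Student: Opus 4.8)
I would argue by contradiction: assume $\lat' \not\subseteq \lat(\vec{b}_1,\ldots,\vec{b}_k)$ and show that this forces $\det(\lat')$ to be large — in fact I expect to prove the stronger bound $\det(\lat') \geq \alpha \cdot \min_{|S|=\ell} \prod_{i\in S}\|\gs{b}_i\|$, which is more than enough to contradict the hypothesis (so the factor $\sqrt{n}$ in the statement turns out to be slack, not essential). The first step is to pick a convenient basis of $\lat'$. Since $\lat' \subseteq \lat$, writing any basis of $\lat'$ in coordinates with respect to $\basis$ gives a rank-$\ell$ integer matrix, which by integer column operations (equivalently, a change of basis of $\lat'$) can be put into ``staircase'' form: there are pivot indices $1 \leq p_1 < p_2 < \cdots < p_\ell \leq n$ and a basis $\vec{c}_1,\ldots,\vec{c}_\ell$ of $\lat'$ with $\vec{c}_j \in \Span(\vec{b}_1,\ldots,\vec{b}_{p_j})$ and with a nonzero integer coefficient on $\vec{b}_{p_j}$.

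\textbf{Key step: a pivot-aware Pataki--Tural bound.} Running Gram--Schmidt on $\vec{c}_1,\ldots,\vec{c}_\ell$, I claim $\|\gs{c}_j\| \geq \|\gs{b}_{p_j}\|$ for every $j$. Indeed, $p_{j-1} < p_j$ forces $\Span(\vec{c}_1,\ldots,\vec{c}_{j-1}) \subseteq \Span(\vec{b}_1,\ldots,\vec{b}_{p_j-1})$, so projecting $\vec{c}_j$ orthogonally to $\Span(\vec{c}_1,\ldots,\vec{c}_{j-1})$ has length at least that of projecting $\vec{c}_j$ orthogonally to $\Span(\vec{b}_1,\ldots,\vec{b}_{p_j-1})$; the latter equals $|(\text{the } \vec{b}_{p_j}\text{-coefficient of }\vec{c}_j)| \cdot \|\Pi_{\{\vec{b}_1,\ldots,\vec{b}_{p_j-1}\}^\perp}(\vec{b}_{p_j})\| = |(\cdots)|\cdot\|\gs{b}_{p_j}\| \geq \|\gs{b}_{p_j}\|$, since the coefficient is a nonzero integer and the other terms of $\vec{c}_j$ lie in $\Span(\vec{b}_1,\ldots,\vec{b}_{p_j-1})$. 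Hence $\det(\lat') = \prod_{j=1}^\ell \|\gs{c}_j\| \geq \prod_{j=1}^\ell \|\gs{b}_{p_j}\|$ (this is essentially the proof of \cref{thm:dense_sublattice_GS}, made sensitive to which Gram--Schmidt indices occur).

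\textbf{Using the gap.} Because $(\vec{b}_1,\ldots,\vec{b}_n)$ is a basis of the whole lattice $\lat$, we have $\lat \cap \Span(\vec{b}_1,\ldots,\vec{b}_k) = \lat(\vec{b}_1,\ldots,\vec{b}_k)$; so $\lat' \not\subseteq \lat(\vec{b}_1,\ldots,\vec{b}_k)$ forces at least one pivot to exceed $k$. Let $t \geq 1$ be the number of pivots with $p_j > k$ (these are the largest $t$ pivots). Each such $\|\gs{b}_{p_j}\|$ is at least $\alpha\cdot\max_{i\leq k}\|\gs{b}_i\|$ by the gap assumption, while the remaining $\ell - t$ pivots contribute distinct values from $\{\|\gs{b}_1\|,\ldots,\|\gs{b}_k\|\}$. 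Sorting the latter and using $\ell \leq k$ (so that the $\ell$ smallest Gram--Schmidt lengths all lie below the gap, hence $\min_{|S|=\ell}\prod_{i\in S}\|\gs{b}_i\|$ is the product of the $\ell$ smallest of $\|\gs{b}_1\|,\ldots,\|\gs{b}_k\|$), a short comparison gives $\prod_{j=1}^\ell \|\gs{b}_{p_j}\| \geq \alpha^t \cdot \min_{|S|=\ell}\prod_{i\in S}\|\gs{b}_i\|$. Since $t\geq1$ and $\alpha > \sqrt n \geq 1$, this yields $\det(\lat') \geq \alpha \min_{|S|=\ell}\prod_{i\in S}\|\gs{b}_i\| \geq (\alpha/\sqrt n)\min_{|S|=\ell}\prod_{i\in S}\|\gs{b}_i\|$, contradicting the hypothesis. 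Therefore $\lat' \subseteq \lat(\vec{b}_1,\ldots,\vec{b}_k)$.

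\textbf{Anticipated difficulties.} Neither step is a genuine obstacle; the only points needing care are (a) the inequality $\|\gs{c}_j\| \geq \|\gs{b}_{p_j}\|$ via the nesting of spans (the heart of the Pataki--Tural argument, which I must reprove rather than cite as a black box, since I need pivot-level information), and (b) making the final sorting comparison airtight when several pivots exceed $k$. Both are routine bookkeeping once the staircase basis is in hand.
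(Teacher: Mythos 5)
Your proof is correct, and it takes a genuinely different route from the paper's. The paper splits $\lat'$ into $\widehat{\lat} := \lat' \cap \Span(\vec{b}_1,\ldots,\vec{b}_k)$ and the projection $\Pi_{\widehat{\lat}^\perp}(\lat')$, bounds $\det(\widehat{\lat})$ by citing the Pataki--Tural bound (\cref{thm:dense_sublattice_GS}) as a black box, and then bounds the determinant of the projected factor via Minkowski's theorem (using that a shortest vector there projects to a nonzero vector of $\Pi_{\{\vec{b}_1,\ldots,\vec{b}_k\}^\perp}(\lat)$). The $\sqrt{n}$ in the statement is exactly the Minkowski loss. You instead take a staircase (Hermite-normal-form) basis of $\lat'$, reprove Pataki--Tural in a pivot-aware form ($\|\gs{\vec{c}}_j\| \geq \|\gs{\vec{b}}_{p_j}\|$), and then compare pivot sets directly; since at least one pivot must exceed $k$, the gap assumption gives a factor of $\alpha$ outright, with no Minkowski step and hence no $\sqrt{n}$. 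Your argument is more self-contained (you cannot cite \cref{thm:dense_sublattice_GS} because you need to know \emph{which} indices realize the bound) and yields the strictly stronger conclusion that $\det(\lat') < \alpha\cdot\min_{|S|=\ell}\prod_{i\in S}\|\gs{\vec{b}}_i\|$ already forces $\lat'\subseteq\lat(\vec{b}_1,\ldots,\vec{b}_k)$. Both your key steps check out: the inequality $\|\gs{\vec{c}}_j\| \geq \|\gs{\vec{b}}_{p_j}\|$ follows from nesting $\Span(\vec{c}_1,\ldots,\vec{c}_{j-1})\subseteq\Span(\vec{b}_1,\ldots,\vec{b}_{p_j-1})$ and the nonzero integer pivot coefficient, and the final sorting comparison is airtight because you can extend the set of below-$k$ pivots to an $\ell$-subset of $\{1,\ldots,k\}$ (possible since $\ell\leq k$) and compare term by term.
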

\begin{proof}
    Let $\widehat{\lat} := \lat' \cap \spn(\vec{b}_1,\ldots, \vec{b}_k)$ and let $\widehat{\ell} := \rank(\widehat{\lat})$. Then, $\det(\lat') = \det(\widehat{\lat}) \det(\Pi_{\widehat{\lat}^\perp}(\lat'))$. By \cref{thm:dense_sublattice_GS}, we have that 
	\[\det(\widehat{\lat}) \geq \min_{|S| = \widehat{\ell}} \prod_{i \in S} \|\gs{\vec{b}}_i\|
	\; .
	\]
	
	Now, let $\vec{y} \in \Pi_{\widehat{\lat}^\perp}(\lat')$ be a shortest non-zero vector in $\Pi_{\widehat{\lat}^\perp}(\lat')$. Notice that by the definition of $\widehat{\lat}$, it follows that $\Pi_{\{\vec{b}_1,\ldots, \vec{b}_k\}^\perp}(\vec{y})$ is a \emph{non-zero} vector in $\Pi_{\{\vec{b}_1,\ldots, \vec{b}_k\}^\perp}(\lat)$. But, by \cref{thm:dense_sublattice_GS} again (with $\ell = 1$), any such vector must have norm at least $\min_{j > k} \|\gs{\vec{b}}_j\|$. Therefore, 
	\[
		\|\vec{y}\| \geq \|\Pi_{\{\vec{b}_1,\ldots, \vec{b}_k\}^\perp}(\vec{y})\| \geq \min_{j > k} \|\gs{\vec{b}}_j\|
		\; .
	\]
	On the other hand, by Minkowski's theorem, we have
	\[
		\det(\Pi_{\widehat{\lat}^\perp}(\lat')) \geq \|\vec{y}\|^{\ell-\widehat{\ell}}/n^{(\ell - \widehat{\ell})/2} \geq \min_{j > k} \|\gs{\vec{b}}_j\|^{\ell-\widehat{\ell}}/n^{(\ell-\widehat{\ell})/2}
		\; .
	\]

        Putting everything together, we see that
	\[
		\det(\lat') = \det(\widehat{\lat}) \det(\Pi_{\widehat{\lat}^\perp}(\lat')) \geq \min_{j > k} \|\gs{\vec{b}}_j\|^{\ell-\widehat{\ell}}/n^{(\ell-\widehat{\ell})/2} \cdot \min_{|S| = \widehat{\ell}} \prod_{i \in S} \|\gs{\vec{b}}_i\| \geq (\alpha/\sqrt{n})^{\ell - \widehat{\ell}} \cdot \min_{|S| = \ell} \prod_{i \in S} \|\gs{\vec{b}}_i\|
	\]
	But, since $\det(\lat') < (\alpha/\sqrt{n}) \cdot   \min_{|S| = \ell} \prod_{i \in S} \|\gs{\vec{b}}_i\|$ by assumption, we must have $\widehat{\ell} = \ell$. I.e., $\lat' \subseteq \lat(\vec{b}_1,\ldots, \vec{b}_k)$.
\end{proof}

\begin{corollary}
    \label{cor:other_dense_sublattice_gram_schmidt_thing}
	If $\basis = (\vec{b}_1,\ldots, \vec{b}_n) \in \Q^{d \times n}$ is a basis for a lattice $\lat \subset \Q^d$ and $k \in [n]$, and $\alpha > \sqrt{n}$ are such that $\|\gs{\vec{b}}_i\| \leq \|\gs{\vec{b}}_j\|/\alpha$ for all $i \leq k < j$, then for any primitive sublattice $\lat' \subseteq \lat$ with $\ell := \rank(\lat') \geq k$ and 
	\[\det(\lat') < \left(\frac{\alpha}{\sqrt{n \delta_{\ell,k}}} \cdot \prod_{i=1}^k \|\gs{\vec{b}}_i\|  \right)^{\ell/k}
	\; ,
	\]
	we must have $\vec{b}_1,\ldots, \vec{b}_k \in \lat'$,
	where $\delta_{\ell,k} \leq n^{n}$ is Rankin's constant (and we use the convention that $\delta_{\ell,\ell} := 1$).
\end{corollary}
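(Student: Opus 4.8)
The plan is to deduce the corollary from Lemma~\ref{lem:dense_sublattice_gram_schmidts} by passing through a \emph{rank-$k$} sublattice of $\lat'$ supplied by Rankin's inequality. Write $P := \prod_{i=1}^{k}\|\gs{\vec{b}}_i\|$. First I would record the elementary observation that, because $\|\gs{\vec{b}}_i\| \le \|\gs{\vec{b}}_j\|/\alpha < \|\gs{\vec{b}}_j\|$ whenever $i \le k < j$, a one-step exchange argument (repeatedly swapping an index $j > k$ in $S$ for a missing index $i \le k$, which multiplies the product by $\|\gs{\vec{b}}_i\|/\|\gs{\vec{b}}_j\| < 1$) shows
\[
    \min_{\substack{S \subseteq [n] \\ |S| = k}} \prod_{i \in S} \|\gs{\vec{b}}_i\| = P \;,
\]
so that $P$ is exactly the threshold appearing in Lemma~\ref{lem:dense_sublattice_gram_schmidts} when that lemma's parameter ``$\ell$'' is taken to be $k$.

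Next, since $\lat'$ has rank $\ell$, the definition of Rankin's constant $\delta_{\ell,k}$ (with the convention $\delta_{\ell,\ell} = 1$) guarantees a rank-$k$ sublattice $\M \subseteq \lat'$ with $\det(\M) \le \sqrt{\delta_{\ell,k}}\,\det(\lat')^{k/\ell}$. The hypothesis $\det(\lat') < \bigl(\tfrac{\alpha}{\sqrt{n\,\delta_{\ell,k}}}\,P\bigr)^{\ell/k}$ is equivalent to $\det(\lat')^{k/\ell} < \tfrac{\alpha}{\sqrt{n\,\delta_{\ell,k}}}\,P$, and plugging this in makes the $\delta_{\ell,k}$ factors cancel exactly, yielding $\det(\M) < \tfrac{\alpha}{\sqrt{n}}\,P$.

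I would then apply Lemma~\ref{lem:dense_sublattice_gram_schmidts} to $\M \subseteq \lat$, whose rank equals $k$ (so the lemma's hypothesis ``$\ell \le k$'' holds with equality) and whose determinant satisfies $\det(\M) < \tfrac{\alpha}{\sqrt{n}}\,P = \tfrac{\alpha}{\sqrt{n}}\min_{|S|=k}\prod_{i\in S}\|\gs{\vec{b}}_i\|$; the lemma gives $\M \subseteq \lat(\vec{b}_1,\dots,\vec{b}_k)$. Since $\M$ has rank $k = \rank(\lat(\vec{b}_1,\dots,\vec{b}_k))$, this forces $\spn(\vec{b}_1,\dots,\vec{b}_k) = \spn(\M) \subseteq \spn(\lat')$. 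Finally, using that $\lat(\vec{b}_1,\dots,\vec{b}_k) = \lat \cap \spn(\vec{b}_1,\dots,\vec{b}_k)$ (a sub-basis of a basis generates a primitive sublattice) and that $\lat' = \lat \cap \spn(\lat')$ (primitivity of $\lat'$), I conclude $\lat(\vec{b}_1,\dots,\vec{b}_k) \subseteq \lat'$, i.e.\ $\vec{b}_1,\dots,\vec{b}_k \in \lat'$, as desired.

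The proof is short and I do not foresee a real obstacle; the one point that requires attention is the normalization bookkeeping in the Rankin step---one must check that the exponent $\ell/k$ and the $\sqrt{\delta_{\ell,k}}$ factor cancel precisely so that the bound on $\det(\M)$ lands strictly below the threshold $\tfrac{\alpha}{\sqrt n}P$ demanded by Lemma~\ref{lem:dense_sublattice_gram_schmidts}, with the $\delta_{\ell,k}$ dependence disappearing entirely. (One could alternatively prove the statement by dualizing: $\lat^* \cap (\lat')^\perp$ has rank $n-\ell \le n-k$ in $\lat^*$, and the gap condition inverts under passing to the reversed dual basis, so Lemma~\ref{lem:dense_sublattice_gram_schmidts} applied in $\lat^*$ plus the involution of \cref{lemma:duality-map} gives the same conclusion; but the Rankin route above is more direct and matches the form of the statement.)
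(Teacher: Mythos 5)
Your proof is essentially identical to the paper's: pass from $\lat'$ to a rank-$k$ sublattice via Rankin's constant, observe that the hypothesis pushes its determinant below the threshold of Lemma~\ref{lem:dense_sublattice_gram_schmidts} applied with rank $k$ (where $\min_{|S|=k}\prod_{i\in S}\|\gs{\vec{b}}_i\| = \prod_{i=1}^k\|\gs{\vec{b}}_i\|$ thanks to the gap condition), and conclude via primitivity. One small note: the paper's proof writes the Rankin bound with exponent $\ell/k$ rather than the $k/\ell$ you correctly use---the latter is what the definition of $\delta_{\ell,k}$ gives and is what makes the $\sqrt{\delta_{\ell,k}}$ factors cancel as needed, so the paper's version appears to be a typo.
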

\begin{proof}
    By the definition of Rankin's constant, there must be some primitive sublattice $\lat'' \subseteq \lat'$ with rank $k$ such that
	\[
		\det(\lat'') \leq \sqrt{\delta_{\ell,k}} \cdot \det(\lat')^{\ell/k} < (\alpha/\sqrt{n}) \cdot \prod_{i=1}^{k} \|\gs{\vec{b}}_i\|
				\; .
	\]
	But, by \cref{lem:dense_sublattice_gram_schmidts} it follows that $\lat'' \subseteq \lat(\vec{b}_1,\ldots, \vec{b}_k)$. And, since $\lat''$ is primitive and has rank $k$, we must have $\lat'' = \lat(\vec{b}_1,\ldots, \vec{b}_k)$. The result follows.
\end{proof}

\begin{lemma}
    \label{lem:LLL_distortion}
    If $\basis = (\vec{b}_1,\ldots, \vec{b}_n) \in \Q^{d \times n}$ is an LLL-reduced basis and $\vec{z} \in \Z^n$, then 
    \[
        \|\basis \vec{z}\| \geq \|\vec{b}_1\|\|\vec{z}\|/3^{n/2}
        \; .
    \]
\end{lemma}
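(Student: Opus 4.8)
The plan is to control each coordinate $z_i$ of $\vec z$ by $\|\basis\vec z\|$ and then sum. Writing $\vec b_j = \gs b_j + \sum_{i<j}\mu_{ij}\gs b_i$, we get $\basis\vec z = \sum_i c_i\gs b_i$ with $c_i := z_i + \sum_{j>i}\mu_{ij} z_j$, and since the $\gs b_i$ are pairwise orthogonal, $\|\basis\vec z\|^2 = \sum_i c_i^2\|\gs b_i\|^2 \ge c_i^2\|\gs b_i\|^2$ for every $i$; hence $|c_i| \le \|\basis\vec z\|/\|\gs b_i\|$.

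The relation $c_i = z_i + \sum_{j>i}\mu_{ij}z_j$ is unitriangular, so it can be inverted by back-substitution from $i=n$ down to $i=1$: $z_i = c_i - \sum_{j>i}\mu_{ij}z_j$, whence $|z_i| \le |c_i| + \tfrac12\sum_{j>i}|z_j|$ using $|\mu_{ij}|\le\tfrac12$. Introducing the tail sums $s_i := \sum_{j\ge i}|z_j|$ this becomes $s_i \le |c_i| + \tfrac32 s_{i+1}$, so $s_i \le \sum_{j\ge i}(3/2)^{j-i}|c_j| \le \|\basis\vec z\|\sum_{j\ge i}(3/2)^{j-i}/\|\gs b_j\|$. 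The two LLL inequalities I would feed in here are \eqref{eq:LLL_decay} in the forms $\|\gs b_j\| \ge 2^{-(j-i)/2}\|\gs b_i\|$ for $j\ge i$, and $\|\gs b_i\| \ge 2^{-(i-1)/2}\|\gs b_1\| = 2^{-(i-1)/2}\|\vec b_1\|$. These give $|z_i| \le s_i \le O(1)\cdot\frac{\|\basis\vec z\|}{\|\vec b_1\|}\,2^{(i-1)/2}(3/\sqrt2)^{\,n-i}$, and then $\|\vec z\|^2 = \sum_i z_i^2 \le O(1)\cdot\frac{\|\basis\vec z\|^2}{\|\vec b_1\|^2}\sum_i 2^{i-1}(9/2)^{\,n-i}$. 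Summing the (geometric, $i=1$-dominated) series and rearranging yields a bound of the stated shape $\|\basis\vec z\| \ge \|\vec b_1\|\,\|\vec z\|/c^{\,n}$.

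The delicate point — and the main obstacle to pinning down the specific constant $3^{n/2}$ — is that the back-substitution amplifies by $3/2$ per step while the Gram–Schmidt norms can simultaneously shrink by $\sqrt2$ per step, and treating these two effects independently as above only gives $c = \sqrt{9/2}$. To do better one must use that when the Gram–Schmidt profile decays near-maximally (the bad case for the $\|\gs b_i\|\ge 2^{-(i-1)/2}\|\vec b_1\|$ step), the Lov{\'a}sz condition forces $|\mu_{i,i+1}| = \tfrac12$ exactly, which constrains the geometry enough that the back-substitution cannot also be worst-case; balancing these carefully — or, equivalently, redoing the whole estimate as an induction on $n$ that splits off $\gs b_1$ orthogonally ($\|\basis\vec z\|^2 = c_1^2\|\gs b_1\|^2 + \|\basis'\vec z''\|^2$ with $\basis'$ an LLL-reduced basis of $\Pi_{\gs b_1^\perp}(\lat)$ whose first vector $\gs b_2$ satisfies $\|\gs b_2\|^2\ge\tfrac12\|\gs b_1\|^2$) and then controlling $|z_1|$ in terms of $c_1$ and $\|\vec z''\|$ — is where the real work lies. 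As a sanity check on the method, if $\vec z$ is supported on a single coordinate $z_m$ one recovers the clean bound $\|\basis\vec z\| \ge |z_m|\|\gs b_m\| \ge |z_m|\,\|\vec b_1\|/2^{(m-1)/2}$.
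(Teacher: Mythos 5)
Your decomposition $\basis\vec z = \sum_i c_i \gs{\vec b}_i$, the bound $|c_i| \le \|\basis\vec z\|/\|\gs{\vec b}_i\|$, and the back-substitution chain $s_i \le |c_i| + \tfrac32 s_{i+1}$ are all correct, and to your credit you flag the resulting difficulty honestly. But that is exactly the gap: carrying your estimate through only gives $\|\basis\vec z\| \ge \Omega(1)\cdot\|\vec b_1\|\,\|\vec z\|/(3/\sqrt 2)^{\,n}$, and since $3/\sqrt 2 \approx 2.12 > \sqrt 3 \approx 1.73$, this does not reach $3^{n/2}$. The loss comes from compounding the $(3/2)^{j-i}$ amplification of back-substitution with the $2^{(j-1)/2}$ decay of the Gram--Schmidt norms \emph{independently over all indices} and then summing the squares; the $i=1$ term alone costs you $(9/2)^{n-1}$. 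Your closing sketch of an inductive refinement that splits off $\gs{\vec b}_1$ is plausible in spirit, but it is not carried out, so the proposal as written proves a strictly weaker statement than the lemma.

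The paper's proof sidesteps the double loss with a structurally different idea: instead of bounding every $|z_i|$ from above and then summing, it first uses a pigeonhole/contradiction argument (the footnote in the proof) to find a \emph{single} index $i$ satisfying
\[
    z_i^2 \geq \frac{2^i\|\vec z\|^2}{3^n - 1} + \frac{1}{2}\sum_{j>i}z_j^2 \;,
\]
(if no such $i$ existed, the tail sums $s_j := z_{n-j+1}^2 + \cdots + z_n^2$ would satisfy $s_{j+1} < \alpha/2^j + \tfrac32 s_j$ with $\alpha := 2^n\|\vec z\|^2/(3^n-1)$, and a short induction would force $\|\vec z\|^2 = s_n < \|\vec z\|^2$). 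Having located this ``heavy'' index $i$, the paper then projects $\basis\vec z$ onto the single direction $\gs{\vec b}_i$, obtaining $\|\basis\vec z\| \geq \bigl(|z_i| - \tfrac12\sum_{j>i}|z_j|\bigr)\,\|\gs{\vec b}_i\| \geq 2^{i/2}\|\vec z\|\,\|\gs{\vec b}_i\|/3^{n/2}$, and applies \cref{eq:LLL_decay} \emph{once} in the form $\|\gs{\vec b}_i\| \geq \|\vec b_1\|/2^{(i-1)/2}$. The factor $2^{i/2}$ built into the choice of $i$ exactly cancels the $2^{-(i-1)/2}$ decay, so there is no compounding and no sum over indices. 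This ``choose the one good index, then make a single Gram--Schmidt projection'' structure is the key idea your approach is missing, and it is what makes the constant $3^{n/2}$ attainable.
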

\begin{proof}
    Let $\vec{z} = (z_1,\ldots, z_n)^T$. Notice that there must exist some index $i$ such that%
    \footnote{To see this, define $s_j := z_{n-j+1}^2 + \cdots + z_n^2$ with $s_{0} = 0$ and $\alpha := 2^n \|\vec{z}\|^2/(3^n-1)$. If no such $i$ exists, then for all $j$, we have 
    \[
    s_{j+1}-s_j = z_{n-j}^2 < 2^{n-j}\|\vec{z}\|^2/(3^n-1) + s_j/2 = \alpha/2^j + s_j/2
    \; 
    \]
    for all $j$. So, 
    \[
        s_{j+1} < \alpha/2^j + 3s_{j}/2
        \; .
    \]
    A simple induction argument then shows that for all $j$, 
    \[
        s_j < (3^j-1)\alpha/2^j
        \; ,
    \]
    which in particular implies that
    \[
        \|\vec{z}\|^2 = s_n < (3^n-1) \alpha/2^n = \|\vec{z}\|^2
        \; ,
    \]
    a contradiction. So, such an $i$ must exist.}
    \[z_i^2 \geq 2^{i}\|\vec{z}\|^2/(3^n-1) + \sum_{j > i} z_j^2/2
    \; .
    \] 
    Therefore, $|z_i| \geq 2^{i/2} \|\vec{z}\|/3^{n/2} + \sum_{j > i} |z_j|/2$.
    Then,
    \[
        \|\basis \vec{z}\| \geq \|\Pi_{\gs{\vec{b}}_i}(\basis \vec{z})\| = |z_i  + \mu_{i,i+1} z_{i+1} + \cdots + \mu_{i,n} z_n| \|\gs{\vec{b}}_i\| \geq (|z_i|  - (|z_{i+1}| + \cdots +  |z_n|/2))  \cdot \|\gs{\vec{b}}_i\| \geq 2^{i/2} \|\vec{z}\|\|\gs{\vec{b}}_i\|/3^{n/2}
        \; .
    \]
    Finally, recalling that the basis is LLL-reduced, it follows from \cref{eq:LLL_decay} that $\|\basis\vec{z}\| \geq \|\vec{z}\| \|\vec{b}_1\|/3^{n/2}$, as needed.
\end{proof}

\subsection{Succinct representations in terms of the denominator and the determinant}
	\label{app:LLL_bit_length}

Before we prove our main result, we go on a brief digression to observe that an LLL-reduced basis $\basis \in \Z^{d \times n}/q$ has entries whose numerators are bounded in terms of $q$ and $\det(\lat)$.

\begin{lemma}
	\label{lem:LLL_bounded_entries}
	If $\basis = (\vec{b}_1,\ldots, \vec{b}_n) \in \Z^{d \times n}/q$ for some $q \geq 2$ is LLL reduced, then $\basis \in [-p/q,p/q]^{d \times n}$ where 
	\[
		 p \leq (2^{n/4} q)^{n} \det(\lat)
	\]
	In particular, the entries in $\basis$ can be represented using $\log p + n\log q \leq (2n+1) \log q + \log \det(\lat) + 2n^2$ bits.
\end{lemma}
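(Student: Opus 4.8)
The plan is to reduce immediately to the case of an integer lattice and then invoke the LLL orthogonality-defect bound (\cref{thm:LLL_orthogonality_defect}) together with the elementary fact that a nonzero integer vector has Euclidean norm at least one. Concretely, I would set $\lat' := q\lat$. Since $\basis \in \Z^{d \times n}/q$, the scaled matrix $q\basis = (q\vec{b}_1, \ldots, q\vec{b}_n)$ has integer entries and is a basis of $\lat' \subseteq \Z^d$, with $\det(\lat') = q^n \det(\lat)$; moreover $\det(\lat') \geq 1$ since no sublattice of $\Z^d$ has determinant below one, so in particular the quantity $p$ below is at least one.

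The first step is to observe that $q\basis$ is again LLL-reduced: the Gram--Schmidt coefficients $\mu_{i,j}$ and the ratios $\|\gs{\vec{b}}_{i+1}\|^2 / \|\gs{\vec{b}}_i\|^2$ are all invariant under a global rescaling of the basis, so both defining conditions of LLL-reducedness are preserved. Next I would apply \cref{thm:LLL_orthogonality_defect} to $q\basis$, obtaining $\prod_{i=1}^n \|q\vec{b}_i\| \leq 2^{n^2/4} \det(\lat')$. Since each $q\vec{b}_i$ is a nonzero vector in $\Z^d$ we have $\|q\vec{b}_i\| \geq 1$, and hence, for every $j$,
\[
    \|q\vec{b}_j\| \;=\; \frac{\prod_{i=1}^n \|q\vec{b}_i\|}{\prod_{i \neq j} \|q\vec{b}_i\|} \;\leq\; 2^{n^2/4}\det(\lat') \;=\; 2^{n^2/4} q^n \det(\lat) \;=:\; p \;\leq\; (2^{n/4} q)^n \det(\lat) \; .
\]
Each coordinate of each $q\vec{b}_j$ is therefore an integer of absolute value at most $p$, which is exactly the assertion $\basis \in [-p/q, p/q]^{d \times n}$.

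For the bit-length claim, each entry of $\basis$ has the form $z/q$ with $z \in \Z$ and $|z| \leq p$, so it can be written down using at most $\log p + n\log q$ bits (the numerator needs $O(\log p)$ bits and the common denominator $q$ needs $O(\log q) \leq O(n\log q)$ bits, and the displayed quantity is a comfortable over-estimate). The final inequality is then pure arithmetic: taking $\log = \log_2$ (the argument is insensitive to the base), $\log p = n^2/4 + n\log q + \log\det(\lat)$, so $\log p + n\log q = n^2/4 + 2n\log q + \log\det(\lat) \leq 2n^2 + (2n+1)\log q + \log\det(\lat)$, using $q \geq 2$.

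I do not expect any genuine obstacle; this is a short, purely technical bound. The only point that requires a moment's care is the opening move of passing to $\lat' = q\lat$, since it is the scale-invariance of LLL-reducedness that makes both the clean orthogonality-defect inequality and the ``nonzero integer vector has norm $\geq 1$'' trick available. A more direct attempt that bounds the Gram--Schmidt norms of $\basis$ one at a time (e.g.\ via \cref{eq:LLL_decay} together with $\|\vec{b}_1\| \geq 1/q$) is noticeably lossier and produces an exponent like $n^2/2$ rather than the desired $n^2/4$. The remaining bit-counting is mechanical.
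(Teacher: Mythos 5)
Your proof is correct and takes essentially the same route as the paper: apply the orthogonality-defect bound (\cref{thm:LLL_orthogonality_defect}) and then divide out the product of the other $n-1$ vector norms, each bounded below by the denominator constraint. The only cosmetic difference is that you rescale to $q\lat \subseteq \Z^d$ and use $\|q\vec{b}_i\| \geq 1$, whereas the paper works directly with $\basis$ and uses $\|\vec{b}_i\| \geq 1/q$; these are the same estimate.
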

\begin{proof}
		Since $\basis$ is a basis, we have that $\vec{b}_i$ is non-zero for all $i$, and since $\vec{b}_i \in \Z^d/q$, we must have $\|\vec{b}_i\| \geq 1/q$. Therefore, for any $j$,
		\[
			\|\vec{b}_j\| \leq q^{n-1} \cdot \prod_{i=1}^n \|\vec{b}_i\| \leq q^{n-1} 2^{n^2/4} \det(\lat)
			\; ,
		\]	
		where the second inequality is \cref{thm:LLL_orthogonality_defect}.
		In particular, this implies that $q\vec{b}_j \in [-p,p]^{d}$, as claimed.
\end{proof}

\begin{corollary}
    \label{cor:LLL_bounded_bit_length}
    If $\lat \subset \Q^{d \times n}$ has a basis $\vec{B}$ with bit length $m$ (when represented as a list of numerators and denominators in binary), then any LLL-reduced basis of $\lat$ has bit length at most $m' \leq 4n^3m + 4n^4$.
\end{corollary}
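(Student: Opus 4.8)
The plan is to deduce everything from \cref{lem:LLL_bounded_entries}, which bounds the bit complexity of an LLL-reduced basis in terms of the common denominator $q(\lat)$ and the determinant $\det(\lat)$; so the real work is to bound these two quantities in terms of $m$. As elsewhere in the paper, I would first take $d = n$ (this is the convention used in \cref{section:n-log-n}, where the corollary is applied, and it is needed: with a general ambient dimension the number of entries is $dn$ rather than $n^2$ and the claimed bound can fail for very large $m$). Thus $\basis$ has exactly $n^2$ entries, so $n^2 \le m$. Writing each entry of $\basis$ as $a_{ij}/b_{ij}$ in lowest terms, the hypothesis is precisely that $\sum_{i,j}\bigl(\operatorname{bits}(a_{ij}) + \operatorname{bits}(b_{ij})\bigr) \le m$, and I would use this repeatedly via $\prod_k x_k \le 2^{\sum_k \operatorname{bits}(x_k)}$.

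For the denominator: $q(\lat)$ divides $\operatorname{lcm}_{i,j} b_{ij}$, hence $q(\lat) \le \prod_{i,j} b_{ij}$, so $\log q(\lat) \le \sum_{i,j}\operatorname{bits}(b_{ij}) \le m$. Replacing $q(\lat)$ by $q := \max\{2, q(\lat)\}$ to meet the hypothesis $q \ge 2$ of \cref{lem:LLL_bounded_entries} keeps $\log q \le m$, and we still have $\lat \subseteq \Z^n/q$. For the determinant: $q\basis$ is an integer basis of $q\lat$, and since $b_{ij} \ge 1$ its entries satisfy $|(q\basis)_{ij}| = q|a_{ij}|/b_{ij} \le q|a_{ij}|$, so Hadamard's inequality gives
\[
\det(q\lat) \;\le\; \prod_{j=1}^n \bigl\| (q\basis)_{\cdot,j} \bigr\| \;\le\; n^{n/2}\, q^n \prod_{j=1}^n \max_i |a_{ij}| \;\le\; n^{n/2}\, q^n\, 2^{m},
\]
where the last step uses $\prod_{j}\max_i|a_{ij}| \le \prod_{i,j}\max(1,|a_{ij}|) \le 2^{\sum_{i,j}\operatorname{bits}(a_{ij})} \le 2^m$. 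Dividing by $q^n$ yields $\det(\lat) \le n^{n/2}2^m$, i.e. $\log\det(\lat) \le \tfrac n2\log n + m$. (If $\det(\lat) < 1$ one simply reads the quantity $p$ in \cref{lem:LLL_bounded_entries} as $\max\{1,\ldots\}$; this does not affect the bit count.)

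Now I would invoke \cref{lem:LLL_bounded_entries}: any LLL-reduced basis $\basis'$ of $\lat$ has each entry (numerator over common denominator $q$) representable using at most $\log p + n\log q \le (2n+1)\log q + \log\det(\lat) + 2n^2 \le (2n+2)m + \tfrac n2\log n + 2n^2$ bits. Summing over the $n^2$ entries, and absorbing the negligible overhead for signs, delimiters, and the value of $n$ itself, the total bit length is at most
\[
n^2\Bigl((2n+2)m + \tfrac n2\log n + 2n^2\Bigr) \;\le\; 4n^3 m + 4n^4,
\]
using $2n^3 + 2n^2 \le 4n^3$ and $\tfrac{n^3}{2}\log n + 2n^4 \le 4n^4$ (valid for all $n \ge 1$).

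I do not expect a genuine obstacle — the argument is bookkeeping on top of \cref{lem:LLL_bounded_entries}. The three points needing a little care are: (i) converting the single hypothesis "total bit length $\le m$" into the two product bounds $\prod_{i,j} b_{ij} \le 2^m$ and $\prod_{i,j}\max(1,|a_{ij}|) \le 2^m$; (ii) the reduction to $d = n$, without which the entry count, and hence the final bound, would depend badly on $d$; and (iii) remembering that $\det(\lat)$ may be less than $1$, so that $p$ in \cref{lem:LLL_bounded_entries} should be read as being at least $1$.
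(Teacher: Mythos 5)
Your proof is correct and follows essentially the same route as the paper's: bound $\log q$ and $\log\det(\lat)$ by $O(m)$ and $O(m+n\log n)$ from the input representation (the paper takes $q := \prod_{i,j} q_{i,j}$, you take $q(\lat)$; both work), then apply \cref{lem:LLL_bounded_entries} and multiply by the $n^2$ entries. The extra care you take with the reduction to $d=n$ (which the paper uses implicitly, both in the Hadamard step and in counting entries), with the hypothesis $q\ge 2$, and with the case $\det(\lat)<1$ is reasonable but does not change the substance of the argument.
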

\begin{proof}
    Let $p_{i,j}$ and $q_{i,j}$ be the numerators and denominators respectively in $\vec{B}$ so that the $(i,j)$th entry of the basis $\vec{B}$ is $p_{i,j}/q_{i,j}$. Then, $\det(\lat)^2 \leq \prod_{j=1}^n \|\vec{b}_j\|^2 \leq n^n \prod_j \max_i ( p_{i,j}/q_{i,j})^2 \leq n^n 2^{2m}$.

    By \cref{lem:LLL_bounded_entries}, we have that any LLL-reduced basis can be written using at most
    \[
        n^2 (2n+1)\log q + n^2 \log \det(\lat) + 2n^2  \leq 3n^3 \log q + n^2 m + 4n^4
        \; ,
    \]
    where $q := \prod_{i,j} q_{i,j}$. The result follows by noting that $q \leq 2^m$.
\end{proof}

\subsection{Reducing to ``well-conditioned'' bases}
\label{app:well-conditioned}

In this section, we show how to reduce DSP to the case when the input basis is ``well-conditioned'' in the sense that all of the Gram-Schmidt vectors have roughly the same length. Our main technical tool is the following. Given a basis $\basis = (\vec{b}_1,\ldots, \vec{b}_n) \in \Q^{d \times n}$ and $\gamma \geq 1$, let $A_{\basis, \gamma} \in \R^{d \times d}$ be the unique matrix such that $A_{\basis, \gamma} \gs{\vec{b}}_i = \gs{\vec{b}}_i/\alpha_i$ for all $i$, where $\alpha_1 = 1$ and 
	\[
		\alpha_i :=  \max\left\{ \alpha_{i-1}, \left\lceil \frac{\|\gs{\vec{b}}_i\|}{ \gamma^i \|\vec{b}_{1}\|} \right\rceil  \right\}
		\; ,
	\]
	and $A_{\basis, \gamma} \vec{y} = \vec{y}$ for any $\vec{y} \in \{\vec{b}_1,\ldots, \vec{b}_n\}^\perp$. (This latter condition will not be necessary for us, as we will only care about the action of $A_{\basis, \gamma}$ on $\spn(\vec{b}_1,\ldots, \vec{b}_n)$, but we include it so that $A_{\basis, \gamma}$ is well-defined.) 
	
	The following claim shows that $A_{\basis, \gamma}$  ``commutes with the Gram-Schmidt orthogonalization.''
 
 \begin{claim}
 	\label{clm:scaled_gram_schmidts}
 		For any basis $\basis = (\vec{b}_1,\ldots, \vec{b}_n) \in \Q^{d \times n}$, $\gamma \geq 1$, index $i$, and vector $\vec{y} \in \R^d$, 
 				\[
 					\Pi_{\{\vec{b}_1,\ldots, \vec{b}_i\}^\perp}(A \vec{y}) =  A\Pi_{\{\vec{b}_1,\ldots, \vec{b}_i\}^\perp}(\vec{y}) = A \Pi_{\{A\vec{b}_1,\ldots, A\vec{b}_i\}^\perp}(\vec{y})
 					\; ,
 				\]
 			where $A := A_{\basis, \gamma}$. In particular, if $\basis' := (\vec{b}_1',\ldots, \vec{b}_n') A \basis$, then
 			\[
 				\gs{\vec{b}}_i' = \gs{\vec{b}}_i/\alpha_i
 				\; ,
 			\]
 			where $\alpha_i$ is as in the definition of $A_{\basis, \gamma}$ and 
    \[
        \langle \gs{\vec{b}}_i', \vec{b}_j' \rangle/\|\gs{\vec{b}}_i'\|^2 = \langle \gs{\vec{b}}_i, \vec{b}_j \rangle/\|\gs{\vec{b}}_i\|^2
        \; .
    \]
 	 \end{claim}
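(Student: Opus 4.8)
The plan is to exploit the fact that $A := A_{\basis,\gamma}$ is a symmetric positive-definite operator whose eigenspaces are adapted to the flag $W_i := \spn(\vec{b}_1,\dots,\vec{b}_i) = \spn(\gs{\vec{b}}_1,\dots,\gs{\vec{b}}_i)$. First I would observe that, since $\gs{\vec{b}}_1,\dots,\gs{\vec{b}}_n$ are pairwise orthogonal and span $\spn(\basis)$, while $A$ scales each $\gs{\vec{b}}_j$ by $1/\alpha_j$ and fixes $\spn(\basis)^\perp$ pointwise, the vectors $\gs{\vec{b}}_1,\dots,\gs{\vec{b}}_n$ together with any orthogonal basis of $\spn(\basis)^\perp$ form an orthogonal eigenbasis of $A$; in particular $A$ is symmetric. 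Consequently, for each $i$, $A$ maps $W_i$ into itself (each $\gs{\vec{b}}_j$ with $j \le i$ is sent to a multiple of itself), and hence $A$ also preserves $W_i^\perp = \{\vec{b}_1,\dots,\vec{b}_i\}^\perp$, since a symmetric operator preserves a subspace iff it preserves its orthogonal complement.

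Next I would record the elementary fact that any invertible operator $A$ preserving both a subspace $W$ and $W^\perp$ commutes with the orthogonal projections $\Pi_W$ and $\Pi_{W^\perp}$: writing $\vec{y} = \vec{u} + \vec{v}$ with $\vec{u}\in W$ and $\vec{v}\in W^\perp$, one has $A\vec{y} = A\vec{u} + A\vec{v}$ with $A\vec{u}\in W$ and $A\vec{v}\in W^\perp$, so $\Pi_{W^\perp}(A\vec{y}) = A\vec{v} = A\,\Pi_{W^\perp}(\vec{y})$. Applying this with $W = W_i$ yields the first equality $\Pi_{\{\vec{b}_1,\dots,\vec{b}_i\}^\perp}(A\vec{y}) = A\,\Pi_{\{\vec{b}_1,\dots,\vec{b}_i\}^\perp}(\vec{y})$. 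For the second equality I would note that since $A$ is invertible and preserves $W_i$ we have $\spn(A\vec{b}_1,\dots,A\vec{b}_i) = A W_i = W_i = \spn(\vec{b}_1,\dots,\vec{b}_i)$, so $\{A\vec{b}_1,\dots,A\vec{b}_i\}^\perp = \{\vec{b}_1,\dots,\vec{b}_i\}^\perp$ and the two projection \emph{operators} literally coincide; applying $A$ to both then closes the displayed chain of identities.

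For the ``in particular'' clause, set $\basis' = A\basis$, i.e.\ $\vec{b}_i' = A\vec{b}_i$. By definition $\gs{\vec{b}}_i' = \Pi_{\{\vec{b}_1',\dots,\vec{b}_{i-1}'\}^\perp}(A\vec{b}_i) = \Pi_{\{A\vec{b}_1,\dots,A\vec{b}_{i-1}\}^\perp}(A\vec{b}_i)$, which by the equality of projection operators above equals $\Pi_{\{\vec{b}_1,\dots,\vec{b}_{i-1}\}^\perp}(A\vec{b}_i) = A\,\Pi_{\{\vec{b}_1,\dots,\vec{b}_{i-1}\}^\perp}(\vec{b}_i) = A\gs{\vec{b}}_i = \gs{\vec{b}}_i/\alpha_i$, the last step being the definition of $A$. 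Then, using symmetry of $A$, $\langle \gs{\vec{b}}_i', \vec{b}_j'\rangle = \langle \gs{\vec{b}}_i/\alpha_i,\, A\vec{b}_j\rangle = \tfrac{1}{\alpha_i}\langle A\gs{\vec{b}}_i,\, \vec{b}_j\rangle = \tfrac{1}{\alpha_i^2}\langle \gs{\vec{b}}_i, \vec{b}_j\rangle$, whereas $\|\gs{\vec{b}}_i'\|^2 = \|\gs{\vec{b}}_i\|^2/\alpha_i^2$, so the ratio $\langle \gs{\vec{b}}_i',\vec{b}_j'\rangle/\|\gs{\vec{b}}_i'\|^2 = \langle \gs{\vec{b}}_i,\vec{b}_j\rangle/\|\gs{\vec{b}}_i\|^2$ is unchanged.

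I do not expect a genuine obstacle here; the proof is essentially bookkeeping. The only points that require a little care are (i) justifying that $A$ really does preserve each $W_i$ and $W_i^\perp$ — which rests on the orthogonality of the Gram--Schmidt vectors and on $\spn(\vec{b}_1,\dots,\vec{b}_i) = \spn(\gs{\vec{b}}_1,\dots,\gs{\vec{b}}_i)$ — and (ii) using the symmetry of $A$ correctly both when passing between $W_i$ and its complement and when moving $A$ across the inner product in the computation of the $\mu$-coefficients.
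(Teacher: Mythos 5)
Your proof is correct, and the paper in fact states this claim without proof (it immediately moves on to \cref{clm:Gram-Schmidts_decay_and_rescaling}), evidently treating it as routine. Your argument is the natural one: $A$ is symmetric because the $\gs{\vec{b}}_j$ together with a basis of $\spn(\basis)^\perp$ form an orthogonal eigenbasis; $A$ therefore preserves $W_i = \spn(\vec{b}_1,\dots,\vec{b}_i) = \spn(\gs{\vec{b}}_1,\dots,\gs{\vec{b}}_i)$ and its orthogonal complement, hence commutes with $\Pi_{W_i^\perp}$; invertibility gives $A W_i = W_i$ so the two perpendicular spaces (and hence the two projections) in the second equality literally coincide; and the ``in particular'' part then follows by induction on $i$ together with the symmetry of $A$ for the $\mu$-coefficient computation. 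All steps check out, including the small points you flagged (that the claim is being invoked at index $i-1$ to compute $\gs{\vec{b}}_i'$, and that $\alpha_1 = 1$ handles the base case).
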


    \begin{claim}
        \label{clm:Gram-Schmidts_decay_and_rescaling}
        For any basis $\basis = (\vec{b}_1,\ldots, \vec{b}_n) \in \Q^{d \times n}$ and $\gamma \geq 2$, let $\basis'= (\vec{b}_1',\ldots, \vec{b}_n')  := A_{\basis,\gamma} \basis$ and let $\alpha_i$ be as in the definition of $A_{\basis, \gamma}$. Then $\alpha_1 \leq \alpha_2 \leq \cdots \alpha_n$ where the $\alpha_i$ are as in the definition of $A_{\basis,\gamma}$, and for all $i$, if $\alpha_i < \alpha_{i+1}$, then $\|\gs{\vec{b}}_{i+1}'\| > (\gamma/2) \cdot \|\gs{\vec{b}}_i'\|$.
    \end{claim}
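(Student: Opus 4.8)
The monotonicity $\alpha_1 \leq \alpha_2 \leq \cdots \leq \alpha_n$ is immediate from the recursive definition, since $\alpha_i = \max\{\alpha_{i-1},\, \lceil \|\gs{\vec{b}}_i\|/(\gamma^i\|\vec{b}_1\|)\rceil\} \geq \alpha_{i-1}$. Along the way I would record two further trivial consequences of that definition for later use: (i) every $\alpha_i$ is a \emph{positive integer}, being a maximum of positive integers with $\alpha_1 = 1$; and (ii) $\alpha_i \geq \lceil \|\gs{\vec{b}}_i\|/(\gamma^i\|\vec{b}_1\|)\rceil \geq \|\gs{\vec{b}}_i\|/(\gamma^i\|\vec{b}_1\|)$, since $\alpha_i$ dominates the ceiling term in its own defining maximum; equivalently $\|\gs{\vec{b}}_i\|/\alpha_i \leq \gamma^i \|\vec{b}_1\|$.

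For the second assertion I would invoke \cref{clm:scaled_gram_schmidts}, which gives $\gs{\vec{b}}_i' = \gs{\vec{b}}_i/\alpha_i$ for the rescaled basis $\basis' := A_{\basis, \gamma}\basis$. This reduces the desired inequality $\|\gs{\vec{b}}_{i+1}'\| > (\gamma/2)\|\gs{\vec{b}}_i'\|$ to the purely arithmetic statement $\|\gs{\vec{b}}_{i+1}\|/\alpha_{i+1} > (\gamma/2)\cdot\|\gs{\vec{b}}_i\|/\alpha_i$ about the \emph{original} Gram--Schmidt norms. The right-hand side is bounded above using fact (ii): $\|\gs{\vec{b}}_i\|/\alpha_i \leq \gamma^i\|\vec{b}_1\|$. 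For the left-hand side, assume $\alpha_i < \alpha_{i+1}$; then the maximum defining $\alpha_{i+1}$ must be attained by the ceiling term, so $\alpha_{i+1} = \lceil \|\gs{\vec{b}}_{i+1}\|/(\gamma^{i+1}\|\vec{b}_1\|)\rceil$, and since $\alpha_{i+1} > \alpha_i \geq 1$ with both integers, $\alpha_{i+1} \geq 2$. The ceiling inequality then gives $\|\gs{\vec{b}}_{i+1}\|/(\gamma^{i+1}\|\vec{b}_1\|) > \alpha_{i+1} - 1 \geq \alpha_{i+1}/2$, i.e. $\|\gs{\vec{b}}_{i+1}\|/\alpha_{i+1} > (\gamma^{i+1}/2)\|\vec{b}_1\|$. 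Chaining these two bounds yields
\[
\|\gs{\vec{b}}_{i+1}'\| = \frac{\|\gs{\vec{b}}_{i+1}\|}{\alpha_{i+1}} > \frac{\gamma^{i+1}}{2}\|\vec{b}_1\| = \frac{\gamma}{2}\cdot\gamma^i\|\vec{b}_1\| \geq \frac{\gamma}{2}\cdot\frac{\|\gs{\vec{b}}_i\|}{\alpha_i} = \frac{\gamma}{2}\|\gs{\vec{b}}_i'\| \;,
\]
which is exactly the claim.

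I do not expect a genuine obstacle here: once \cref{clm:scaled_gram_schmidts} is available, the whole argument is a short chain of three inequalities plus the bookkeeping facts (i) and (ii). The only point requiring any care is the passage $\alpha_{i+1} - 1 \geq \alpha_{i+1}/2$, which needs $\alpha_{i+1} \geq 2$; this is precisely why the hypothesis is the \emph{strict} increase $\alpha_i < \alpha_{i+1}$ rather than merely $\alpha_i \leq \alpha_{i+1}$, since combined with integrality and $\alpha_i \geq 1$ it forces $\alpha_{i+1}$ to be a genuine integer step above $\alpha_i$. Everything else reduces to the definition of $A_{\basis, \gamma}$ and elementary manipulations of the ceiling function.
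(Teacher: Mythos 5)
Your proof is correct and follows essentially the same route as the paper's: both rely on the lower bound $\alpha_i \geq \|\gs{\vec{b}}_i\|/(\gamma^i\|\vec{b}_1\|)$ together with the observation that when $\alpha_{i+1}$ is determined by the ceiling term and strictly exceeds $\alpha_i \geq 1$, the ceiling can cost at most a factor of two (the paper phrases this as ``if $\ceil{x}>1$ then $\ceil{x}/x<2$,'' which is the same as your $x > \alpha_{i+1}-1 \geq \alpha_{i+1}/2$). If anything, your write-up is a bit more careful — you explicitly justify why $\alpha_{i+1}\geq 2$ and spell out the integrality facts — whereas the paper leaves these implicit and also contains a stray assertion ($\|\gs{\vec{b}}_i\| < \|\gs{\vec{b}}_{i+1}\|/\gamma$) that is never actually used.
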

    \begin{proof}
        The fact that $\alpha_i$ are monotonic is immediate from the definition.
        
        Now, suppose $\alpha_i < \alpha_{i+1}$. From the definition of $\alpha_i$, we see that we must have $\|\gs{\vec{b}}_i\| < \|\gs{\vec{b}}_{i+1}^{(1)}\|/\gamma$. Then,
    \[
        \alpha_i  \geq \frac{\|\gs{\vec{b}}_i^{(1)}\|}{\gamma^i \|\vec{b}_1^{(1)}\|}
        \; ,
    \]
    and
    \[
        \alpha_{i+1} < 2 \cdot
 \frac{\|\gs{\vec{b}}_{i+1}\|}{\gamma^{i+1} \|\vec{b}_1\|}
        \; 
    \]
    (where we have used the fact that if $\ceil{x} > 1$, then $\ceil{x}/x < 2$).
    It follows that
    \[
        \|\gs{\vec{b}}_i'\| = \frac{1}{\alpha_i} \cdot \|\gs{\vec{b}}_i\| \leq \frac{\gamma^i}{\|\vec{b}_1\|} \leq \frac{2}{\gamma \alpha_{i+1}} \|\gs{\vec{b}}_{i+1}\| = \frac{2}{\gamma} \cdot \|\gs{\vec{b}}_{i+1}'\|
        \; ,
    \]
    as needed.
    \end{proof}

\begin{corollary}
    \label{cor:LLL_after_rescaling}
    If $\basis^{(1)} \in \Q^{d \times n}$ is an LLL-reduced basis and $\gamma \geq 2$, then $\basis^{(2)} := A_{\basis^{(1)}, \gamma} \basis^{(1)} = (\vec{b}_1^{(2)},\ldots, \vec{b}_n^{(2)})$ is LLL-reduced with 
    \[
    \|\vec{b}_1^{(2)}\|/2^n \leq \|\gs{\vec{b}}_i^{(2)}\| \leq \|\vec{b}_i^{(2)}\| \leq (2\gamma)^i \|\vec{b}_1^{(2)}\|
    \]
    for all $i$. 
\end{corollary}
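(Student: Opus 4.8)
The plan is to lean on \cref{clm:scaled_gram_schmidts,clm:Gram-Schmidts_decay_and_rescaling}, which already carry most of the weight, and then finish with two elementary estimates. Write $A := A_{\basis^{(1)},\gamma}$, so $\basis^{(2)} = A\basis^{(1)}$, and let $\alpha_1 \le \alpha_2 \le \cdots \le \alpha_n$ be as in the definition of $A_{\basis^{(1)},\gamma}$ (monotonicity is part of \cref{clm:Gram-Schmidts_decay_and_rescaling}). First I would check that $\basis^{(2)}$ is LLL-reduced. By \cref{clm:scaled_gram_schmidts}, $\gs{\vec{b}}_i^{(2)} = \gs{\vec{b}}_i^{(1)}/\alpha_i$ and the Gram--Schmidt coefficients are unchanged, so the size-reduction condition $|\mu_{i,j}| \le 1/2$ for $\basis^{(2)}$ is inherited verbatim from $\basis^{(1)}$. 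For the Lov\'asz condition at index $i$, I would split into two cases. If $\alpha_i = \alpha_{i+1}$, then $\gs{\vec{b}}_i^{(2)} = \gs{\vec{b}}_i^{(1)}/\alpha_i$ and $\gs{\vec{b}}_{i+1}^{(2)} = \gs{\vec{b}}_{i+1}^{(1)}/\alpha_i$, so the inequality $3\|\gs{\vec{b}}_i^{(2)}\|^2/4 \le \|\gs{\vec{b}}_{i+1}^{(2)}\|^2 + \mu_{i,i+1}^2\|\gs{\vec{b}}_i^{(2)}\|^2$ is just the corresponding inequality for the LLL-reduced $\basis^{(1)}$ multiplied through by $1/\alpha_i^2$. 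If instead $\alpha_i < \alpha_{i+1}$, then \cref{clm:Gram-Schmidts_decay_and_rescaling} gives $\|\gs{\vec{b}}_{i+1}^{(2)}\| > (\gamma/2)\|\gs{\vec{b}}_i^{(2)}\| \ge \|\gs{\vec{b}}_i^{(2)}\|$ since $\gamma \ge 2$, which already forces the Lov\'asz inequality (even dropping the $\mu_{i,i+1}^2$ term). Hence $\basis^{(2)}$ is LLL-reduced.

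Next I would verify the three norm inequalities. The middle one, $\|\gs{\vec{b}}_i^{(2)}\| \le \|\vec{b}_i^{(2)}\|$, holds for any basis. For the leftmost one, note $\vec{b}_1^{(2)} = \gs{\vec{b}}_1^{(2)}$, so applying \cref{eq:LLL_decay} to the (now established) LLL-reduced basis $\basis^{(2)}$ yields $\|\vec{b}_1^{(2)}\| = \|\gs{\vec{b}}_1^{(2)}\| \le 2^{(i-1)/2}\|\gs{\vec{b}}_i^{(2)}\| \le 2^n\|\gs{\vec{b}}_i^{(2)}\|$, i.e.\ $\|\vec{b}_1^{(2)}\|/2^n \le \|\gs{\vec{b}}_i^{(2)}\|$. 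For the rightmost one, I would first note that $\alpha_1 = 1$ gives $\vec{b}_1^{(2)} = \vec{b}_1^{(1)}$, and then use the defining inequality $\alpha_i \ge \|\gs{\vec{b}}_i^{(1)}\|/(\gamma^i\|\vec{b}_1^{(1)}\|)$ to get $\|\gs{\vec{b}}_i^{(2)}\| = \|\gs{\vec{b}}_i^{(1)}\|/\alpha_i \le \gamma^i\|\vec{b}_1^{(2)}\|$. Finally, expanding $\vec{b}_i^{(2)} = \gs{\vec{b}}_i^{(2)} + \sum_{j<i}\mu_{j,i}\gs{\vec{b}}_j^{(2)}$, the triangle inequality together with $|\mu_{j,i}| \le 1/2$ and the geometric-series bound $\tfrac12\sum_{j<i}\gamma^j \le \tfrac12\cdot\tfrac{\gamma^i}{\gamma-1} \le \tfrac12\gamma^i$ (valid since $\gamma \ge 2$) gives $\|\vec{b}_i^{(2)}\| \le \tfrac32\gamma^i\|\vec{b}_1^{(2)}\| \le (2\gamma)^i\|\vec{b}_1^{(2)}\|$, as claimed.

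I do not expect a genuine obstacle here: the statement is essentially a bookkeeping consequence of the two preceding claims. The only points requiring a little care are the two-case analysis in the Lov\'asz verification (making sure the $\alpha_i < \alpha_{i+1}$ case really does give a strict enough gap, which is exactly where $\gamma \ge 2$ is used) and keeping track of the normalizations $\alpha_1 = 1$ and $\vec{b}_1^{(2)} = \vec{b}_1^{(1)}$ so that the final bounds are genuinely expressed in terms of $\|\vec{b}_1^{(2)}\|$ rather than $\|\vec{b}_1^{(1)}\|$.
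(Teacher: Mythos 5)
Your proof is correct and follows essentially the same route as the paper: the LLL conditions are checked by combining \cref{clm:scaled_gram_schmidts} (for size-reduction) with the two-case split on $\alpha_i = \alpha_{i+1}$ versus $\alpha_i < \alpha_{i+1}$ using \cref{clm:Gram-Schmidts_decay_and_rescaling} (for Lov\'asz), and the norm bounds come from $\alpha_1 = 1$, the ceiling in the definition of $\alpha_i$, and the LLL decay \cref{eq:LLL_decay}. The only divergence is cosmetic, in the final bound on $\|\vec{b}_i^{(2)}\|$: you work with $\|\gs{\vec{b}}_j^{(2)}\| \leq \gamma^j\|\vec{b}_1^{(2)}\|$ and a geometric series via the triangle inequality, whereas the paper works in squared norms and uses $\|\gs{\vec{b}}_j^{(2)}\|^2 \leq 2^{i-j}\|\gs{\vec{b}}_i^{(2)}\|^2$; both give the stated $(2\gamma)^i$ factor with room to spare (your intermediate estimate $\|\gs{\vec{b}}_i^{(2)}\| \leq \gamma^i\|\vec{b}_1^{(2)}\|$ is in fact a hair tighter than the $2\gamma^i$ the paper records).
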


\begin{proof}
    It follows immediately from \cref{clm:scaled_gram_schmidts} that $\mu_{i,j}^{(2)} := \langle \gs{\vec{b}}_i^{(2)}, \vec{b}_j^{(2)} \rangle/\|\gs{\vec{b}}_i^{(2)}\|^2$ satisfies $\mu_{i,j}^{(2)} = \mu_{i,j}^{(1)}$ and in particular that $|\mu_{i,j}^{(2)}| \leq 1/2$ for all $i,j$ (since $\basis^{(1)}$ is itself LLL-reduced by assumption).
    
    Now, fix some index $i$. It remains to show that $3\|\gs{\vec{b}}_i^{(2)}\|^2/4 \leq \|\gs{\vec{b}}_{i+1}^{(2)}\|^2 + (\mu_{i,i+1}^{(2)})^2 \|\gs{\vec{b}}_i^{(2)}\|^2$. Notice that since $\basis^{(1)}$ is LLL-reduced,
    \[
        3\|\gs{\vec{b}}_i^{(1)}\|^2/4 \leq \|\gs{\vec{b}}_{i+1}^{(1)}\|^2 + (\mu_{i,i+1}^{(1)})^2 \|\gs{\vec{b}}_i^{(1)}\|^2
        \; .
    \]
    Applying \cref{clm:scaled_gram_schmidts} again, we have
    \[
        3 \|\gs{\vec{b}}_i^{(2)}\|/4 = \frac{3}{4\alpha_i^2}  \|\gs{\vec{b}}_i^{(1)}\| \leq \|\gs{\vec{b}}_{i+1}^{(1)}\|^2/\alpha_i^2 + (\mu_{i,i+1}^{(1)})^2 \|\gs{\vec{b}}_i^{(1)}\|^2/\alpha_i^2 
        =  \frac{\alpha_{i+1}^2}{\alpha_{i}^2}\|\gs{\vec{b}}_{i+1}^{(2)}\|^2 + (\mu_{i,i+1}^{(2)})^2 \|\gs{\vec{b}}_i^{(2)}\|^2
        \; .
    \]
    Now, if $\alpha_i \geq \alpha_{i+1}$, then we immediately see that $\basis^{(2)}$ is LLL-reduced. Otherwise, from \cref{clm:Gram-Schmidts_decay_and_rescaling}, we see that 
    \[
        \|\gs{\vec{b}}_i^{(2)}\| < \frac{2}{\gamma} \|\gs{\vec{b}}_{i+1}^{(2)}\| \leq \|\gs{\vec{b}}_{i+1}^{(2)}\|
    \]
    in which case we trivially have that $3\|\gs{\vec{b}}_i^{(2)}\|^2/4 \leq \|\gs{\vec{b}}_{i+1}^{(2)}\|^2 + (\mu_{i,i+1}')^2 \|\gs{\vec{b}}_i^{(2)}\|^2$. 
    So, $\basis^{(2)}$ is LLL-reduced.

    The fact that 
    \[
    \|\vec{b}_1^{(2)}\|/2^n \leq \|\gs{\vec{b}}_i^{(2)}\|
    \]
    is a direct consequence of \cref{eq:LLL_decay} plus the fact that $\basis^{(2)}$ is LLL-reduced. The fact that 
    \[
        \|\gs{\vec{b}}_i^{(2)}\| \leq 2 \gamma^i \|\vec{b}_1^{(2)}\|
    \]
    follows immediately from \cref{clm:scaled_gram_schmidts} and the definition of $\alpha_i$. And, since $\vec{B}^{(2)}$ is LLL reduced, we have
    \[
        \|\vec{b}_i^{(2)}\|^2 = \|\gs{\vec{b}}_i^{(2)}\|^2 + \sum_{j < i} \mu_{j,i}^2 \|\gs{\vec{b}}_j^{(2)}\|^2 \leq \|\gs{\vec{b}}_i^{(2)}\|^2 + (1/4) \cdot \sum_{j < i} 2^{i-j}\|\gs{\vec{b}}_i^{(2)}\|^2 \leq (4 \gamma^2)^i \|\gs{\vec{b}}_1^{(2)}\|^2
        \; . 
		\qedhere
    \]
\end{proof}

\begin{proposition}
    \label{prop:reducing_to_well_conditioned}
    If $\basis^{(1)} \in \Q^{d \times n}$ is an LLL-reduced basis of $\lat^{(1)}$, $\gamma \geq 2^{2n}$, $\beta > 0$, and $\basis^{(2)} := A_{\basis^{(1)}, \gamma} \basis^{(1)}$. Then, for any $\ell \in [n]$ such that
    \[
    \det(\lat(\vec{b}_1^{(1)},\ldots, \vec{b}_\ell^{(1)})) > n^{2n^2} (\beta/\gamma^{\ell/n}) \cdot  \det(\lat^{(1)})^{\ell/n}
    \]
    and any $\vec{Z} \in \Z^{n \times \ell}$ such that $\basis^{(2)} \vec{Z}$ is a primitive rank-$\ell$ sublattice with $\det(\lat(\basis^{(2)}\vec{Z})) \leq \beta \det(\lat^{(2)})^{\ell/n}$, we must have $\det(\lat(\basis^{(1)}\vec{Z})) \leq \beta \det(\lat^{(1)})^{\ell/n}$.
\end{proposition}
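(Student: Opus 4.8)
The plan is to work entirely with the Gram--Schmidt data of $\basis^{(1)}$ and $\basis^{(2)}$. By \cref{clm:scaled_gram_schmidts} the two bases share the same orthonormal Gram--Schmidt directions $\hat e_i := \gs{b}_i^{(1)}/\|\gs{b}_i^{(1)}\|$, with $\gs{b}_i^{(2)} = \gs{b}_i^{(1)}/\alpha_i$; by \cref{clm:Gram-Schmidts_decay_and_rescaling} we have $1 = \alpha_1 \le \alpha_2 \le \cdots \le \alpha_n$, and whenever $\alpha_i < \alpha_{i+1}$ the rescaled profile jumps: $\|\gs{b}_{i+1}^{(2)}\| > (\gamma/2)\|\gs{b}_i^{(2)}\|$; and by \cref{cor:LLL_after_rescaling}, $\basis^{(2)}$ is itself LLL-reduced. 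Finally, $A := A_{\basis^{(1)},\gamma}$ is diagonal in the basis $\{\hat e_i\}$ and fixes $\spn(\lat^{(1)})^\perp$, so $A^{-1}$ acts on $\spn(\lat^{(1)})$ with eigenvalues $\alpha_1 \le \cdots \le \alpha_n$, and in particular $\det(\lat^{(1)}) = (\prod_i \alpha_i)\,\det(\lat^{(2)})$.

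Write $\lat_j'' := \lat(\basis^{(j)}\vec Z)$ for $j=1,2$, so $\lat_1'' = A^{-1}\lat_2''$, and let $V := \spn(\lat_2'') = \spn(\lat_1'')$, an $\ell$-dimensional subspace. The first step is the determinant-distortion identity $\det(\lat_1'') = \det(\lat_2'')\cdot \det(A^{-1}|_V)$, where $\det(A^{-1}|_V)$ equals a product of $\ell$ numbers each in $[\alpha_1,\alpha_n]$ (the eigenvalues of the compression of $(A^{-1})^2$ to $V$, square-rooted); if moreover $V \subseteq \spn(\hat e_1,\ldots,\hat e_k)$ these eigenvalues all lie in $[\alpha_1,\alpha_k]$, so $\det(A^{-1}|_V) \le \prod_{i=k-\ell+1}^{k}\alpha_i$. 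Hence it suffices to find an index $k \ge \ell$ with (a) $\lat_2'' \subseteq \lat(\vec b_1^{(2)},\ldots,\vec b_k^{(2)})$, and (b) $\prod_{i=k-\ell+1}^{k}\alpha_i \le (\prod_{i=1}^n \alpha_i)^{\ell/n}$: indeed then $\det(\lat_1'') \le \det(\lat_2'')(\prod_i\alpha_i)^{\ell/n} \le \beta\det(\lat^{(2)})^{\ell/n}(\prod_i\alpha_i)^{\ell/n} = \beta\det(\lat^{(1)})^{\ell/n}$, which is the conclusion. By monotonicity of $(\alpha_i)$, condition (b) is automatic whenever $k$ is the right endpoint of an initial union of ``flat blocks'' of the $\alpha$-sequence on which $\alpha$ has not yet grown past its global geometric mean---in particular whenever $\alpha_1 = \cdots = \alpha_k$.

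The second and main step is (a). Here I would invoke \cref{lem:dense_sublattice_gram_schmidts} for the LLL-reduced basis $\basis^{(2)}$: it requires (i) an index $k \ge \ell$ with a multiplicative gap $\|\gs{b}_i^{(2)}\| \le \|\gs{b}_j^{(2)}\|/\alpha'$ for all $i \le k < j$ with $\alpha' > \sqrt n$, and (ii) the bound $\det(\lat_2'') < (\alpha'/\sqrt n)\,\min_{|S|=\ell}\prod_{i\in S}\|\gs{b}_i^{(2)}\|$. Taking $k$ to be the last index of the first flat block, the jump property of \cref{clm:Gram-Schmidts_decay_and_rescaling} together with the $\sqrt2$-decay of the LLL-reduced profile of $\basis^{(2)}$ gives a gap $\alpha' = \gamma/2^{O(n)} \ge 2^{\Omega(n)} \gg \sqrt n$ (using $\gamma \ge 2^{2n}$), so (i) holds as soon as $\ell \le k$. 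It is precisely here that the hypothesis $\det(\lat(\vec b_1^{(1)},\ldots,\vec b_\ell^{(1)})) > n^{2n^2}(\beta/\gamma^{\ell/n})\det(\lat^{(1)})^{\ell/n}$ enters: expanding $\det(\lat(\vec b_1^{(1)},\ldots,\vec b_\ell^{(1)})) = \prod_{i=1}^{\ell}\|\gs{b}_i^{(1)}\| = \prod_{i=1}^{\ell}\alpha_i\|\gs{b}_i^{(2)}\|$, the hypothesis simultaneously forces the first $\ell$ indices into the flat-block regime (so $\ell \le k$, and $\alpha_i=1$, $\|\gs{b}_i^{(2)}\|=\|\gs{b}_i^{(1)}\|$ there) and makes $\beta\det(\lat^{(2)})^{\ell/n}$, hence $\det(\lat_2'')$, small against $\min_{|S|=\ell}\prod_{i\in S}\|\gs{b}_i^{(2)}\|$; the slack factor $n^{2n^2}$ is exactly what absorbs the $2^{O(n^2)}$ losses from LLL profile variation within the block and the $\gamma^{O(n)}$-type losses from relating $\det(\lat^{(2)})$ to products of Gram--Schmidt lengths. (If the hypothesis does not land us in the very first block, one argues with the block actually containing $\{1,\ldots,\ell\}$, checking that (b) still holds there and that the slack still dominates.) With (a) and (b) in hand, Step~1 concludes; primitivity of $\lat_2''$ guarantees $\lat_2'' = \lat(\vec b_1^{(2)},\ldots,\vec b_k^{(2)})$ in the boundary case $\ell=k$ and is what makes the containment in (a) exploitable.

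I expect the main obstacle to be exactly Step~2: pinning down the correct index $k$ and verifying, uniformly over all admissible $\gamma \ge 2^{2n}$, all $\ell \in [n]$, and all $\beta > 0$ (which may itself be as large as $2^{n^2}$), that the clean hypothesis dominates every $2^{\poly(n)}$ loss in the chain of inequalities. This bookkeeping requires tracking at once the relations $\|\gs{b}_i^{(1)}\| \le \gamma^i\|\vec b_1^{(1)}\|$ inside a flat block, $\alpha_i \ge \|\gs{b}_i^{(1)}\|/(\gamma^i\|\vec b_1^{(1)}\|)$ always, the $\sqrt2$-decay of both LLL-reduced profiles, and $\det(\lat^{(1)}) = (\prod_i\alpha_i)\det(\lat^{(2)})$. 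By contrast, the determinant-distortion identity, the interlacing bound on $\det(A^{-1}|_V)$, condition (b), and the appeals to \cref{clm:scaled_gram_schmidts,clm:Gram-Schmidts_decay_and_rescaling,cor:LLL_after_rescaling,lem:dense_sublattice_gram_schmidts} are all routine.
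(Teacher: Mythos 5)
Your Step 1---bounding $\det(\lat(\basis^{(1)}\vec Z))/\det(\lat(\basis^{(2)}\vec Z))$ via the eigenvalues of $P_V A^{-2} P_V$ with $V := \spn(\lat(\basis^{(2)}\vec Z))$---is a sound reformulation, but as written you exploit only the \emph{upper} containment $V \subseteq \spn(\vec b_1^{(2)},\ldots,\vec b_k^{(2)})$, which yields $\det(A^{-1}|_V) \le \prod_{i=k-\ell+1}^{k}\alpha_i$. The paper's argument also uses the \emph{lower} containment: letting $k_-$ be maximal with $\vec b_1^{(2)},\ldots,\vec b_{k_-}^{(2)}\in\lat(\basis^{(2)}\vec Z)$ and $k_+$ minimal with $\lat(\basis^{(2)}\vec Z)\subseteq\lat(\vec b_1^{(2)},\ldots,\vec b_{k_+}^{(2)})$, the compression has eigenvalues $\alpha_1^2,\ldots,\alpha_{k_-}^2$ exactly (the first $k_-$ Gram--Schmidt directions lie in $V$) and the remaining $\ell-k_-$ lie in $[\alpha_{k_-+1}^2,\alpha_{k_+}^2]$, giving the strictly sharper bound $\prod_{i=1}^{k_-}\alpha_i\cdot\alpha_{k_+}^{\ell-k_-}$. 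This is essential: with only the upper containment, your condition (b) need not hold at any index $k$ for which (a) is provable.

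The concrete flaw in your Step 2 is the claim that the hypothesis ``forces the first $\ell$ indices into the flat-block regime.'' It does not. Take $n=3$, $\ell=2$, $\basis^{(1)}$ LLL-reduced with Gram--Schmidt norms $(1,10\gamma^2,10\gamma^3)$, so $\alpha=(1,10,10)$ and the first flat block ends at $k=1<\ell$; take $\lat(\basis^{(2)}\vec Z)=\lat(\vec b_1^{(2)},\vec b_2^{(2)})$. For $\gamma$ sufficiently large and any $\beta\in[\gamma^{-4/3},\,c\gamma^{-2/3})$ for a suitable small constant $c$, both the hypothesis and $\det(\lat(\basis^{(2)}\vec Z))\leq\beta\det(\lat^{(2)})^{\ell/n}$ hold, yet $\ell>k$. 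Here \cref{lem:dense_sublattice_gram_schmidts} does not apply (it needs $\ell\leq k$); your fallback $k=3$ (the end of the block containing $\ell$) fails (b), since $\alpha_2\alpha_3=100>100^{2/3}=(\prod_i\alpha_i)^{\ell/n}$; and the correct index is $k_-=k_+=2$, at which there is no profile gap to invoke at all. The paper avoids choosing a $k$ and instead shows $\alpha_{k_-+1}=\alpha_{k_+}$ directly by contradiction: a jump at an intermediate position $k\geq\ell$ is ruled out by \cref{lem:dense_sublattice_gram_schmidts}, and a jump at $k<\ell$ is ruled out by \cref{cor:other_dense_sublattice_gram_schmidt_thing} (which would force $\vec b_1^{(2)},\ldots,\vec b_k^{(2)}\in\lat(\basis^{(2)}\vec Z)$ and contradict the maximality of $k_-$). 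Your proposal never invokes \cref{cor:other_dense_sublattice_gram_schmidt_thing} and does not treat the $k<\ell$ case at all, and that is precisely where your ``first flat block'' plan breaks.
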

    \begin{proof}  
    Let $k_-$ be maximal such that $\vec{b}_1^{(2)},\ldots, \vec{b}_{k_-}^{(2)} \in \lat(\basis^{(2)} \vec{Z})$ (where we take $k_- = 0$ if $\vec{b}_1^{(2)} \notin \lat(\basis^{(2)} \vec{Z})$), and let $k_+$ be minimal such that $\lat(\basis^{(2)} \vec{Z}) \subseteq \lat(\vec{b}_1^{(2)},\ldots, \vec{b}_{k_+}^{(2)})$. Notice that we also have $\lat(\vec{b}_1^{(1)},\ldots, \vec{b}_{k_-}^{(1)}) \subseteq \lat(\basis^{(1)} \vec{Z}) \subseteq \lat(\vec{b}_1^{(1)},\ldots, \vec{b}_{k_+}^{(1)})$. 
    
    We define $\lat_-^{(i)} := \lat(\vec{b}_1^{(i)},\ldots, \vec{b}_{k_-}^{(i)})$ and $\lat_+^{(i)} := \Pi_{(\lat_-^{(i)})^\perp}(\lat(\basis^{(i)} \vec{Z}))$. We have
    \[
        \det(\lat(\basis^{(i)} \vec{Z})) = \det(\lat_-^{(i)}) \cdot \det(\lat_+^{(i)}))
        \; .
    \]
    And, notice that
    \[
        \det(\lat_-^{(1)}) = \det(\lat_-^{(2)}) \cdot \prod_{i=1}^{k_-} \alpha_i
        \; .
    \]
    Similarly, since 
    $
        \lat_+^{(i)} \subseteq \Pi_{\{\vec{b}_1^{(1)},\ldots, \vec{b}_{k_-}^{(1)}\}^\perp}(\lat(\vec{b}_{k_-+1}^{(i)},\ldots, \vec{b}_{k_+}^{(i)})
        \; 
    $, 
    we see that
    $
        \det(\lat_+^{(1)}) \leq \alpha_{k_+}^{\ell - k_-} \det(\lat_+^{(2)})
    $. 
    On the other hand, we have that
    \[
        \det(\lat^{(1)}) = \det(\lat^{(2)})  \cdot \prod_{i=1}^n \alpha_i
        \; .
    \]
    Therefore,
    \begin{align*}
        \Delta 
            &:= \frac{\det(\lat(\basis^{(1)} \vec{Z}))}{\det(\lat(\basis^{(2)} \vec{Z}))} \cdot \frac{\det(\lat^{(2)})^{\ell/n}}{\det(\lat^{(1)})^{\ell/n}} \\
            &= \frac{\det(\lat_+^{(1)})}{\det(\lat_+^{(2)})} 
 \cdot \prod_{i=1}^{k_-} \alpha_i \cdot \prod_{i=1}^n \alpha_{i}^{-\ell/n}
 \\
 &\leq \alpha_{k_+}^{\ell-k_-} \alpha_{k_-}^{(1-\ell/n) k_-} \alpha_{k_-+1}^{-\ell (n-k_-)/n} \\
 &\leq \Big( \frac{\alpha_{k_+}}{\alpha_{k_-+1}}\Big)^{\ell - k_-}
 \; .
    \end{align*}
    We wish to prove that $\Delta \leq 1$. If $k_+ \leq k_-+1$, then we are clearly done since the $\alpha_i$ are non-decreasing. So, we may assume that $k_+ > k_-+1$.

    Then, let $k \in \{k_-+1,k_-+2,\ldots, k_+-1\}$. If $\alpha_{k} \neq \alpha_{k+1}$, then by \cref{clm:Gram-Schmidts_decay_and_rescaling}, this implies that $\|\gs{\vec{b}}_k^{(2)}\| < (2/\gamma) \|\gs{\vec{b}}_{k+1}^{(2)}\|$ and by \cref{eq:LLL_decay}, we see that for all $i \leq k < j$ we must have that 
    \[\|\gs{\vec{b}}_i^{(2)}\| < (2^{n}/\gamma) \|\gs{\vec{b}}_{j}^{(2)}\|
    \; .
    \]
    Set $\alpha := \gamma/2^n$.
    If $k \geq \ell$, then by \cref{lem:dense_sublattice_gram_schmidts} since $\lat(\basis^{(2)} \vec{Z}) \not \subseteq \lat(\vec{b}_1^{(2)},\ldots, \vec{b}_k^{(2)})$, we must have that
    \[
         \min_{|S|=\ell} \prod_{i \in S} \|\gs{\vec{b}}_i^{(2)}\| \leq (\sqrt{n}/\alpha) \det(\lat(\basis^{(2)} \vec{Z})) \leq (\sqrt{n}/\alpha) \beta \det(\lat^{(2)})^{\ell/n}
        \; .
    \]
    But, by \cref{eq:LLL_decay}, we see that
    \[
        \prod_{i=1}^\ell \|\gs{\vec{b}}_i^{(2)}\| \leq 2^{\ell n} \min_{|S|=\ell} \prod_{i \in S} \|\gs{\vec{b}}_i^{(2)}\| \leq (\sqrt{n}/\alpha) \beta \det(\lat^{(2)})^{\ell/n}
        \; .
    \]
    And, it follows that
    \[
        \prod_{i=1}^\ell \|\gs{\vec{b}}_i^{(1)}\| = \prod_{i=1}^\ell (\alpha_i \|\gs{\vec{b}}_i^{(2)}\|) \leq (\sqrt{n}/\alpha) \beta \cdot \det(\lat^{(2)})^{\ell/n} \cdot \prod_{i=1}^\ell \alpha_i \leq (\sqrt{n}/\alpha) \beta \cdot \det(\lat^{(2)})^{\ell/n} \cdot \prod_{i=1}^n \alpha_i^{\ell/n} = (\sqrt{n}/\alpha) \beta \cdot \det(\lat^{(1)})^{\ell/n} 
        \; ,
    \]
    a contradiction.
    If $k \leq \ell$, then by \cref{cor:other_dense_sublattice_gram_schmidt_thing} we similarly must have
    \[
    \prod_{i=1}^k \|\gs{\vec{b}}_i^{(2)}\|^{\ell/k} \leq  (\alpha/n^n)^{-\ell/k} \cdot \det(\lat(\basis^{(2)} \vec{Z})) \leq \beta \cdot  (\alpha/n^n)^{-\ell/k} \cdot \det(\lat^{(2)})^{\ell/n}
        \; .
    \]
    And, we similarly notice that by \cref{eq:LLL_decay}
    \[
        \prod_{i=1}^\ell \|\gs{\vec{b}}_i^{(2)} \| \leq 2^{\ell n} \prod_{i=1}^k \|\gs{\vec{b}}_i^{(2)} \|^{\ell/k} \leq 2^{\ell n} \beta (\alpha/n^n)^{-\ell/k} \det(\lat^{(2)})^{\ell/n}
        \;,
    \]
    and we similarly derive a contradiction. So, we conclude that $\alpha_{k_+} = \alpha_{k_-+1}$ and therefore $\Delta \leq 1$, as needed.
\end{proof}

\subsection{An approximate representation}
\label{app:bounded_bits}

In this section, we prove that, up to an arbitrarily small loss in the approximation factor, we may always assume that the input lattice $\lat$ to our DSP algorithm has a succinct representation.

Our main remaining technical tool is the following.

\begin{claim}
    \label{clm:det_deriv}
    Let $\vec{V} = (\vec{v}_1,\ldots, \vec{v}_\ell), \vec{W} = (\vec{w}_1,\ldots, \vec{w}_\ell) \in \R^{d \times \ell}$ be two matrices with 
    \[
        \|\vec{V} - \vec{W}\|_\infty \leq \eps \min_i \|\vec{w}_i\|^2/\|\vec{W}\|_\infty
        \; 
    \]
    for some $\eps < 1/(10\ell^2)$. Then, 
    \[
        |\det(\vec{V}^T \vec{V})  - \det(\vec{W}^T \vec{W})| \leq  20\eps \ell^{2\ell}  \prod_{i=1}^\ell \|\vec{w}_i\|^2
        \; .
    \]
\end{claim}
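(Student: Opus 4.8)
The plan is to reduce the comparison of the two Gram determinants to a telescoping sum over columns, and then to estimate each term using the ``base times height'' factorization of a Gram determinant; this factorization is what keeps the whole estimate free of any dependence on the ambient dimension $d$, which is the one thing that could go wrong.

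First I would record the only consequence of the hypothesis that is actually used: writing $\vec{e}_i := \vec{v}_i - \vec{w}_i$, the bound $\|\vec{V}-\vec{W}\|_\infty \le \eps \min_i \|\vec{w}_i\|^2/\|\vec{W}\|_\infty$ unwinds (via the elementary relation between the matrix norm $\|\cdot\|_\infty$ and the Euclidean norms of the columns, together with $\min_i\|\vec{w}_i\| \le \|\vec{w}_j\|$) to the per-column estimate $\|\vec{e}_i\| \le \eps \|\vec{w}_i\|$ for every $i \in [\ell]$, and hence $\|\vec{v}_i\| \le (1+\eps)\|\vec{w}_i\|$. Since $\eps < 1/(10\ell^2) \le 1/10$, we also get $(1+\eps)^{2\ell} \le e^{2\ell \eps} \le e^{1/(5\ell)} < 2$. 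Next, recall the standard identity: for a matrix $\vec{A}=(\vec{a}_1,\dots,\vec{a}_\ell)$ and an index $k$, if $P_k$ denotes orthogonal projection onto the complement of $\spn(\vec{a}_i : i \ne k)$, then $\det(\vec{A}^T\vec{A}) = \|P_k \vec{a}_k\|^2 \cdot D$, where $D$ is the Gram determinant of the remaining columns $\{\vec{a}_i : i \ne k\}$. I would apply this with the hybrid matrices $\vec{W}^{(k)} := (\vec{v}_1,\dots,\vec{v}_k,\vec{w}_{k+1},\dots,\vec{w}_\ell)$, so $\vec{W}^{(0)}=\vec{W}$ and $\vec{W}^{(\ell)}=\vec{V}$, and consecutive hybrids agree except in column $k$. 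Writing $D_k$ for their common Gram determinant and $P_k$ for the corresponding projection, the telescoping sum $\det(\vec{V}^T\vec{V})-\det(\vec{W}^T\vec{W})=\sum_{k=1}^\ell \big(\det((\vec{W}^{(k)})^T\vec{W}^{(k)})-\det((\vec{W}^{(k-1)})^T\vec{W}^{(k-1)})\big)$ has $k$-th term equal to $D_k\big(\|P_k\vec{v}_k\|^2-\|P_k\vec{w}_k\|^2\big)$.

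It remains to bound each term. For the ``height'' factor, $\|P_k\vec{v}_k\|^2-\|P_k\vec{w}_k\|^2 = 2\langle P_k\vec{w}_k,P_k\vec{e}_k\rangle+\|P_k\vec{e}_k\|^2$, which (since $P_k$ is a contraction) is at most $2\|\vec{w}_k\|\|\vec{e}_k\|+\|\vec{e}_k\|^2 \le (2\eps+\eps^2)\|\vec{w}_k\|^2 \le 3\eps\|\vec{w}_k\|^2$. For the ``base'' factor, Hadamard's inequality gives $D_k \le \prod_{i<k}\|\vec{v}_i\|^2\prod_{i>k}\|\vec{w}_i\|^2 \le (1+\eps)^{2\ell}\prod_{i\ne k}\|\vec{w}_i\|^2 < 2\prod_{i\ne k}\|\vec{w}_i\|^2$. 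Multiplying, each of the $\ell$ terms is at most $6\eps\prod_{i=1}^\ell\|\vec{w}_i\|^2$, and summing over $k$ yields $|\det(\vec{V}^T\vec{V})-\det(\vec{W}^T\vec{W})| \le 6\ell\eps\prod_{i=1}^\ell\|\vec{w}_i\|^2 \le 20\eps\ell^{2\ell}\prod_{i=1}^\ell\|\vec{w}_i\|^2$, which is the claim (with wildly non-tight constants, consistent with the paper's stated disregard for constants in this section).

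The only thing I would call an ``obstacle'' is making sure the argument is genuinely $d$-free. A naive route through Cauchy–Binet ($\det(\vec{A}^T\vec{A})=\sum_{|S|=\ell}\det(\vec{A}_S)^2$) or through the Leibniz expansion of the determinant would introduce a $\binom{d}{\ell}$- or $\ell!$-type multiplicity, and bounding each minor by the product of the relevant column norms would then fail to collapse back to $\prod_i\|\vec{w}_i\|^2$; using the base-times-height factorization avoids this. A secondary, purely bookkeeping point is setting up the hybrid so that consecutive hybrids differ in exactly one column and the shared Gram determinant $D_k$ is precisely the factor multiplying the height difference, so that the difference of consecutive Gram determinants factors as $D_k(\|P_k\vec{v}_k\|^2-\|P_k\vec{w}_k\|^2)$ with no cross terms.
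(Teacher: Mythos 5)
Your proof is correct and takes a genuinely different route from the paper's. The paper works directly with the $\ell \times \ell$ Gram matrices: writing $\vec{V}^T\vec{V} = \vec{G} + \vec{H}$ with $\vec{G} := \vec{W}^T\vec{W}$ and $\vec{H} := \vec{E}^T\vec{W} + \vec{W}^T\vec{E} + \vec{E}^T\vec{E}$, it expands $\det(\vec{G}+\vec{H}) - \det(\vec{G})$ via the permutation sum, splits each product over subsets $T \subsetneq [\ell]$ separating $G$-entries from $H$-entries, and bounds columns by $\|\vec{g}_i\| \geq \|\vec{w}_i\|^2$ and $\|\vec{h}_i\| \lesssim \eps\ell\min_j\|\vec{w}_j\|^2$. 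This already sidesteps the Cauchy--Binet $\binom{d}{\ell}$ multiplicity you were worried about (the Leibniz sum is over the $\ell\times\ell$ Gram matrix, not over $\ell$-minors of a $d\times\ell$ matrix), but it does pick up $2^\ell \cdot \ell!$ factors, which is why the $\ell^{2\ell}$ appears. Your hybrid/telescoping argument with the base-times-height factorization $\det(\vec{A}^T\vec{A}) = D_k\,\|P_k\vec{a}_k\|^2$ is arguably cleaner and quantitatively sharper: the Gram determinant $D_k$ of the shared columns factors out exactly across consecutive hybrids, Hadamard's inequality controls $D_k$, and the height change contributes only linearly in $\ell$, giving the intermediate bound $6\ell\eps\prod_i\|\vec{w}_i\|^2$ before relaxing to the stated $20\eps\ell^{2\ell}\prod_i\|\vec{w}_i\|^2$. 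One small caveat, which both your proof and the paper's share: the reduction of the hypothesis to the per-column estimate $\|\vec{e}_i\| \leq \eps\|\vec{w}_i\|$ is clean if $\|\cdot\|_\infty$ is read as a maximum column Euclidean norm, but if $\|\cdot\|_\infty$ is the max-entry norm (which is how the paper appears to use it elsewhere, e.g., when bounding bitlengths and when rounding $\basis^{(2)}$ to $\basis'$), then this step picks up a $\mathrm{poly}(d)$ factor; this is harmless in the paper's application where $\eps$ is astronomically small, but it is worth flagging since you explicitly claim $d$-freedom as the virtue of your route, and your proof is in fact no more $d$-free than the paper's at this particular step.
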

\begin{proof}
    Let $\vec{E} := \vec{V} - \vec{W}$. We have
    \[
        \det(\vec{V}^T \vec{V}) = \det(\vec{W}^T \vec{W} + \vec{E}^T \vec{W} + \vec{W}^T \vec{E} + \vec{E}^T \vec{E})
        \; .
    \]
    Let $\vec{G} := (\vec{g}_1,\ldots, \vec{g}_\ell) := \vec{W}^T \vec{W}$ and $\vec{H} := (\vec{h}_1,\ldots, \vec{h}_\ell) := \vec{E}^T \vec{W} + \vec{W}^T \vec{E} + \vec{E}^T \vec{E}$. Then,
    \begin{align*}
        |\det(\vec{V}^T \vec{V})  - \det(\vec{W}^T \vec{W})|
            &= \left|\sum_{\sigma \in S_\ell} (-1)^{\mathrm{sign}(\sigma)}\prod_{i=1}^\ell (G_{i,\sigma(i)} + H_{i, \sigma(i)}) - \det(\vec{G})\right| \\
            &= \left| \sum_{\sigma \in S_\ell} (-1)^{\mathrm{sign}(\sigma)} \sum_{T \subseteq [\ell]} \prod_{i \in T} G_{i,\sigma(i)} \cdot \prod_{i \notin T} H_{i, \sigma(i)}- \det(\vec{G})\right| \\
            &= \left| \sum_{\sigma \in S_\ell} (-1)^{\mathrm{sign}(\sigma)} \sum_{T \subsetneq [\ell]} \prod_{i \in T} G_{i,\sigma(i)} \cdot \prod_{i \notin T} H_{i, \sigma(i)}\right| \\
            &\leq \ell! \cdot \sum_{T \subsetneq [\ell]} \prod_{i \in T} \|\vec{g}_i\| \cdot \prod_{i \notin T} \|\vec{h}_i\| \\
            &\leq  2^\ell \ell!\cdot  \prod_{i=1}^\ell \|\vec{g}_i\| \cdot \max_i \frac{\|\vec{h}_i\|}{\|\vec{g}_i\|}
            \; ,
    \end{align*}
    where we have used the fact that $\|\vec{h}_i\|/\|\vec{g}_i\| \leq 1$ for all $i$ (so that the summand in the second-to-last expression is largest when $|T| = \ell-1$).
    Finally, we notice that $\|\vec{g}_i\| \geq \|\vec{w}_i\|^2$ and 
    \[
        \|\vec{h}_i\| \leq \|\vec{E} \vec{w}_i\| + \|\vec{W} \vec{e}_i\| + \|\vec{E} \vec{e}_i\| \leq 3\ell \|\vec{E}\|_\infty \|\vec{W}\|_\infty \leq 3\eps \ell \min_j \|\vec{w}_j\|^2
        \; .
    \]
    It follows that 
    \[
        |\det(\vec{V}^T \vec{V}) - \det(\vec{W}^T \vec{W})| \leq  3\eps \ell 2^\ell \ell!  \prod_{i=1}^\ell \|\vec{g}_i\| \leq  20\eps \ell^{2\ell}  \prod_{i=1}^\ell \|\vec{w}_i\|^2
        \; ,
    \]
    as needed.
\end{proof}

\begin{theorem}
    \label{thm:rounded_basis}
    There exists a ``basis rounding algorithm'' with the following property. It takes as input a basis $\basis \in \Q^{d \times n}$ for a lattice $\lat$, an $\ell \in [n]$, and a size parameter $m' \geq 10n^5$. It outputs a basis $\basis'$ for a ``rounded'' lattice $\lat'$ such that (1) $\basis'$ has bitlength at most $m'$ (when represented as a list of numerators and denominators); and (2) for any rank-$\ell$ sublattice $\lat'' \subseteq \lat'$ such that $\gamma := \det(\lat'')/\det(\lat')^{\ell/n} \leq 2^{n^2}$, we have
    \[
        \det( \widehat{\lat}) \leq \gamma' \cdot \det(\lat)^{\ell/n}
        \; ,
    \]
    where  $\widehat{\lat} := \basis (\basis')^{-1}\lat''$ and
    \[
        \gamma' := \max\{1, (1+2^{3n^5}/2^{m'})\gamma \}
        \; .
    \]
    Furthermore, the algorithm runs in time $O( (mm')^5)$, where $m$ is the bit length of the input lattice.
\end{theorem}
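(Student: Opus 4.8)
The plan is to put $\lat$ into a ``well-conditioned'' form using the machinery of \cref{app:well-conditioned}, truncate the resulting basis to a precision dictated by $m'$, and then run that pipeline in reverse: \cref{clm:det_deriv} will control the damage done by truncation, and \cref{prop:reducing_to_well_conditioned} will transfer a dense sublattice of the rounded, well-conditioned lattice back to a dense sublattice of $\lat$.

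Concretely, I would take the algorithm to be: (i) run LLL (\cref{thm:LLL_with_bit_lengths}) to obtain an LLL-reduced basis $\basis^{(1)}$ of $\lat$; (ii) form the well-conditioned basis $\basis^{(2)} := A_{\basis^{(1)}, \gamma_0}\basis^{(1)}$ for a sufficiently large power of two $\gamma_0 = 2^{\poly(n)}$ (large enough for the hypothesis of \cref{prop:reducing_to_well_conditioned} below, but still with $\log \gamma_0 \ll m'/n^2$, which is possible since $m' \geq 10n^5$), so that by \cref{cor:LLL_after_rescaling} the basis $\basis^{(2)}$ is again LLL-reduced and has every Gram--Schmidt norm within a factor $2^{\poly(n)}$ of $\|\vec{b}_1^{(2)}\|$, which I rescale globally to equal $1$; and (iii) output $\basis'$, the entrywise rounding of $\basis^{(2)}$ to a common denominator $2^p$, with $p$ taken as large as possible subject to $\basis'$ having bitlength at most $m'$. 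Property~(1) then holds by construction, and the running-time bound follows from \cref{thm:LLL_with_bit_lengths} together with the fact that every matrix appearing in the construction has bitlength $\poly(m, m')$.

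For property~(2), fix a rank-$\ell$ sublattice $\lat'' \subseteq \lat'$ with $\gamma := \det(\lat'')/\det(\lat')^{\ell/n} \leq 2^{n^2}$. Since $\widehat{\lat} = \basis (\basis')^{-1}\lat''$ depends only on the lattice $\lat''$, I may choose the integer coefficient matrix $\vec{Z}$ so that $\lat'' = \lat(\basis'\vec{Z})$ with $\basis'\vec{Z}$ itself LLL-reduced. Combining the density bound on $\lat''$ with \cref{thm:LLL_orthogonality_defect}, \cref{lem:LLL_distortion}, and the well-conditioning of $\basis'$ (which is inherited from $\basis^{(2)}$ up to the negligible rounding error), I get $2^{\poly(n)}$ upper bounds on $\|\vec{Z}\|_\infty$ and on the column norms of $\basis'\vec{Z}$, and a matching $2^{-\poly(n)}$ lower bound on the shortest column of $\basis'\vec{Z}$. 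Hence the perturbation $(\basis^{(2)} - \basis')\vec{Z}$ has $\ell_\infty$ norm at most $2^{-p + \poly(n)}$, which is tiny relative to the column norms of $\basis'\vec{Z}$; applying \cref{clm:det_deriv} to the Gram matrices of $\basis^{(2)}\vec{Z}$ versus $\basis'\vec{Z}$, and separately to those of $\basis^{(2)}$ versus $\basis'$, yields $\det(\lat(\basis^{(2)}\vec{Z})) \leq (1 + 2^{-p + \poly(n)})\,\gamma \cdot \det(\lat^{(2)})^{\ell/n}$, and the choice of $p$ together with $m' \geq 10n^5$ makes $2^{-p + \poly(n)} \leq 2^{3n^5}/2^{m'}$. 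Finally, \cref{prop:reducing_to_well_conditioned}, applied with $\beta := (1 + 2^{3n^5 - m'})\gamma \leq 2^{n^2 + 1}$, lifts this estimate from $\lat^{(2)}$ back to $\lat^{(1)} = \lat$, giving $\det(\widehat{\lat}) = \det(\lat(\basis^{(1)}\vec{Z})) \leq \beta \cdot \det(\lat)^{\ell/n} \leq \gamma' \det(\lat)^{\ell/n}$.

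The hard part will be the hypothesis of \cref{prop:reducing_to_well_conditioned}, which demands $\det(\lat(\vec{b}_1^{(1)}, \ldots, \vec{b}_\ell^{(1)})) > n^{2n^2}(\beta/\gamma_0^{\ell/n})\det(\lat)^{\ell/n}$, i.e., that the first $\ell$ vectors of the LLL-reduced basis do not \emph{already} generate an extremely dense rank-$\ell$ sublattice. Taking $\gamma_0$ large (as above) forces the right-hand side below $\det(\lat)^{\ell/n}$, so the hypothesis can only fail when $\lat(\vec{b}_1^{(1)}, \ldots, \vec{b}_\ell^{(1)})$ has $\gamma$-value less than $1$; in that corner case I would argue directly---recalling that $\widehat{\lat}$ need only satisfy $\det(\widehat{\lat}) \leq \gamma'\det(\lat)^{\ell/n}$ with $\gamma' = \max\{1, (1 + 2^{3n^5 - m'})\gamma\}$---by re-running the estimate in the proof of \cref{prop:reducing_to_well_conditioned} (tracking the jumps $\alpha_{k_-+1}, \ldots, \alpha_{k_+}$ appearing there) to show that every rank-$\ell$ sublattice of $\lat'$ with bounded $\gamma$-value still lifts to a sublattice of $\lat$ with $\gamma$-value at most $\gamma'$. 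Beyond this case analysis, the remaining work is bookkeeping: choosing the polynomial exponents (and the value of $\gamma_0$ and the precision $p$) so that the bitlength bound, the loss bound $2^{3n^5 - m'}$, and the hypothesis of \cref{prop:reducing_to_well_conditioned} all hold simultaneously---but no ideas beyond those already developed in \cref{app:LLL_prelims,app:well-conditioned} should be needed.
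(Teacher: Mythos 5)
Your main-case construction and analysis closely track the paper's: LLL-reduce to get $\basis^{(1)}$, rescale via $A_{\basis^{(1)},\gamma_0}$ to a well-conditioned basis $\basis^{(2)}$, round entrywise to obtain $\basis'$, bound the determinant perturbation with \cref{clm:det_deriv} applied both to $\basis'$ versus $\basis^{(2)}$ and to $\basis'\vec{Z}$ versus $\basis^{(2)}\vec{Z}$, and then transfer density back to $\lat$ via \cref{prop:reducing_to_well_conditioned}. Up to normalization conventions (you rescale $\|\vec{b}_1^{(2)}\|$ to $1$ and round to a common denominator $2^p$; the paper rescales by $R := \|\vec{b}_1^{(1)}\|/M$ and rounds to integers), this is the paper's argument.

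The corner case, which you correctly identify as ``the hard part,'' is where your proposal has a gap. You propose to handle the case where $\lat(\vec{b}_1^{(1)},\ldots,\vec{b}_\ell^{(1)})$ is already dense by ``re-running the estimate in the proof of \cref{prop:reducing_to_well_conditioned}, tracking the jumps $\alpha_{k_-+1},\ldots,\alpha_{k_+}$.'' But the contradiction argument in that proof---the step that forces $\alpha_{k_+}=\alpha_{k_-+1}$ so that $\Delta\le 1$---is driven precisely by the density hypothesis you are trying to drop. When $\prod_{i\le\ell}\|\gs{\vec{b}}_i^{(1)}\|$ is already small relative to $\det(\lat)^{\ell/n}$, the rescaling map can create jumps of $\alpha_i$ inside the window $(k_-,k_+]$ without any contradiction, and the bound $\Delta\le(\alpha_{k_+}/\alpha_{k_-+1})^{\ell-k_-}$ can be much larger than $1$. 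So ``re-running the estimate'' does not close the gap; you would need a genuinely different argument for this regime.

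The paper does not re-run \cref{prop:reducing_to_well_conditioned} in this case at all. Instead, when $\det(\lat(\vec{b}_1^{(1)},\ldots,\vec{b}_\ell^{(1)}))\le\det(\lat)^{\ell/n}$, it outputs a fixed diagonal basis $\basis' := (\vec{e}_1,\ldots,\vec{e}_\ell,M\vec{e}_{\ell+1},\ldots,M\vec{e}_n)$ that carries no geometric information about $\lat$ beyond the split at index $\ell$. The point is that $\lat'$ is then designed so that (by \cref{lem:dense_sublattice_gram_schmidts}) any sufficiently dense rank-$\ell$ sublattice must lie inside the first $\ell$ coordinates, and its lift is therefore a sublattice of $\lat(\vec{b}_1^{(1)},\ldots,\vec{b}_\ell^{(1)})$, whose $\gamma$-value is already at most $1\le\gamma'$ by the corner-case assumption. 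You should add a case of this form to complete your proof.
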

\begin{proof}
    We may assume without loss of generality that $n \geq 25$.

    On input $\basis \in \{-p/q,\ldots, p/q\}^{d \times n}$ of a lattice $\lat$, $\ell \in [n]$, and $m'$, the algorithm $\mathcal{A}$ behaves as follows, where $M := 2^{m'}/2^{2n^5} \geq 2^{n^5}$.
        \begin{enumerate}
            \item It computes an LLL-reduced basis $\basis^{(1)}$ of $\lat$. 
            \item If $\det(\vec{b}_1^{(1)},\ldots, \vec{b}_\ell^{(1)}) \leq  \det(\lat)^{\ell/n}$, then it simply outputs the diagonal matrix 
            \[\basis' := (\vec{e}_1,\vec{e}_2,\ldots, \vec{e}_\ell, M \vec{e}_{\ell+1},\ldots, M \vec{e}_n)
            \; .
            \]
            \item Otherwise, it sets $\basis^{(2)} := A_{\basis^{(1)}, n^{2n^3}} \basis^{(1)}/R$, where $R := \|\vec{b}_1^{(1)}\|/M$ and then outputs $\basis' := \floor{\basis^{(2)}}$ , where $\floor{\cdot}$ represents rounding each coordinate towards zero.
        \end{enumerate}

    First, notice that the algorithm trivially has the desired properties in the corner case when $\det(\vec{b}_1^{(1)},\ldots, \vec{b}_\ell^{(1)}) \leq \det(\lat)^{\ell/n}$. (E.g., by \cref{lem:dense_sublattice_gram_schmidts}, it follows that the only sublattice of $\lat'$ with determinant bounded by $2^{n^2} \det(\lat')^{\ell/n}$ in this case is the sublattice generated by $\vec{b}_1',\ldots, \vec{b}_\ell'$.) So, we ignore this case below.

    Suppose that $\vec{Z} \in \Z^{n \times \ell}$ is such that $\basis' \vec{Z}$ is an LLL-reduced basis of $\lat''$. (Notice that this implies that $\basis \vec{Z}$ is a basis of $\widehat{\lat}$, though not necessarily an LLL-reduced basis. We are only using the existence of such a $\vec{Z}$ for the analysis of the algorithm---it is not computed by the algorithm itself.) Before analyzing the algorithm directly, we prove some basic bounds on the norms of the various matrices in question, which will be useful throughout our analysis. In particular, by 
    \cref{cor:LLL_after_rescaling}, we see that $\basis^{(2)} := (\vec{b}_1^{(2)},\ldots, \vec{b}_n^{(2)})$ satisfies that 
    \begin{equation}
        \label{eq:B_2_well_conditioned}
        M \leq \|\vec{b}_i^{(2)}\| \leq n^{3n^4}  M
        \; .
    \end{equation} 
    And, since $\basis' \vec{Z}$ is an LLL-reduced basis of $\lat''$, we see from \cref{thm:LLL_orthogonality_defect} that
    \[
        \prod_i \|\basis' \vec{z}_i\| \leq 2^{\ell^2/4} \det(\lat'') \leq \gamma 2^{\ell^2/4} \cdot \det(\lat')^{\ell/n} \leq n^{4n^4}  M^\ell
        \; .
    \]
    Furthermore, for any $\vec{z} \in \Z^n$, letting $\vec{E} := \vec{B}^{(2)} - \vec{B}'$, we see that 
    \begin{equation}
        \label{eq:lambda_1_in_B_dagger_lower_bound}
        \|\basis' \vec{z}\| \geq \|\basis^{(2)} \vec{z}\| - \|\vec{E} \vec{z}\| \geq \|\vec{b}_1^{(2)}\| \|\vec{z}\|/3^{n/2} - \sqrt{n} \|\vec{z}\| \geq M \|\vec{z}\|/3^{n}
        \; ,
    \end{equation}
    where the second inequality is \cref{lem:LLL_distortion} (together with the fact that $\|\vec{E}\|_\infty \leq 1$) and the last inequality uses \cref{eq:B_2_well_conditioned} and the lower bound on $M$.
    Combining the above two inequalities (together with the simple fact that $\|\vec{z}_j\| \geq 1$ for all $j$) gives that for each $i$,
    \begin{equation}
        \label{eq:bound_on_z_i}
        \|\vec{z}_i\| \leq 3^{n}  \|\basis' \vec{z}_i\|/M \leq n^{5n^4} M^{-(\ell-1)} \cdot \prod_{j \neq i} \|\basis' \vec{z}_j\|^{-1} \leq n^{6n^4}
        \; .
    \end{equation}
    Finally, by \cref{cor:LLL_bounded_bit_length}, we have that $\|\vec{B}^{(1)}\|_\infty \leq 2^{4n^2 + 3nm}$. Therefore, 
    \begin{equation}
        \label{eq:final_basis_bound_on_bits}
         \|\vec{B}^{(1)} \vec{Z}\|_\infty \leq n\|\vec{B}^{(1)}\|_\infty \|\vec{Z}\|_\infty \leq n^{7n^4} 2^{3nm}
         \; .
    \end{equation}

    Now, we observe that the running time and outputs size of the algorithm are as claimed. In particular, the bases $\basis^{(1)}$, $\basis^{(2)}$, and $\basis'$ can be computed in time $O(m' m)^5$ by \cref{thm:LLL_with_bit_lengths}. Furthermore, by \cref{eq:B_2_well_conditioned}, we see that $\vec{B}' \in \{-2^{n^5}M ,\ldots, 2^{n^5}M\}^{d \times n}$. It follows that $\vec{B}'$ has bitlength bounded by $m'$, as needed.

    Finally, we prove the bound on $\det(\widehat{\lat})$. By \cref{prop:reducing_to_well_conditioned}, it suffices to show that 
    \[
        \det(\lat(\basis^{(2)} \vec{Z})) \leq \gamma' \det(\lat(\basis^{(2)}))^{\ell/n}
        \; .
    \]
    
    To that end, we first recall that we trivially have $\|\basis' - \basis^{(2)}\|_\infty \leq 1$. Together with \cref{eq:B_2_well_conditioned}, it follows that
    \[
        \|\basis' - \basis^{(2)}\|_\infty \leq \eps \min_i \|\vec{b}_i^{(2)}\|^2/\|\vec{B}^{(2)}\|_\infty
        \; ,
    \]
    where $\eps := n^{3n^4}/M$. It then follows from \cref{clm:det_deriv} that
    \begin{align}
        \det(\lat')^2 
            &\leq \det(\lat^{(2)})^2 + 20 \eps n^{2n} \prod_{i=1}^n \|\vec{b}_i^{(2)}\|^2 \nonumber \\
            &\leq (1+20 \eps 2^{n^2}) \cdot \det(\lat^{(2)})^2 \nonumber \\
            &\leq (1+n^{4n^4}/M) \det(\lat^{(2)})^2 \label{eq:det_B_dagger_det_B_2}
        \; ,
    \end{align}
    where the second inequality uses \cref{thm:LLL_orthogonality_defect} (using the fact that $\basis^{(2)}$ is LLL-reduced).

    Next, let $\vec{W} := \basis^{(2)} \vec{Z}$ and $\vec{V} := \basis' \vec{Z}$. Then,
    \[
        \|\vec{V} - \vec{W}\|_\infty \leq n\|\vec{B}^{(2)}\|_\infty \|\vec{Z}\|_\infty \leq n^{7n^4}
        \; ,
    \]
    where we have applied \cref{eq:bound_on_z_i}. Furthermore, we have $\|\vec{V}\|_\infty \leq n^{10n^4}M$ (using the fact that $\|\basis'\|_\infty \leq n^{3n^4}M$ and $\|\vec{Z}\|_\infty \leq n^{6n^4}$) and from \cref{eq:lambda_1_in_B_dagger_lower_bound} we have that $\min_i \|\vec{v}_i\| \geq M/3^{n}$. Putting these things together, we see that
    \[
        \|\vec{V} - \vec{W}\|_\infty \leq \eps' \min_i \|\vec{v}_i\|^2/\|\vec{V}\|_\infty
        \; ,
    \]
    where $\eps' := n^{11n^4}/M$. It follows from \cref{clm:det_deriv} that
    \[
        \det(\lat(\basis^{(2)} \vec{Z}))^2 = \det(\lat'')^2 + 20 \eps' \ell^{2\ell} \prod_{i=1}^\ell \|\vec{v}_i\|^2 \leq \det(\lat'')^2 + \frac{n^{12n^4}}{M} \cdot \prod_{i=1}^\ell \|\vec{v}_i\|^2
        \; .
    \]
    Finally, we note that since $\vec{V} = \basis' \vec{Z}$ is an LLL-reduced basis of $\lat''$, we have by yet another application of \cref{thm:LLL_orthogonality_defect} that
    \[
        \prod_{i=1}^\ell \|\vec{v}_i\|^2 \leq 2^{n^2} \det(\lat'')^2
        \; .
    \]
    Therefore
    \[
        \det(\lat(\basis^{(2)} \vec{Z}))^2 \leq (1+ n^{15n^4}/M) \det(\lat'')^2
        \; .
    \]

    Finally, combining the definition of $\gamma$, which tells us that $\det(\lat'') \leq \gamma \det(\lat')^{\ell/n}$ and \cref{eq:det_B_dagger_det_B_2}, we see that
    \begin{align*}
        \det(\lat(\basis^{(2)} \vec{Z}))^2
            &\leq (1+ n^{15n^4}/M) \cdot \gamma^2 \det(\lat')^{2\ell/n} \\
            &\leq (1+ n^{15n^4}/M)(1+n^{4n^4}/M)  \cdot \gamma^2 \det(\lat^{(2)})^{2\ell/n} \\
            &\leq (\gamma')^2 \det(\lat^{(2)})^{2\ell/n}
        \; ,
    \end{align*}
    as needed.
\end{proof}

\subsection{On related works and techniques in basis reduction}
\label{sec:related_work_app}

The literature on optimization techniques for bit-level representation of lattices within basis reduction algorithms is extensive and remains an active area of research, particularly following the advent of floating-point arithmetic instantiations of LLL and BKZ algorithms. The primary technical point concerns the stability of the reduction when ``rounding'' the basis, i.e., when considering only a certain fraction of the bit representation of the coefficients. To some extent, the results presented in this appendix relate to these works, and we aim to draw connections between our results for DSP and related works for SVP.

Our proof technique for \cref{cor:other_dense_sublattice_gram_schmidt_thing}, which demonstrates how to retrieve specific vectors of a sublattice when the sequence of Gram-Schmidt coefficients exhibits gaps, is related to Lemma 3.3 of \cite{DucasWoerdenNTRU}, which bounds the determinant of an intersection of a sublattice with given basis vectors.

Our main technical tool, defined in \cref{app:well-conditioned}, provides a method to ``compress'' a lattice basis by reducing the size of the gaps in its profile by scaling each vector accordingly. This is quite similar to \cite[Definition 5]{RHFastPracticalLattice2023}. In our case, we examine the gap relative to the length of the first vector, whereas \cite{RHFastPracticalLattice2023}  deals with the relative growth of gaps. Although conceptually similar, this difference prevents us from applying their analysis for LLL \emph{verbatim}.

In \cref{cor:LLL_after_rescaling}, we study the stability of LLL-reduced bases under rounding, which is analogous to the work in \cite{SMSVLLLReducingMost2014}. The main difference here is that the rounding we consider is more nuanced than simply dropping the least significant bits, as it is adapted to the size of gaps in the profile. Consequently, Lemma 13 of \cite{SMSVLLLReducingMost2014} cannot be directly applied to prove our result.

\subsection{A note on computing the dual and the intersection}
\label{app:computing_stuff}

Finally, we address how to efficiently compute a basis $\basis^*$ for the dual $\lat^*$ of a lattice $\lat$, given a basis $\basis \in \Q^{d \times n}$ for $\lat$ and (more interestingly) how to efficiently compute a basis $\basis''$ for $\lat \cap (\lat')^\perp$ given a basis $\basis  \in \Q^{d \times n}$ for $\lat$ and a basis $\basis' \in \Q^{d \times \ell^*}$ for a sublattice $\lat' \subseteq \lat^*$ of the dual.

Computing the dual basis $\basis^*$ can be done via the explicit formula
\[
    \basis^* = \basis (\basis^T \basis)^{-1}
    \; .
\]
This can of course be computed in time that is polynomial in the bitlength $m$ of $\basis$, e.g., $O(m^3)$ time suffices.

To compute a basis for the intersection  $\lat \cap (\lat')^\perp$ efficiently, we rely on \cref{lemma:noah}, which after taking duals tells us that
\[
(\lat \cap (\lat')^\perp)^* =
 \Pi_{(\lat')^\perp}(\lat^*)
 \; .
\]
So, it suffices to show how to efficiently compute a basis $\widehat{\basis}$ for $\widehat{\lat} := \Pi_{(\lat')^\perp}(\lat^*)$ (as we can then compute $\basis'' := \widehat{\basis}^*$). 

For this, we first compute $\Pi_{(\lat')^\perp}(\basis^*) \in \Q^{d \times n}$. This gives us $n$ vectors that clearly generate $\widehat{\lat}$, but they are not typically a basis (because a basis for this lattice with rank $n-\ell^*$ must consist of $n-\ell^*$ linearly independent vectors). We then can use any efficient algorithm for converting such a generating set to an actual basis basis. See, for instance \cite[Table 1]{LNComputingLatticeBasis2019} for a list of many different algorithms for this problem. This can be done, e.g., in time $O(m^5)$.
\end{document}